\DeclareMathOperator*{\argmin}{arg\,min}
\def\BState{\State\hskip-\ALG@thistlm}
\def\thm@space@setup{\thm@preskip=2pt
        \thm@postskip=2pt \itshape}
\newtheoremstyle{newstyle}
{} 
{} 
{\mdseries} 
{} 
{\bfseries} 
{.} 
{ } 
{} 
\theoremstyle{newstyle}
\newtheorem{theorem}{Theorem}
\newtheorem{lemma}{Lemma}
\newtheorem{proposition}{Proposition}
\theoremstyle{definition}
\newtheorem*{example*}{Example}
\newtheorem{example}{Example}
\newtheorem{definition}{Definition}
\theoremstyle{remark}
\newtheorem{remark}{Remark}
\newtheorem{claim}{Claim}
\newcommand{\Expc}{\mathbb{E}}
\newcommand{\Prob}{\mathbb{P}}
\newcommand{\tA}{\tilde{A}}
\DeclarePairedDelimiter\floor{\lfloor}{\rfloor}
\newcommand{\verteq}{\rotatebox{90}{$\,=$}}
\begin{document}
 \sloppy

               \setlength{\belowcaptionskip}{-6pt}
        \setlength{\abovedisplayskip}{1mm}
        \setlength{\belowdisplayskip}{1mm}
        \setlength{\abovecaptionskip}{1mm}

        \title{Coded Computing for Distributed Graph Analytics} \author{Saurav Prakash$^*$, Amirhossein Reisizadeh$^*$, Ramtin Pedarsani, Amir Salman Avestimehr
        \thanks{$^*$Authors have equal contribution.}
        \thanks{S. Prakash and A. S. Avestimehr are with the the Department
        of Electrical and Computer Engineering, University of Southern California, Los Angeles, CA 90089 USA
        (e-mail: sauravpr@usc.edu; avestimehr@ee.usc.edu).}
        \thanks{A. Reisizadeh and R. Pedarsani are with the Department
        of Electrical and Computer Engineering, University of California, Santa Barbara, Santa Barbara, CA 93106 USA (e-mail: reisizadeh@ucsb.edu; ramtin@ece.ucsb.edu).}
        \thanks{We sincerely thank the editor and all the reviewers for their  valuable feedback and  detailed  comments. This work is supported by a startup grant for Ramtin Pedarsani and  NSF grants CCF-1408639, CCF-1755808, NETS-1419632, Office of Naval Research (ONR) award N000141612189, NSA grant, a research gift from Intel and by Defense Advanced Research Projects Agency (DARPA) under Contract No. HR001117C0053. The views, opinions, and/or findings expressed are those of the author(s) and should not be interpreted as representing the official views or policies of the Department of Defense or the U.S. Government.}
        \thanks{A part
        of this work was presented at IEEE International Symposium on Information
        Theory, 2018 \cite{saurav2018coding}.}
        
        \thanks{Copyright (c) 2017 IEEE. Personal use of this material is permitted.  However, permission to use this material for any other purposes must be obtained from the IEEE by sending a request to pubs-permissions@ieee.org.}}
        
        \maketitle

\begin{abstract} 
Many distributed computing systems have been developed recently for implementing graph based algorithms such as PageRank over large-scale graph-structured datasets such as social networks. Performance of these systems significantly suffers from \emph{communication bottleneck} as a large number of messages are exchanged among servers at each step of the computation. Motivated by graph based MapReduce, we propose a coded computing framework that leverages computation redundancy to alleviate the communication bottleneck in distributed graph processing. As a key contribution of this work, we develop a novel \emph{coding} scheme that systematically injects structured redundancy in the computation phase to enable \textit{coded} multicasting opportunities during message exchange between servers, reducing the communication load substantially in large-scale graph processing. For theoretical analysis, we consider random graph models, and focus on schemes in which subgraph allocation and Reduce allocation are only dependent on vertex ID while the Shuffle design varies with graph connectivity. Specifically, we prove that our proposed scheme enables an (asymptotically) inverse-linear trade-off between \textit{computation load} and \textit{average communication load} for two popular random graph models -- Erdös-Rényi model, and power law model. Particularly, for a given computation load $r$, (i.e. when each graph vertex is carefully stored at $r$ servers), the proposed scheme slashes the average communication load by (nearly) a multiplicative factor of $r$. Furthermore, for the Erdös-Rényi model, we prove that our proposed scheme is optimal asymptotically as the graph size increases by providing an information-theoretic converse. To illustrate the benefits of our scheme in practice, we implement PageRank over Amazon EC2, using artificial as well as real-world datasets, demonstrating gains of up to $50.8\%$ in comparison to the conventional PageRank implementation. Additionally, we specialize our coded scheme and extend our theoretical results to two other random graph models -- random bi-partite model, and stochastic block model. Our specialized schemes asymptotically enable inverse-linear trade-offs between computation and communication loads in distributed graph processing for these popular random graph models as well. We complement the achievability results with converse bounds for both of these models.
\end{abstract}

\begin{IEEEkeywords}
Coded computation, distributed computing, graph algorithms, MapReduce, PageRank.
\end{IEEEkeywords}

\IEEEpeerreviewmaketitle
\section{Introduction}
\IEEEPARstart{G}{raphs} are widely used to identify and incorporate the relationship patterns and anomalies inherent in real-life datasets. Their growing scale and importance have prompted the development of various large-scale distributed graph processing frameworks, such as Pregel~\cite{malewicz2010pregel}, PowerGraph~\cite{gonzalez2012powergraph} and GraphLab~\cite{low2012distributed}. The underlying theme in these systems is the \emph{think like a vertex} approach \cite{mccune2015thinking} where the computation at each vertex requires only the data available in the neighborhood of the vertex (see Fig. \ref{fig:intro1mapillustrate} for an illustrative example). This approach significantly improves performance in comparison to general-purpose distributed data processing systems (e.g., Dryad \cite{isard2007dryad}), which do not leverage the underlying structure of graphs. 

\begin{figure}[h!] 
    \centering
    \includegraphics[width=.47\textwidth]{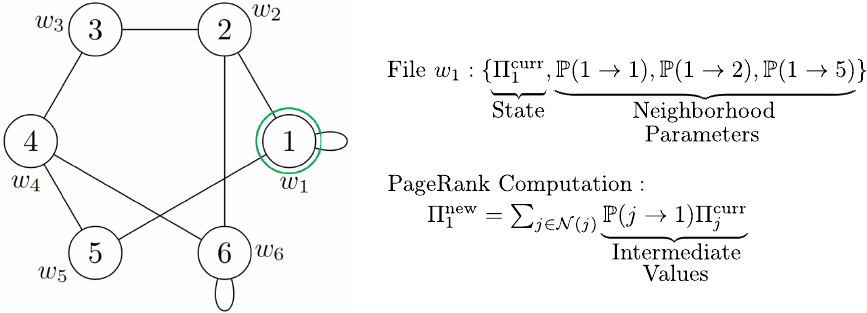}
    \caption{Illustrating the \emph{think like a vertex} paradigm prevalent in common parallel graph computing frameworks. The computation associated with a vertex only depends on its neighbors. In this example, we consider the PageRank computation over a graph with six vertices. Using vertex $1$ for representation, we illustrate the file and PageRank update at each vertex. File $w_1$ contains the state (current PageRank $\Pi_1^{\text{curr}}$) and the neighborhood parameters (probabilities of transitioning to neighbors $\{\mathbb{P}(1\rightarrow 1),\mathbb{P}(1\rightarrow 2),\mathbb{P}(1\rightarrow 5)\}$). The PageRank update associated with vertex $1$ is a function of only the neighborhood files (specifically, of the PageRanks of neighboring vertices and the transition probabilities from neighbors to vertex $1$).} 
    \label{fig:intro1mapillustrate}
\end{figure}

In these distributed graph processing systems, different subgraphs are stored at different servers, where a subgraph refers to the set of files associated with a subset of graph vertices. As a result of the distributed subgraph allocation, for carrying out the graph computation for a given vertex at a particular server, the intermediate values corresponding to the neighboring vertices whose files are not available at the server have to be communicated from other servers. These distributed graph processing systems, therefore, require many messages to be exchanged among servers during job execution. This results in communication bottleneck in parallel computations over graphs \cite{lumsdaine2007challenges}, accounting for more than $50 \%$ of the overall execution time in representative cases \cite{chen2014computation}. 

To alleviate the communication bottleneck in distributed graph processing, we develop a new framework that leverages \textit{computation redundancy} by computing the intermediate values at multiple servers via \textit{redundant subgraph allocation}. The redundancy in computation of intermediate values at multiple servers allows coded multicasting opportunities during exchange of messages between servers, thus reducing the communication load. Our proposed framework comprises of a mathematical model for MapReduce decomposition~\cite{dean2008mapreduce} of the graph computation task. The Map computation for a vertex corresponds to computing the intermediate values for the vertices in its neighborhood, while the Reduce computation for a vertex corresponds to combining the intermediate values from the neighboring vertices to obtain the final result of graph computation. Referring to the example in Fig. \ref{fig:intro1mapillustrate}, the Map and Reduce computations associated with vertex $1$ are as follows:
\begin{align}
    &\text{Map: }\Pi_1^{\text{curr}}\rightarrow \{v_{1,1}, v_{2,1}, v_{5,1}\},\nonumber\\
    &\text{Reduce: }\Pi_1^{\text{new}} = v_{1,1}+v_{1,2}+v_{1,5}, \nonumber
\end{align}
where $v_{j,i}=\mathbb{P}(i\rightarrow j)\Pi_i^{\text{curr}}$ is the intermediate value obtained from the Map computation of vertex $i\in\mathcal{N}(j)$. 

In distributed graph based MapReduce, each server is allocated a subgraph for Map computations and Reduce tasks for a subset of graph vertices, and the overall execution takes place in three phases -- Map, Shuffle, and Reduce. During Map phase, each server computes the intermediate values associated with the files in the allocated subgraph. During Shuffle phase, servers communicate with each other to exchange  missing intermediate values that are needed for executing the allocated Reduce tasks.  Finally, each server carries out the Reduce computations allocated to it to obtain the final results, using the intermediate values obtained locally during the Map phase and the missing intermediate values obtained from other servers during the Shuffle phase. Using our mathematical model of graph based MapReduce, our framework proposes to trade redundant computations in the Map phase with communication load during the Shuffle phase. The key idea is to leverage the graph structure and create coded messages during the Shuffle phase that simultaneously satisfy the data demand of multiple computing servers in the Reduce phase. 

Our work is rooted in the recent development of a coding framework that establishes an inverse-linear trade-off between computation and communication for general MapReduce computations -- Coded Distributed Computing (CDC) \cite{li2017fundamental}. In the MapReduce formulation considered in~\cite{li2017fundamental}, there are $n$ input files and the goal is to compute $Q$ output functions, where \emph{each} of the $Q$ output functions depends on \emph{all} of the $n$ input files. In CDC, each Map computation is carefully repeated at $r$ servers. The injected redundancy provides coded multicast opportunities in the Shuffle phase where servers exchange coded messages that are simultaneously useful for multiple servers. Each server then decodes the received messages and executes the Reduce computations assigned to it. Compared to uncoded Shuffle, where the required intermediate values are transmitted without leveraging coded multicast, CDC slashes the communication load by $r$. However, in contrast to graph based MapReduce considered in our framework, CDC does not incorporate the heterogeneity in the file requirements by the Reducers, as each Reducer in CDC is assumed to need intermediate values corresponding to all input files. 

\begin{figure}[h!] 
    \centering
    \includegraphics[width=.44\textwidth]{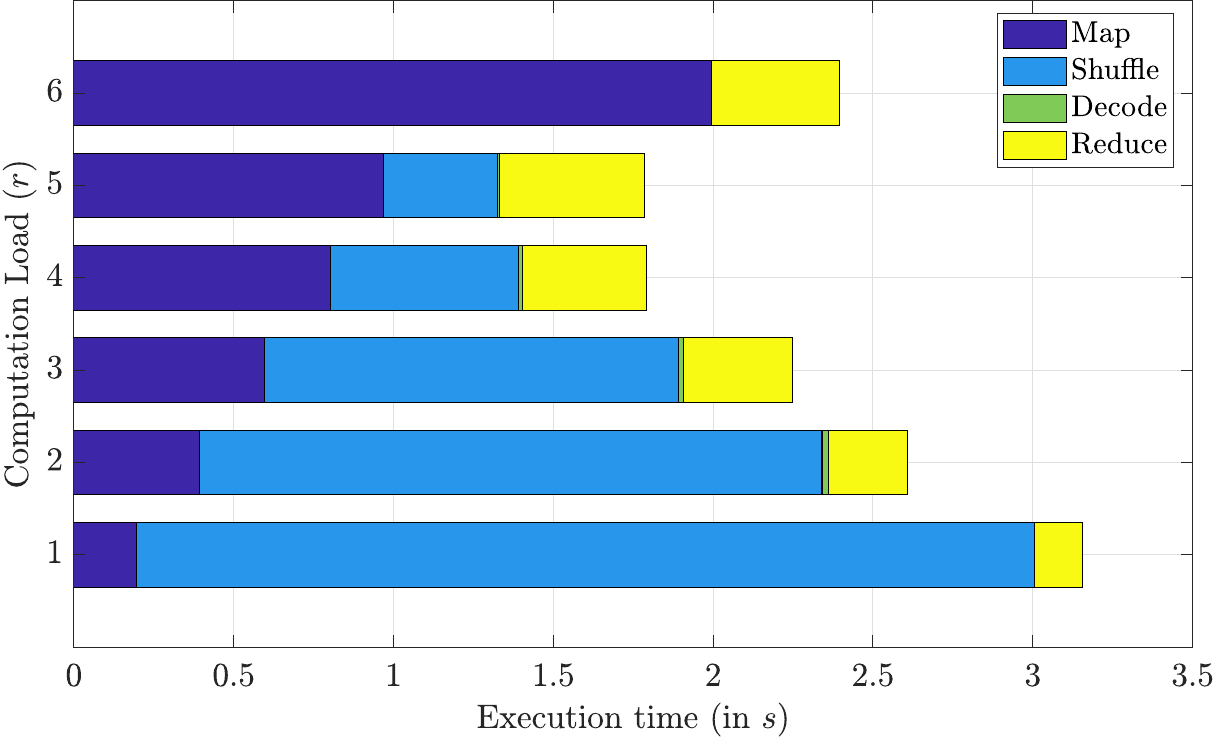}
    \caption{\footnotesize{Demonstrating the impact of our proposed coded scheme in practice. We consider PageRank implementation over a real-world dataset in an Amazon EC2 cluster consisting of $6$ servers. In this figure, we have illustrated the \textit{overall execution time} as well as the times spent in different phases of execution, as a function of computation load $r$ (details of implementation are provided in Section \ref{sec:expe}). One can observe that the Shuffle phase is the major component of the overall execution time in conventional PageRank implementation (computation load $r=1$), and our proposed coded scheme slashes the overall execution time by shortening the Shuffle phase (i.e., reducing the communication load) at the expense of increasing the Map phase (i.e. increasing the Map computations).}} 
    \label{fig:intro}
\end{figure}

Moving from the MapReduce framework in~\cite{li2017fundamental} to graph based MapReduce, the key challenge is that the computation associated with each vertex highly depends on the graph structure. In particular, graph computation at each vertex requires data \textit{only} from the neighboring vertices, while in the MapReduce framework in~\cite{li2017fundamental}, each output computation needs all the input files (which in graph based MapReduce shall correspond to a complete graph). This asymmetry in the data requirements of the graph computations is the main challenge in developing efficient subgraph and Reduce computation allocations and Shuffling schemes for graph based MapReduce. As a key component of our proposed coding framework, we propose a coded scheme that creates coding opportunities for communicating messages across servers by Mapping the same graph vertex at different servers, so that each coded transmission satisfies the data demand of multiple servers. Within each multicast group, each server communicates a coded message which is generated using careful alignment of the intermediate values that the server needs to communicate to all the remaining members of the multicast group. Each server retrieves the missing intermediate values required for its Reduce computations using the locally available intermediate values from the Map phase and the coded messages received during the Shuffle phase. 

For characterization of the performance of our proposed coding framework for distributed graph analytics, we focus on random undirected graph models. In popular graph processing frameworks such as Pregel~\cite{malewicz2010pregel}, the graph partitioning for distributed processing among a set of servers is solely based on vertex ID, such as using $\mathsf{hash}(\text{ID})\text{ mod }{K}$, where $K$ is the number of servers. Therefore, in our problem formulation, for a given computation load $r$ and a random graph $\mathcal{G}=(\mathcal{V},\mathcal{E})$, we focus on subgraph and Reduce computation allocations $\mathcal{A}(r)$ that are based only on vertex IDs and not on graph connectivity. Here, $\mathcal{V}$ and $\mathcal{E}$ respectively denote the vertex set and edge set of $\mathcal{G}$. Although the Map and Reduce allocations are functions solely of vertex IDs, the Shuffle design needs to incorporate the graph connectivity of the graph realizations so that the communication load is minimized. This motivates us to consider the characterization of the minimum average normalized communication load $L^*(r)$, which is defined as follows:
\begin{equation}\label{eg:compcomm1}
L^*(r)\coloneqq \inf_{A\in \mathcal{A}(r)}\Expc_{\mathcal{G}}[L_A(r,\mathcal{G})], \nonumber
\end{equation}
where $L_A(r,G)$ denotes the minimum normalized communication load for a realization $G$ of $\mathcal{G}$ for a given subgraph and Reduce allocation tuple $A\in \mathcal{A}(r)$. The normalization is with respect to the total size of all the intermediate values corresponding to a fully connected graph with same number of vertices. Further details are deferred to Section \ref{sec:setting}, where we describe our problem formulation in detail.

For two popular random graph models, Erdös-Rényi model and power law model, we prove that our proposed coded scheme asymptotically achieves an inverse-linear trade-off between computation load in the Map phase and average normalized communication load in the Shuffle phase. Furthermore, for the Erdös-Rényi model, we develop an information-theoretic converse for the average communication load given a computation load of $r$. Using the asymptotic achievability result, we prove that the converse for the Erdös-Rényi model is asymptotically tight, thus proving the asymptotic optimality of our proposed coded scheme. Specifically, for a given computation load $r$, we show that the minimum average normalized communication load is as follows: 
\begin{equation}
\label{eq:intro_main}
    L^*(r)\approx\frac{1}{r}p\left(1-\frac{r}{K}\right), \nonumber
\end{equation}
where $p$ is the edge probability in the Erdös-Rényi model of size $n$, and $K$ denotes the number of servers.

To illustrate the benefits of our proposed coded scheme in practice, we demonstrate via simulation results that even for the Erdös-Rényi model with finite $n$, our proposed coded scheme achieves an average communication load which is within a small gap from the information-theoretic lower bound. Furthermore, it provides a gain of (almost) $r$ in comparison to the baseline scheme with uncoded Shuffling. Additionally, we implement the PageRank algorithm over Amazon EC2 servers using artificial as well as real-world graphs, demonstrating how our proposed coded scheme can be applied in practice. Fig. \ref{fig:intro} illustrates the results of our experiments over the conventional PageRank approach ($r=1$) for a social network webgraph Marker Cafe Dataset~\cite{markercafe}. As demonstrated in Fig. \ref{fig:intro}, our proposed coded scheme achieves a speedup of up to $43.4\%$ over the conventional PageRank implementation and a speedup of $25.5\%$ over the single server implementation. The details of the implementation are provided in Section \ref{sec:expe}.

We also specialize our coded scheme and extend the achievability results to two additional random graph models, random bi-partite model and stochastic block model. Specifically, we leverage the community structure in these models to adapt our proposed scheme to these models. In the random bi-partite model, we observe that there are no intra-cluster edges, due to which intermediate values for a particular Reducer in one cluster only comes from Mappers in the other cluster. Therefore, we specialize our proposed coded scheme from Section \ref{sec:achvblty} for the random bi-partite model, partitioning the available servers in proportion to the cluster sizes, so that there is maximum overlap between Reducers corresponding to vertices in one cluster and Mappers corresponding to vertices in other cluster. Similarly, for the stochastic block model, we specialize our proposed coded scheme based on the observation that Reducers corresponding to vertices in one cluster depend on the Mappers corresponding to the vertices within the cluster with one probability (due to intra-cluster edges), and on the vertices in the other cluster with another probability (due to cross-cluster edges). 

For both the random bi-partite model and the stochastic block model, we provide converse bounds. For the random bi-partite model, we remove vertices (and the edges corresponding to them) from the larger cluster so that the reduced graph has two clusters of equal sizes. The reduced graph model thus has two sets of Mappers and Reducers, which correspond to two different Erdös-Rényi models. Applying our converse  bound for the Erdös-Rényi model, we arrive at the converse of the random bi-partite model. For the stochastic block model converse, the key idea is to randomly remove edges from the graph such that a larger Erdös-Rényi graph is obtained, then utilize a coupling argument, and finally use our information theoretic converse bound for the Erdös-Rényi model. Therefore, the modified coded schemes for these models demonstrate that inverse-linear trade-offs between computation and communication loads in distributed graph processing exists for these graph models as well.

\vspace{0.1in}
\noindent
\textbf{Related Work.} A number of coding theoretic strategies have been recently proposed to mitigate the bottlenecks in large scale distributed computing \cite{lee2016speeding,li2017fundamental}. Several generalizations to the Coded Distributed Computing (CDC) technique proposed in \cite{li2017fundamental} have been developed. The authors in \cite{li2017scalable} extend CDC to wireless scenarios. The work in \cite{li2016coded} extends CDC to multistage dataflows. An alternative trade-off between communication and distributed computation has been explored in \cite{ezzeldin2017communication} for MapReduce framework under predetermined storage constraints. Coding using resolvable designs has been proposed in \cite{konstantinidis2018leveraging}. \cite{kiamari2017heterogeneous} extends CDC to heterogeneous computing environments. The work in \cite{li2018compressed} proposes coding scheme for reducing communication load for computations associated with linear aggregation of intermediate results in the final Reduce stage. The key difference between our framework and each of these works is that general MapReduce computations over graphs have heterogeneity in the data requirements for the Reduce functions associated with the vertices. Other notable works that deal with communication bottleneck in distributed computation include \cite{attia2016information,attia2018near,chungubershuffle}, where the authors propose techniques to reduce communication load in data shuffling in distributed learning.

Apart from communication bottleneck, various coding theoretic works have been proposed to tackle the straggler bottleneck \cite{lee2017coded,lee2016speeding,dutta2016short,reisizadeh2017latency,tandon2016gradient,yu2017polynomial,lee2017high,li2017near,yang2018coded,wang2018fundamental, mallick2018rateless, wang2018coded,charles2017approximate,ye2018communication,li2016unified,yang2016fault,karakus2017straggler,avesti2018resource,gauri2018stale,majum2018ldpc,ferdinand2017anytime,keshtkarjahromi2018coded,shakya2018distributed}. Stragglers are slow processors that have significantly larger delay for completing their computational task, thus slowing down the overall job execution in distributed computation. The first paper in this line of research proposed erasure correcting codes for straggler mitigation in linear computation \cite{lee2016speeding}. The work in \cite{lee2017coded} explores the potential of the multicore nature of computing servers, while \cite{reisizadeh2017latency} 
extends the straggler mitigation for the matrix vector problem in wireless scenarios. Redundant short dot products for matrix multiplication with long vector has been proposed in \cite{dutta2016short}. The authors in \cite{reisizadeh2017latency} propose Heterogeneous Coded Matrix Multiplication (HCMM) scheme for matrix-vector multiplication in heterogeneous scenarios. In \cite{tandon2016gradient}, the authors propose gradient coding schemes for straggler mitigation in distributed batch gradient descent. Works in \cite{yu2017polynomial} and \cite{lee2017high} develop coding schemes for computing high-dimensional matrix-matrix
multiplication. A Coupon Collector based straggler mitigation scheme for batched gradient descent has been proposed in \cite{li2017near}. Other notable schemes include Substitute decoding for coded iterative computing \cite{yang2018coded}, coding for sparse matrix multiplications \cite{wang2018fundamental, mallick2018rateless, wang2018coded}, approximate gradient coding \cite{charles2017approximate},  
efficient gradient computation tackling both straggler and communication load \cite{ye2018communication}, a unified coding scheme for distributed matrix multiplication \cite{li2016unified}, logistic regression with unreliable components \cite{yang2016fault}, among others.

\noindent
\textbf{Notation.} We denote by $[n]$ the set $\{1,2,\ldots,n\}$ for $n \in \mathbb{N}$. For non-negative functions $f$ and $g$ of $n$, we denote $f = \Theta(g)$ if there are positive constants $c_1$, $c_2$ and $n_0 \in \mathbb{N}$ such that $c_1 \leq f(n)/g(n) \leq c_2$ for every $n \geq n_0$, and $f=o(g)$ if $f(n)/g(n)$ converges to $0$ as $n$ goes to infinity. We define $f = \omega(g)$, if for any positive constant $c$, there exists
a constant $n_0 \in \mathbb{N}$ such that $f(n) > c \cdot g(n)$ for every $n \geq n_0$. To ease the notation, we let $2 \times Bern(p)$ denote a random variable that takes on the value $2$ w.p. $p$ and $0$ otherwise.

\section{Problem Setting}
\label{sec:setting}

We now describe the setting and formulate our distributed graph analytics problem. In particular, we specify our computation model, distributed implementation model and our problem formulation based on random graphs. 

\subsection{Computation Model}\label{sec:compmodel}
We consider an undirected graph $\mathcal{G=(V,E)}$ where $\mathcal{V}=[n]$ and $\mathcal{E}=\{(i,j):i,j\in \mathcal{V}\}$ denote the set of graph vertices and the set of edges respectively. A binary file $w_i\in \mathbb{F}_{2^F}$ of size $F\in \mathbb{N}$ containing vertex state and neighborhood parameters is associated with each graph vertex $i\in \mathcal{V}$. We  denote by $\mathcal{W}=\{w_i:i\in \mathcal{V}\}$ the set of files associated with all vertices in the graph. The neighborhood of vertex $i$ is denoted by $\mathcal{N}(i)=\{j\in \mathcal{V}:(j,i)\in \mathcal{E}\}$ and the set of files in the neighborhood of $i$ is represented by $\mathcal{W}_{\mathcal{N}(i)}=\{w_j:j\in \mathcal{N}(i)\}$. In general, $\mathcal{G}$ can have self-loops, i.e., vertex $i$ can be contained in $\mathcal{N}(i)$. Furthermore, a graph computation is associated with each vertex $i\in \mathcal{V}$ as follows:
\begin{equation}
\label{eq:graph_comput}
    \phi_i:\mathbb{F}_{2^F}^{|\mathcal{N}(i)|}\rightarrow \mathbb{F}_{2^B}, \nonumber
\end{equation}
where $\phi_i(\cdot)$ is a function that maps the input files in $\mathcal{W}_{\mathcal{N}(i)}$ to a length $B$ binary stream $o_i=\phi_i(\mathcal{W}_{\mathcal{N}(i)})$. 

The computation $\phi_i(\cdot)$ can be represented as a MapReduce computation:
\begin{equation}
\label{eq:MapRed}
   \phi_i(\mathcal{W}_{\mathcal{N}(i)})=h_i(\{g_{i,j}(w_j):w_j\in \mathcal{W}_{\mathcal{N}(i)}\}), 
\end{equation}
where the Map function $g_{i,j}:\mathbb{F}_{2^F}\rightarrow\mathbb{F}_{2^T}$ Maps file $w_j$ to a length $T$ binary intermediate value  $v_{i,j}=g_{i,j}(w_j)$, $\forall i\in \mathcal{N}(j)$. The Reduce function $h_i:\mathbb{F}_{2^T}^{|\mathcal{N}(i)|}\rightarrow \mathbb{F}_{2^B}$ Reduces the intermediate values associated with the output function $\phi_i(\cdot)$ into the final output value $o_i=h_i(\{v_{i,j}:j\in \mathcal{N}(i)\})$.

We illustrate our computation model using the graph presented in the previous section. Fig. \ref{fig:ex}(\subref{fig:toyexmpl1}) illustrates the graph with $n=6$ vertices, where each vertex is associated with a file, while Fig. \ref{fig:ex}(\subref{fig:dec}) illustrates the  corresponding MapReduce computations.

Common graph based algorithms can be expressed in the MapReduce computation framework described above \cite{lin2010design}. For brevity, we present two popular graph algorithms and describe how they can be expressed in the proposed MapReduce computation framework.
\begin{example}
\label{ex:PageRank}
\textbf{PageRank} \cite{page1999pagerank,xing2004weighted} is a popular algorithm to measure the importance of the vertices in a webgraph based on the underlying hyperlink structure. In particular, the algorithm computes the likelihood that a random surfer would visit a page. Mathematically, the rank of a vertex $i$ satisfies the following relation:
\begin{equation}
    \Pi(i)=(1-d)\sum_{j\in \mathcal{N}(i)}\Pi(j)\Prob(j\to i)+d\frac{1}{|\mathcal{V}|}, \nonumber
\end{equation}
where $(1-d)$ is referred to as the damping factor, $\Pi(i)$ denotes the likelihood that the random surfer will arrive at vertex $i$, $|\mathcal{V}|$ is the total number of vertices in the webgraph, and $\Prob(j\to i)$ is the transition probability from vertex $j$ to vertex $i$. The graph computation can be carried out iteratively as follows: 
\begin{equation}
    \Pi^{k}(i)=(1-d)\sum_{j\in \mathcal{N}(i)}\Pi^{k-1}(j)\Prob(j\to i)+d\frac{1}{|\mathcal{V}|}, \nonumber
\end{equation}
where $k$ and $k-1$ are respectively the current and previous iterations and $\Pi^0(i)=\frac{1}{|\mathcal{V}|}$ for all $i\in \mathcal{V}$ and $k=1,2,\cdots$. The number of iterations depends on the stopping criterion for the algorithm. Usually, the algorithm is stopped when the change in the PageRank mass of each vertex is less than a pre-defined tolerance. The rank update at each vertex can be decomposed into Map and Reduce functions for each iteration $k$. For a given vertex $i$ and iteration $k$, let $w_i^k=\{\Pi^{k-1}(i)\}\cup\{\mathbb{P}(i\rightarrow j):j\in\mathcal{N}(i)\}$, and $\phi^{k}_{i}(\mathcal{W}^{k}_{\mathcal{N}(i)})=(1-d)\sum_{j\in \mathcal{N}(i)}\Pi^{k-1}(j)\Prob(j\to i)+d\frac{1}{|\mathcal{V}|}$. The Mapper $g_{i,j}(\cdot)$ maps file $w_j^k$ to the intermediate values $v^{k}_{i,j}=g_{i,j}(w^k_j)=\Pi^{k-1}(j)\Prob(j\to i)$ for all neighboring vertices $i\in \mathcal{N}(j)$. Using the intermediate values from the Map computations, the Reducer $h_i(\cdot)$ computes vertex $i$'s updated rank as $\Pi^{k}(i)=h_i\big(\{v^{k}_{i,j}:j \in \mathcal{N}(i)\}\big)=(1-d)\sum_{j\in \mathcal{N}(i)}  v^{k}_{i,j}+d\frac{1}{|\mathcal{V}|}$. 
\end{example}
\begin{example}
\textbf{Single-source shortest path} is one of the most studied problems in graph theory. The task here is to find the shortest path to each vertex $i$ in the graph from a source vertex $s$. A sub-problem for this task is to compute the distance of each vertex $i$ from the source vertex $s$, where distance $D(i)$ is the length of the shortest path from $s$ to $i$. This can be carried out iteratively in parallel. First, initialize $D^0(s)=0$ and $D^0(i)=+\infty, \forall i\in \mathcal{V}\setminus \{s\}$. Subsequently, each vertex $i$ is updated as follows at each iteration $k$:
\begin{equation}
    D^k(i)=\min_{j\in \mathcal{N}(i)} \left\{ D^{k-1}(j)+t(j,i) \right\}, \nonumber
\end{equation}
where $t(j,i)$ is the weight of the edge $(j,i)$. The algorithm is stopped when the change in the distance value for each vertex is within a pre-defined tolerance.
The distance computation for each vertex at iteration $k$ can be decomposed into Map and Reduce computations. Particularly, for each vertex $i$ and iteration $k$, let $w_i^k=\{D^{k-1}(i)\}\cup\{t(i,j):j\in\mathcal{N}(i)\}$, and $\phi_i^{k}(\mathcal{W}^k_{\mathcal{N}(i)})=\min_{j\in \mathcal{N}(i)} (D^{k-1}(j)+t(j,i))$. The Mapper $g_{i,j}(\cdot)$ Maps the file $w_j^k$ to the intermediate values $v^{k}_{i,j}=g_{i,j}(w^k_j)=D^{k-1}(j)+t(j,i)$ for all neighboring vertices $i\in \mathcal{N}(j)$. Using the intermediate values from the Map computations, the Reducer $h_i(\cdot)$ computes  $i$'s updated distance value as $D^{k}(i)=h_i \big(\{v^{k}_{i,j}:j \in \mathcal{N}(i)\}\big)=\min_{j \in \mathcal{N}(i)}v^{k}_{i,j}$. 
\end{example}

\subsection{Distributed Implementation}\label{sec:distimp}
For distributing the graph processing task, we consider $K$ servers that are  connected through a shared multicast network. Furthermore, at any given time, only one server can multicast over the shared network. Additionally, we assume that a multicast takes the same amount of time as a unicast. As described next, in order to distribute the Map computation tasks among the servers, each server is allocated a subgraph which is comprised of a subset of graph vertices and associated files that contain state and neighborhood information of vertices. 

\begin{figure}
\begin{subfigure}{\linewidth}
\centering
\includegraphics[width=.45\linewidth]{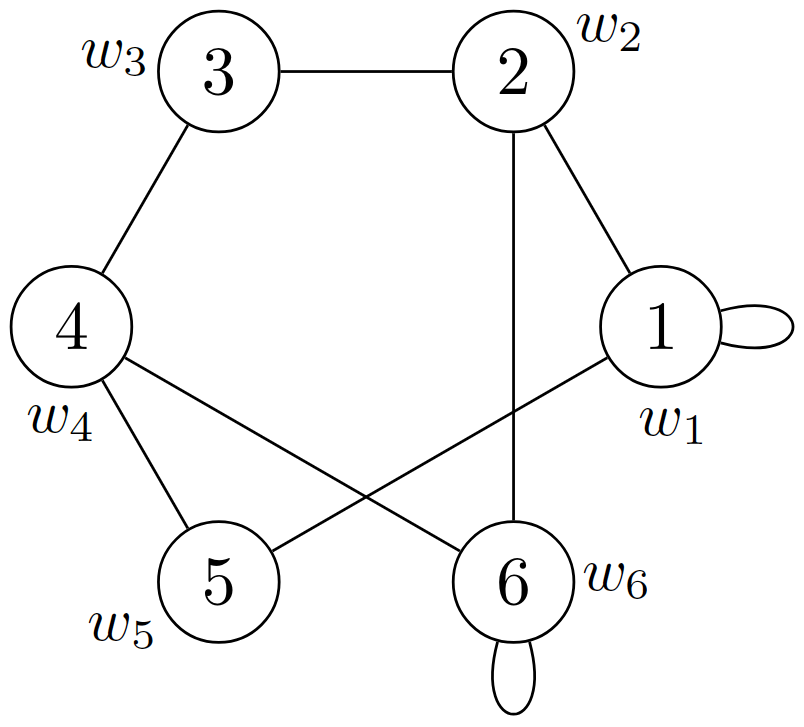}
\caption{An example of a graph with $6$ vertices, each of which has a file associated with it that contains its state and neighborhood parameters.}
\vspace{.4cm}
\label{fig:toyexmpl1}
\end{subfigure} \\
\begin{subfigure}{\linewidth}
\centering
\includegraphics[width=0.95\textwidth]{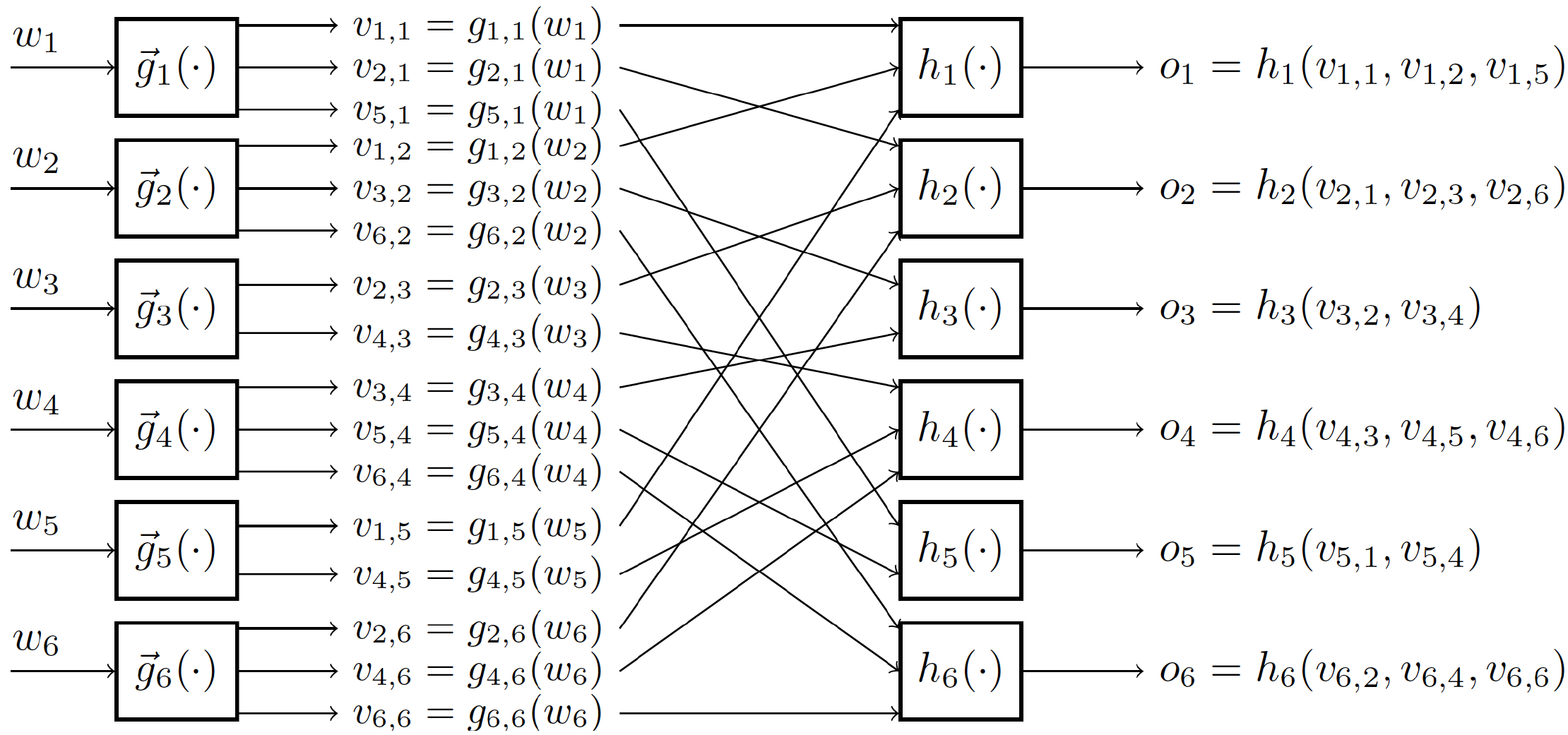}
\caption{MapReduce decomposition of the graph computations for the graph in Fig. \ref{fig:ex}(\subref{fig:toyexmpl1}).}
  \vspace{.5cm}
  \label{fig:dec}
\end{subfigure}\\[1ex]
\begin{subfigure}{1\linewidth}
\centering
\includegraphics[width=0.95\textwidth]{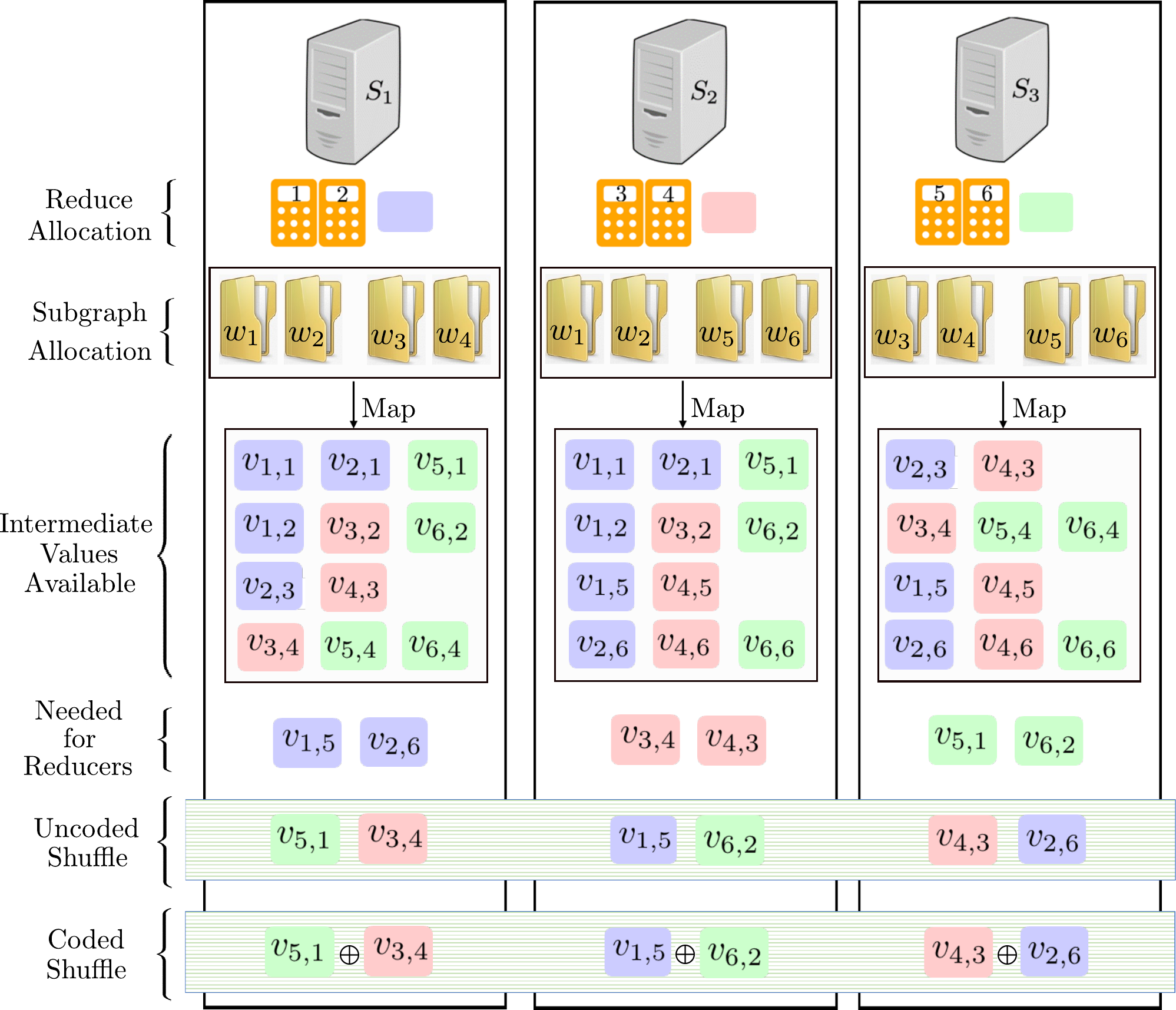}
\vspace{.1cm}
\caption{Illustration of subgraph and Reduce allocations for graph in Fig. \ref{fig:ex}(\subref{fig:toyexmpl1}) with computation load $r=2$ and $K=3$ servers. Each server is allocated a subgraph of size $4$ and $2$ Reducers. After the Map phase, each server needs to obtain the missing intermediate values that are needed to compute the Reduce functions allocated to it. Due to redundant subgraph allocation, each of the intermediate values missing at a server is available at both other servers. We illustrate two Shuffling schemes. In the uncoded Shuffle, a missing intermediate value is obtained from one of the other two servers, and each server is assigned the task of sending two intermediate values, one for each of the other two servers. In coded Shuffle, each server sends a XOR of the assigned intermediate values and sends \textit{only one} coded message which is simultaneously useful for the both other servers.}
  \vspace{.4cm}
  \label{fig:tabmapredex}
\end{subfigure}
\caption{An illustrative example.}
 \label{fig:ex}
\end{figure}

\noindent
\textbf{Subgraph Allocation}:
Each server is assigned the Map computations in (\ref{eq:MapRed}) associated with a subgraph, which consists of a subset of vertices and associated files containing state and neighborhood information of the vertices. We denote the subgraph that is allocated to each server $k\in [K]$ by $\mathcal{M}_k\subseteq \mathcal{V}$. Thus, server $k$ will then store all the files in $\mathcal{M}_k$, and will be responsible for computing the Map functions on those files. Note that each file should be Mapped by at least one server. Additionally, we allow redundant computations, i.e., each file can be Mapped by more than one server. The key idea in leveraging redundancy in the Map computation phase is to trade the computational resources in order to reduce the communication load in the Shuffle phase. We define the computation load as follows.
 
\begin{definition}[Computation Load]
\label{compload} For a subgraph allocation, $(\mathcal{M}_1,\cdots,\mathcal{M}_K)$, the computation load, $r\in[K]$, is defined as 
\begin{equation}
    r\coloneqq \frac{\sum_{k=1}^K|\mathcal{M}_k|}{n}, \nonumber
\end{equation}
where $|\mathcal{M}_k|$ denotes the number of vertices in the subgraph $\mathcal{M}_k$ for $k\in [K]$.
\end{definition}

\begin{remark}
For a desired computation load $r$, each server is assigned a subgraph with the same number of vertices, i.e. for each server $k\in [K]$, $|\mathcal{M}_k|=\frac{r n}{K}$. 
\end{remark}
To carry out the Reduce computation in (\ref{eq:MapRed}) for all vertices, each server is assigned a subset of Reduce functions as follows. 

\noindent
\textbf{Reduce Allocation}:
A Reducer is associated with each vertex of the graph $\mathcal{G}$ as represented in (\ref{eq:MapRed}). We use $\mathcal{R}_k\subseteq \mathcal{V}$ to denote the set of vertices whose Reduce computations are assigned to server $k\in [K]$. The set of Reduce computations are partitioned into $K$ equal parts and each part is associated exclusively with one server, i.e., $\cup_{k=1}^K \mathcal{R}_k=\mathcal{V}$ and $\mathcal{R}_m\cap \mathcal{R}_n=\phi$ for $m,n\in [K],m\neq n$. Therefore,  $|\mathcal{R}_k|=\frac{n}{K} $, $\forall k \in [K]$. 

For the graph in Fig. \ref{fig:ex}(\subref{fig:toyexmpl1}) and a computation load of $r=2$, we illustrate a scheme for subgraph allocation and Reduce allocation in Fig. \ref{fig:ex}(\subref{fig:tabmapredex}). Here, each vertex appears in exactly two subgraphs, i.e. Map computation associated with each vertex is assigned to exactly two servers. The subgraph and Reduce allocations in Fig. \ref{fig:ex}(\subref{fig:tabmapredex}) form key components of our proposed scheme in Section \ref{sec:achvblty}, in which for a computation load of $r$, every unique set of $r$ servers is allocated a unique batch of ${n}/{{K \choose r}}$ files for Map computations.

For a given scheme with subgraph allocation and Reduce allocation tuple denoted by  $A=(\mathcal{M},\mathcal{R})$, where $\mathcal{M}=(\mathcal{M}_1,\cdots,\mathcal{M}_K)$ and $\mathcal{R}=(\mathcal{R}_1,\cdots,\mathcal{R}_K)$, the distributed graph processing proceeds in three phases as described next.

\noindent
\textbf{Map phase}:
Each server first Maps the files associated with the subgraph that is allocated to it. More specifically, for each $i\in \mathcal{M}_k$, server $k$ computes a vector of intermediate values corresponding to the vertices in $\mathcal{N}(i)$ that is $\vec{g}_i=(v_{j,i}: j \in \mathcal{N}(i))$. For the running example, we illustrate the intermediate values generated at each server during the Map phase in Fig. \ref{fig:ex}(\subref{fig:tabmapredex}), where the color of an intermediate value denotes the server that is allocated the task to execute the corresponding Reducer. 

\noindent
\textbf{Shuffle phase:} To be able to do the final Reduce computations, each server needs the intermediate values corresponding to the neighbors of each vertex that it is responsible for its Reduction. Servers exchange messages so that at the end of the Shuffle phase, each server is able to recover its required set of intermediate values. More formally, the Shuffle phase proceeds as follows. For each $k\in [K]$,

\begin{enumerate}[(i)]
\item server $k$ creates a message $X_k \in \mathbb{F}_{2^{c_k}}$ as a function of intermediate values computed locally at that server during the Map phase, i.e. $X_k=\psi_k(\{ \vec{g}_i:i\in \mathcal{M}_k \})$, where $c_k$ is the length of the binary message $X_k$, 
\item server $k$ multicasts $X_k$ to all the remaining servers,
\item server $k$ recovers the missing intermediate values $\{v_{i,j}:i\in \mathcal{R}_k,  j\in \mathcal{N}(i), j\notin \mathcal{M}_k\}$ using locally computed intermediate values $\{v_{i,j}:i\in \mathcal{N}(j), j\in \mathcal{M}_k\}$ and received messages $ \{X_{k'}:k'\in [K]\setminus \{k\}\}$. 
\end{enumerate}

We define the normalized communication load of the Shuffle phase as follows.
\begin{definition}[Normalized Communication Load]
\label{commload} The normalized communication load, denoted by $L$, is defined as the number of bits communicated by $K$ servers during the Shuffle phase, normalized by the maximum possible total number of bits in the intermediate values associated with all the Reduce functions, i.e.
\begin{equation}
    L\coloneqq \frac{\sum_{k=1}^{K} c_k}{n^2T}. \nonumber
\end{equation}
\end{definition}
For the running example in Fig. \ref{fig:ex}(\subref{fig:tabmapredex}), after the Map phase, each server obtains the intermediate values corresponding to the files in its subgraph. The intermediate values that are needed for computing the allocated Reduce functions but are not available after the Map phase have also been highlighted. We illustrate an uncoded Shuffling scheme in which each server is assigned the task of sending some of its locally available intermediate values to other server over the shared multicast network. We highlight here that each intermediate value missing at a server is available at two other servers. For example, $v_{5,1}$ and $v_{6,2}$ are missing at server $3$, and both of them are available at servers $1$ and $2$. In this uncoded Shuffle, exactly one of the two servers is uniquely assigned the task to communicate the missing intermediate value to the server. For example, $v_{5,1}$ is multicasted by server $1$ while $v_{6,2}$ is multicasted by server $2$. As a total of $6$ intermediate values are sent over the shared multicast network, the normalized communication load of the uncoded Shuffle is $L=\frac{6}{36}$. 

The servers can instead send linear combinations of the intermediate values over the multicast network. For example, server $1$ multicasts $v_{5,1}\oplus v_{3,4}$. As $v_{5,1}$ is locally available at server $2$, server $2$ can compute $(v_{5,1}\oplus v_{3,4})\oplus v_{5,1}$ and obtain the missing intermediate value $v_{3,4}$. Similarly, server $3$ can obtain the missing intermediate value $v_{5,1}$. This illustrates that by using \textit{coded} Shuffle, in which each server sends a combination of locally available intermediate values over the multicast network, the communication load can be improved over the uncoded Shuffle. In this case specifically, the communication load for the coded Shuffle is $L=\frac{3}{36}$, which is factor of two (same as the computation load $r=2$) improvement over uncoded Shuffle. This forms the motivation behind our proposed scheme in Section \ref{sec:achvblty}. 

\noindent
\textbf{Reduce phase:}
 Using its locally computed intermediate values and the intermediate values recovered from the messages received from other servers during the Shuffle phase, server $k\in [K]$ computes the Reduce functions in $\mathcal{R}_k$ to calculate $o_i=h_i(\{v_{i,j}:j\in \mathcal{N}(i)\})$ for all $i\in \mathcal{R}_k$.

In Fig. \ref{fig:ex}(\subref{fig:tabmapredex}), each server has all the intermediate values that are needed to compute the allocated Reduce functions. For example, for computing the Reduce function associated with vertex $1$, server $1$ has intermediate values $v_{1,1}$ and $v_{1,2}$ available locally from the Map phase and the intermediate value $v_{1,5}$ obtained from server $2$ in the Shuffle phase. Therefore, each of the three servers can compute the Reduce functions allocated to it.  

\subsection{Problem Formulation}
As illustrated in Fig. \ref{fig:ex}, the communication load during Shuffle phase depends on subgraph allocation, Reduce allocation, and Shuffle strategy. For an allowed computation load $r$, our broader goal is to minimize the communication load during Shuffle phase through efficient schemes for allocation of subgraphs and Reducers to servers and for \textit{coded} Shuffling of intermediate values among the servers. We consider a random undirected graph $\mathcal{G=(V,E)}$, where edges independently exist with probability $\Prob[(i,j)\in \mathcal{E}]$ for all $i,j \in \mathcal{V}$. Let $\mathcal{A}(r)$ be the set of all possible subgraph and Reduce allocations for a given computation load $r$ (as defined in the previous subsection). For a graph realization $G$ of $\mathcal{G}$ and an allocation $A\in \mathcal{A}(r)$, a coded Shuffling scheme is feasible if each server can compute all the Reduce functions assigned to it. We denote by $L_{A}(r,G)$ the minimum (normalized) communication load (as defined in Definition \ref{commload}) over all feasible Shuffling coding schemes that enable each server  to compute all the Reduce functions assigned to it.\footnote{The uncoded Shuffling schemes are special cases of the coded Shuffling schemes and are thus included in the set of all feasible coded Shuffling schemes under consideration.} Hence, for a given realization $G$ of the random graph $\mathcal{G}$, the minimum communication load among all possible subgraph and Reduce allocations and feasible coded Shuffling schemes is as follows:
\begin{equation}
\label{eg:compcommGSample}
L_G^*(r)\coloneqq \inf_{A\in \mathcal{A}(r)}L_A(r,G).
\end{equation}
\begin{remark}
Partitioning of graphs in popular graph processing frameworks such as Pregel~\cite{malewicz2010pregel} is solely based on the vertex ID and not on the vertex neighborhood density. Furthermore, designing subgraph allocation, Reduce allocation and Shuffling schemes for characterizing the minimum communication load in (\ref{eg:compcommGSample}) is NP-hard in general. This is because for the case of computation load $r=1$, finding the minimum communication load is equivalent to finding the minimum $K$-cut over the graph, which is NP-hard for general graphs~\cite{dahlhaus1992complexity}. Additionally, existing heuristics for load balancing in distributed graph processing involve \textit{additional} steps such as migration of vertex files \textit{during} graph algorithm execution~\cite{khayyat2013mizan}, which adds latency to the overall execution time. Hence, we focus on the problem of finding the subgraph and Reduce allocation tuple $A\in \mathcal{A}(r)$ that minimizes the \textit{average} normalized communication load across all graph realizations $G$ of $\mathcal{G}$. 
\end{remark}

We formally define our problem as follows.

\noindent\textit{Problem:} For a given random undirected graph $\mathcal{G=(V,E)}$ and a computation load $r \in [K]$, our goal is to characterize the minimum average normalized communication load, i.e.
\begin{equation}\label{eg:compcomm}
L^*(r)\coloneqq \inf_{A\in \mathcal{A}(r)}\Expc_{\mathcal{G}}[L_A(r,\mathcal{G})].
\end{equation}

\begin{remark}
For $r \geq K$, $L^*(r)$ is trivially $0$ as each vertex can be mapped at each server, so all the intermediate values associated with the Reducers of any server is available at the server.
\end{remark}

\begin{remark}
As defined above, $L^*(r)$ essentially reveals a fundamental trade-off between computation and communication in distributed graph processing.
\end{remark}
\begin{remark}
In the above problem formulation, for a given subgraph and Reduce allocation tuple $A\in \mathcal{A}(r)$, in order to minimize the average communication load, the Shuffle scheme needs to take into consideration the connectivity of each realization $G$ of $\mathcal{G}$. As we describe in Section \ref{sec:achvblty}, our proposed coded scheme  utilizes careful alignment of intermediate values for creating coded messages for multicast during the Shuffle phase, leading to significant improvement in the average communication load.   
\end{remark}
\begin{remark}
Although the main focus of our problem formulation is on minimizing the average communication load for random graph models, our proposed coded scheme in Section \ref{sec:achvblty} is applicable to any real-world graph. As demonstrated in Section \ref{sec:expe}, our proposed coded scheme can provide significant performance gains in practice. Specifically, for implementing PageRank over the real-world social webgraph TheMarker Cafe~\cite{markercafe}, our proposed scheme provides a gain of up to $43.4\%$ in the \textit{overall execution time} in comparison to the conventional PageRank implementation. 
\end{remark}

In the next Section, we discuss our main results for four popular random graph models. 
\section{Main Results}

In this section, we present the main results of our work. Our first result is the characterization of $L^*(r)$ (defined in (\ref{eg:compcomm})) for the Erdös-Rényi model that is defined below.

\noindent
\textbf{Erdös-Rényi Model}: Denoted by $\text{ER}(n,p)$, this model consists of graphs of size $n$ in which each edge exists with probability $p\in(0,1]$, independently of other edges (Fig. \ref{fig:models}(\subref{fig:er})).

\begin{theorem}\label{thm:ERCDC}
For the Erdös-Rényi model ER$(n,p)$ with $p = \omega(\frac{1}{n^2})$, we have
\begin{equation}\label{eq:ER}
   \lim_{n\rightarrow \infty}\frac{L^*(r)}{p} = \frac{1}{r}\left(1-\frac{r}{K}\right). \nonumber
\end{equation}
\end{theorem}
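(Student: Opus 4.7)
The plan is to establish the asymptotic value of $L^*(r)/p$ by proving matching upper and lower bounds, each of the form $\tfrac{1}{r}(1-\tfrac{r}{K}) + o(1)$ as $n\to \infty$ under the mild density assumption $p=\omega(1/n^2)$.

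For the achievability side, I would design an explicit subgraph/computation allocation $A\in\mathcal{A}(r)$ together with a structured coded Shuffle scheme, inspired by the CDC construction of \cite{li2017fundamental} but adapted to the graph setting. Assuming $\binom{K}{r}$ divides $n$, partition $\mathcal{V}$ into $\binom{K}{r}$ equal-size batches $\{\mathcal{B}_\mathcal{T}\}$ indexed by $r$-subsets $\mathcal{T}\subseteq[K]$, and set $\mathcal{M}_k=\bigcup_{\mathcal{T}\ni k}\mathcal{B}_\mathcal{T}$, so each vertex is Mapped by exactly $r$ servers and the computation load constraint is met with equality. The Reduce assignment partitions $\mathcal{V}$ uniformly into $\mathcal{R}_1,\dots,\mathcal{R}_K$. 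In the Shuffle phase, for every $(r+1)$-subset $\mathcal{S}\subseteq[K]$ and each $k\in\mathcal{S}$, server $k$ broadcasts XORs of (segments of) intermediate values $v_{i,j}$ that $k$ has Mapped and that are required by exactly one other server in $\mathcal{S}$; by construction each remaining server in $\mathcal{S}$ can cancel all but its own needed term. The bookkeeping step is to compute the expected communication load: by linearity of expectation, each pair $(i,j)$ with $i\in\mathcal{R}_k$ and $j\notin\mathcal{M}_k$ contributes $T$ bits of demand with probability $p$ (the edge exists), and this demand is served by a length-$T/r$ coded transmission (segment within a multicast group of size $r+1$). Counting the fraction of $(i,j)$ pairs with $j\notin\mathcal{M}_k$ yields the factor $(1-r/K)$, and dividing by the normalization $n^2 T$ gives the claimed $\tfrac{p}{r}(1-\tfrac{r}{K})$, up to vanishing corner effects from non-integer divisibility.

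For the converse, I would produce an information-theoretic lower bound on $L_A(r,G)$ that holds for every allocation $A$ and every realization $G$, then take expectation. The approach is cut-set / induction based: for each subset $\mathcal{U}\subseteq[K]$ of servers, the messages they collectively produce, together with the intermediate values they already store from Mapping, must determine every intermediate value $v_{i,j}$ with $i\in\bigcup_{k\in\mathcal{U}}\mathcal{R}_k$ and $j\in\mathcal{N}(i)$. Counting how many such intermediate values are \emph{not} already present in $\bigcup_{k\in\mathcal{U}}\mathcal{M}_k$ gives a per-realization lower bound on $\sum_{k\in\mathcal{U}}c_k$, and an induction on $|\mathcal{U}|$ (over all cardinalities from $r+1$ up to $K$) combined with a symmetrization over the $\binom{K}{|\mathcal{U}|}$ choices of $\mathcal{U}$ sharpens the bound to match the achievable side. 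Taking expectation over $\mathcal{G}\sim\text{ER}(n,p)$ turns each indicator $\mathbbm{1}[(i,j)\in\mathcal{E}]$ into a factor $p$, and the hypothesis $p=\omega(1/n^2)$ ensures the expected number of edges grows without bound, so edge-effects and lower-order terms wash out after dividing by $p$.

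The main obstacle I anticipate is the converse, not the achievability. In the classical CDC setting each Reducer demands every intermediate value, which makes the inductive counting argument tight; here the demand at Reducer $i$ is restricted to the random neighborhood $\mathcal{N}(i)$, so the bound on how much a server subset $\mathcal{U}$ already knows depends intricately on both the allocation and the random edge set. The delicate point is to argue that \emph{no} adaptive allocation $A$ can exploit the randomness of $\mathcal{G}$ to beat the factor $1/r$ in expectation; concretely, one must show that conditioning on $A$ (which is chosen before seeing $\mathcal{G}$), the expected fraction of required intermediate values missing from any $\mathcal{U}$ concentrates at $(1-|\mathcal{U}|r/K)_+ \cdot p \cdot n^2$ times the appropriate combinatorial factor. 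Handling the sparse regime $p=\omega(1/n^2)$ rather than $p=\Theta(1)$ requires care so that the $o(p)$ error terms, particularly those arising from isolated vertices or from divisibility adjustments in the allocation, remain negligible relative to the leading $p/r$ term after normalization.
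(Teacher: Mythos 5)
Your overall architecture matches the paper's: the same CDC-style batch allocation indexed by $r$-subsets of $[K]$, coded multicast within $(r+1)$-subsets for achievability, and an entropy-based induction over server subsets for the converse. Your converse sketch is essentially the paper's argument: the allocation $A$ is fixed before the graph is drawn, the induction is applied to $\Expc_{\mathcal{G}}[H(X_{\mathcal{S}}\mid Y^{\mathcal{G}}_{\mathcal{S}^c})]$ so that each edge indicator contributes a factor $p$ in expectation, and the proof finishes with convexity of $(K-j)/(Kj)$ over the counts $a^j_{\mathcal{M}}$ of vertices mapped at exactly $j$ servers; your worry about allocations adapting to the randomness is moot for exactly the reason you state.

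The genuine gap is in your achievability accounting. You charge each present pair $(i,j)$ with $i\in\mathcal{R}_k$, $j\notin\mathcal{M}_k$ a cost of $T/r$ bits ``by linearity of expectation,'' i.e.\ you implicitly assume every coded transmission of length $T/r$ serves $r$ distinct useful segments. But within a multicast group $\mathcal{S}$, the segments a given server must deliver are organized into $r$ rows (one per destination in $\mathcal{S}$ other than itself), and each row contains a \emph{random} number $P_i$ of present intermediate values; the number of coded messages that server sends is the number of nonempty columns of its table, namely $Q=\max_{i\in[r]}P_i$, with shorter rows zero-padded. The per-group, per-server load is therefore $\frac{T}{r}\,\Expc[Q]$, not $\frac{T}{r}\,\Expc[P_1]=\frac{T}{r}p\tilde{g}$, and $\Expc[\max_i P_i]$ strictly exceeds $p\tilde{g}$. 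The dominant correction is thus not a divisibility corner effect but the excess of the expected maximum over the common mean. The paper closes this with its Lemma 1, a Chernoff/moment-generating-function bound yielding $\Expc[Q]\le p\tilde{g}+O(\sqrt{p\tilde{g}\log r})$, where the MGF bound must also account for the fact that the edge indicators are only pairwise dependent ($v_{i,j}$ and $v_{j,i}$ arise from the same undirected edge). This is precisely where the hypothesis $p=\omega(1/n^2)$ enters the achievability side: it guarantees $p\tilde{g}=\omega(1)$, so the $\sqrt{p\tilde{g}}$ fluctuation is $o(p\tilde{g})$ and the coded load is $\frac{1}{r}p(1-\frac{r}{K})+o(p)$. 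Without this concentration step your upper bound is not established.
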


\begin{remark}\label{remark:gain}
Theorem \ref{thm:ERCDC} reveals an interesting inverse-linear trade-off between computation and communication in distributed graph processing. Specifically, our proposed coded scheme in Section \ref{sec:achvblty} asymptotically gives a gain of $r$ in the average normalized communication load in comparison to the uncoded Shuffling scheme that as we discuss later in Section \ref{sec:achvblty}, only achieves an average normalized communication load of $p(1-\frac{r}{K})$. This trade-off can be used to leverage additional computing resources and capabilities to alleviate the costly communication bottleneck. Moreover, we numerically demonstrate that even for finite graphs, not only the proposed scheme significantly reduces the communication load in comparison to the uncoded scheme, but also has a small gap from the optimal average normalized communication load (Fig. \ref{fig:mainplot}). Finally, the assumption $p = \omega(\frac{1}{n^2})$ implies the regime of interest in  which the average number of edges in the graph is growing with $n$. Otherwise, the problem would not be of interest since the communication load would become negligible even without redundancy/coding in computation.
\end{remark}

\begin{figure}[ht]
\centering
\begin{tikzpicture}[scale=.94]
\begin{axis}[
  xlabel=Computation Load ($r$),
  ylabel=Expected Communication Load ($L$),
  legend style={nodes={scale=0.7}},
  label style={font=\small},
  legend cell align={left}]
\addplot table [x=r, y=uncoded]{data.dat};
\addlegendentry{Uncoded Scheme}
\addplot table [x=r, y=coded]{data.dat};
\addlegendentry{Proposed Coded Scheme}
\addplot table [x=r, y=bound]{data.dat};
\addlegendentry{Lower Bound}
\end{axis}
\end{tikzpicture}
\caption{Performance comparison of our proposed coded scheme with uncoded Shuffle scheme and the proposed lower bound. The averages for the communication load for the two schemes were obtained over graph realizations of the Erdös-Rényi model with $n=300$, $p=0.1 \text{ and } K=5$.}
\label{fig:mainplot}
\end{figure}

\begin{figure}[b!]
    \centering
    \begin{subfigure}[t]{0.45\linewidth}
        \centering \includegraphics[width=.8\linewidth]{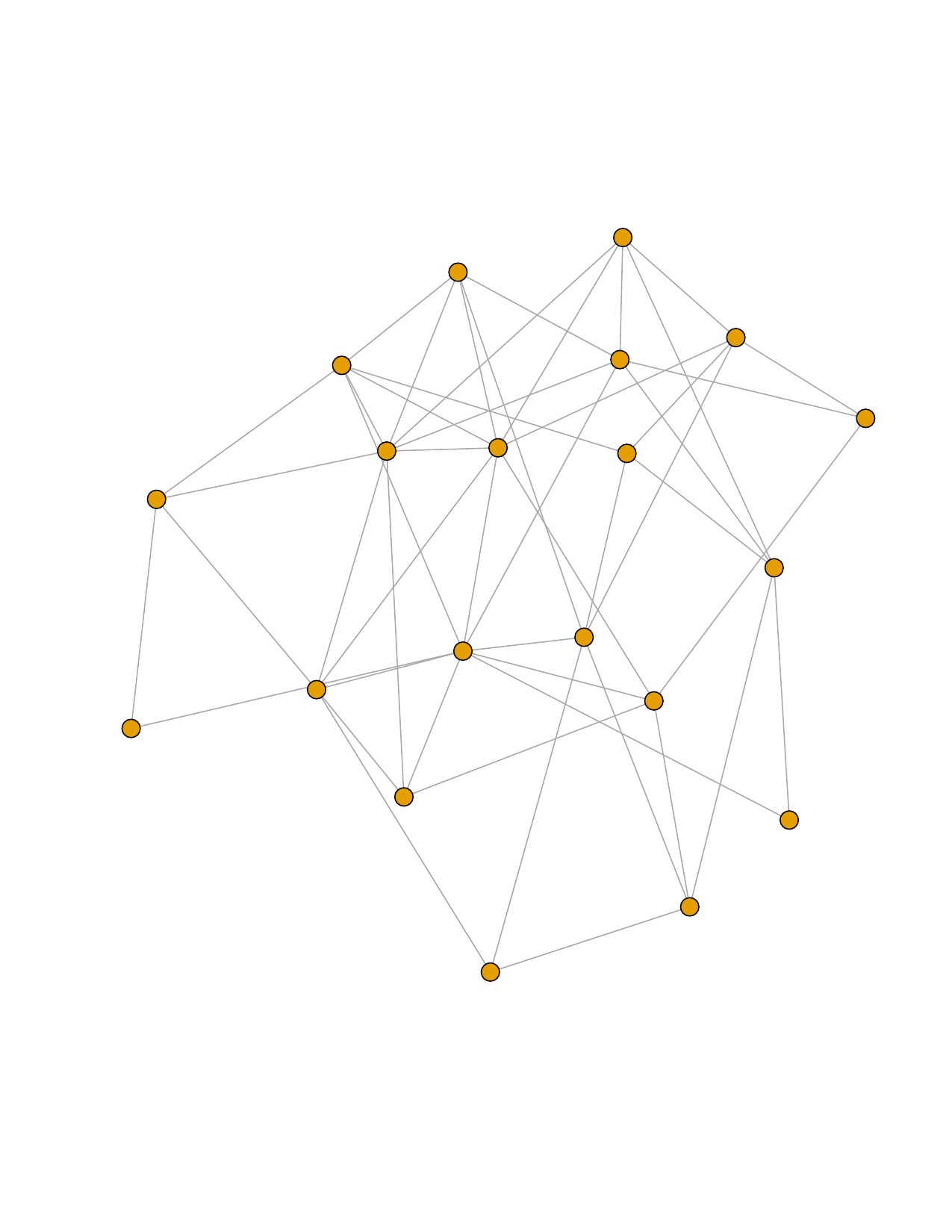} \caption{Erdös-Rényi model with $n= 20$.}\label{fig:er}
    \end{subfigure}%
    ~
    \begin{subfigure}[t]{0.45\linewidth}
        \centering \includegraphics[width=.8\linewidth]{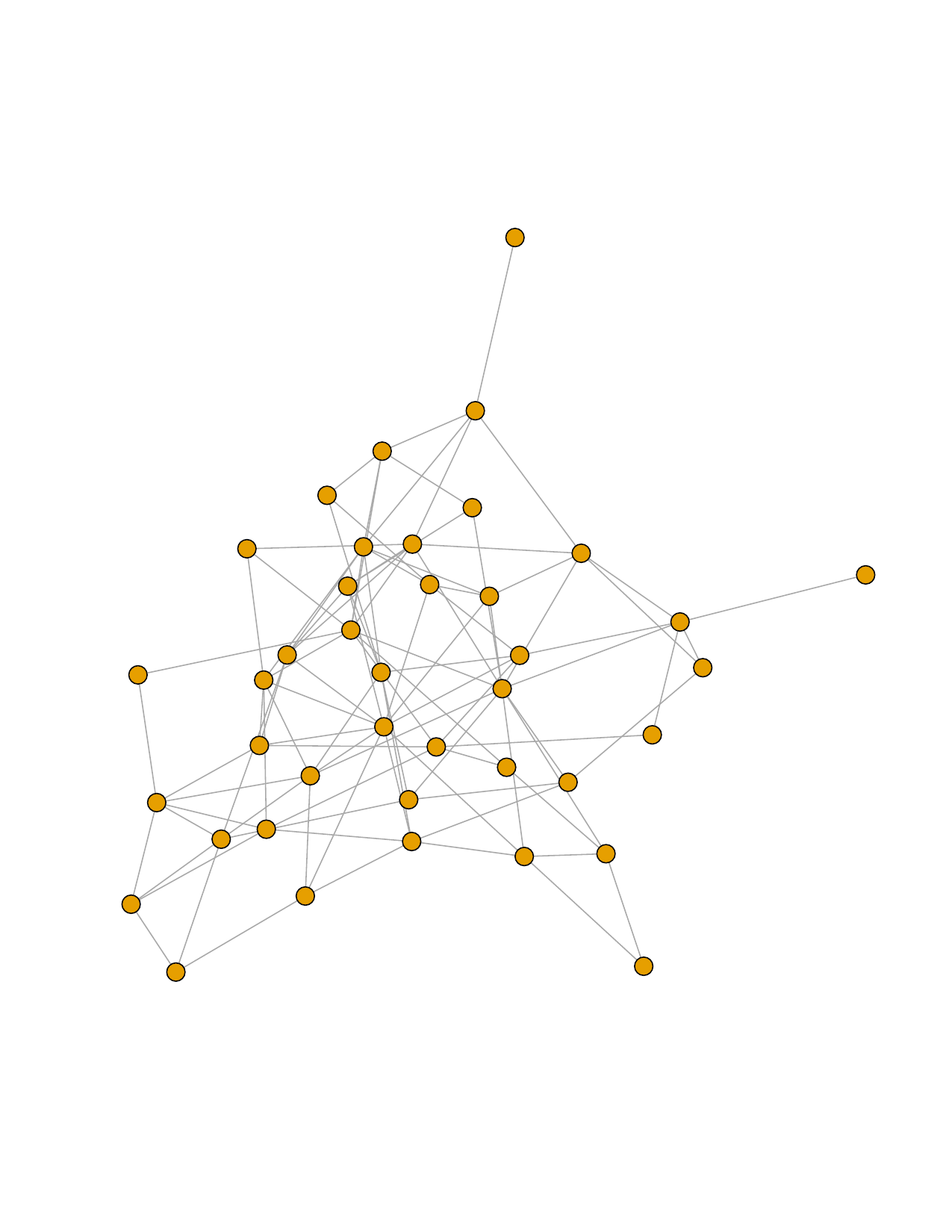}
        \caption{Power law model with $n=40$, $\gamma=2.3$ and $100$ edges.}\label{fig:pl}
    \end{subfigure} \\
    \vspace{.4cm}
    ~ 
    \begin{subfigure}[t]{0.45\linewidth}
        \centering \includegraphics[width=.8\linewidth]{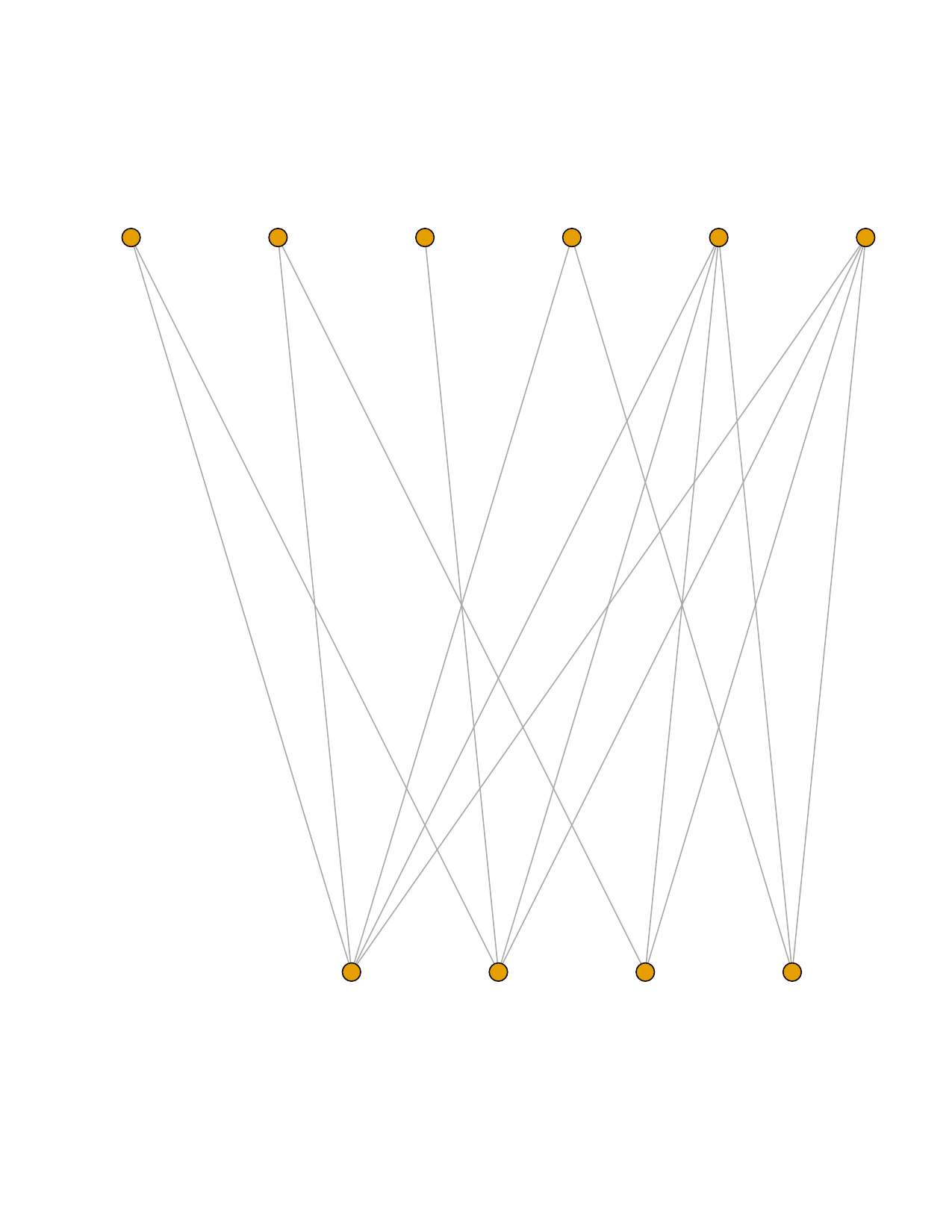}
        \caption{Random bipartite model with $n_1=6$ and $n_2=4$.}\label{fig:rb}
    \end{subfigure}
     ~ 
    \begin{subfigure}[t]{0.45\linewidth}
        \centering \includegraphics[width=.8\linewidth]{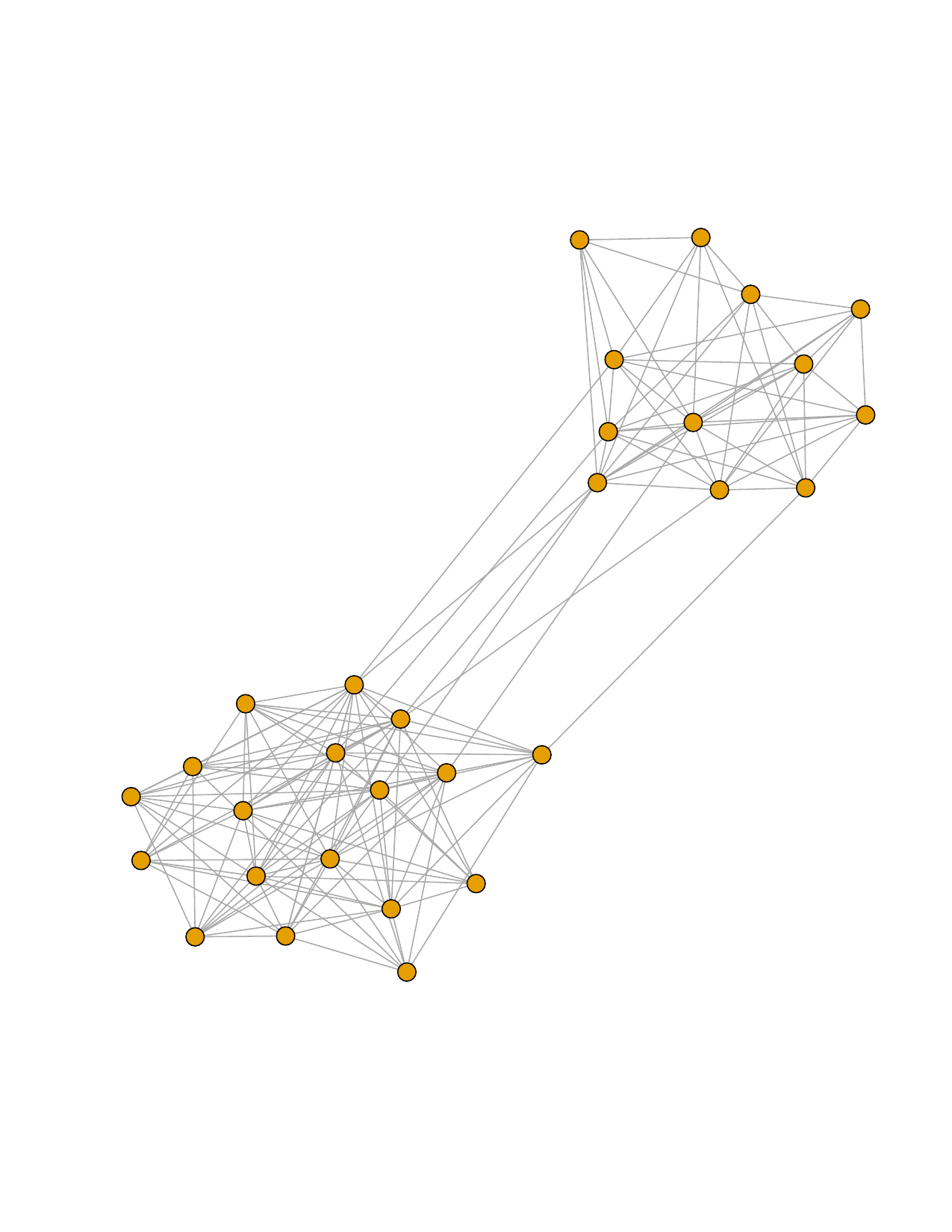}
        \caption{Stochastic block model with $n_1=12$ and $n_2=18$.}\label{fig:sbm}
    \end{subfigure}
     ~     
    \vspace{10pt}
    \caption{Illustrative instances of the random graph models considered in the paper. In Fig.  \ref{fig:models}(\subref{fig:er}), each edge exists with a given probability $p$. In Fig. \ref{fig:models}(\subref{fig:pl}), expected degree of each vertex follows a power law distribution with exponent $\gamma$. In Fig. \ref{fig:models}(\subref{fig:rb}), each cross-edge exists with a given probability $q$. In Fig. \ref{fig:models}(\subref{fig:sbm}), each intra-cluster edge exists with a given probability $p$ and each cross-edge exists with a given probability $q$.}\label{fig:models}
\end{figure}

\begin{remark}
Achievability Theorem \ref{thm:ERCDC} is proved in Section \ref{sec:achvblty}, where we provide subgraph and Reduce allocations followed by the code design for Shuffling for our proposed scheme. The main idea is to leverage the coded multicast opportunities offered by the injected redundancy and create coded messages which simultaneously satisfy the data demand of multiple servers. Careful combination of available intermediate values during the Shuffle phase benefits from the missing graph connections by aligning the intermediate values assigned to be communicated over the shared network. Conversely, Theorem \ref{thm:ERCDC} demonstrates that the asymptotic bandwidth gain $r$ achieved by the proposed scheme is optimal and can not be improved. For the proof of converse provided in  Section \ref{sec:ERconverse}, we use induction to derive information-theoretic lower bounds on the average normalized communication load required by any subset of servers and then use the induction on the set of all the $K$ servers.
\end{remark}

Our second result is the characterization of $L^*(r)$ for the power law model that is defined below.

\noindent
\textbf{Power Law Model}: Denoted by $\text{PL}(n,\gamma,\rho)$, this model consists of graphs of size $n$ in which degrees are i.i.d random variables drawn from a power law distribution with exponent $\gamma$ and edge probabilities are $\rho$-proportional to product of the degrees of the two end vertices (Fig. \ref{fig:models}(\subref{fig:pl})). 

\begin{theorem}
\label{thm:pl}
For the power law model graph $\text{PL}(n,\gamma,\rho)$ with node degrees $\{d_1,\cdots,d_n\}$, $\gamma>2$ and $\rho=\frac{1}{\sum_{i=1}^n d_i}$, we have
\begin{equation}
\label{eq:thm4}
     \limsup_{n \to \infty}\frac{n L^*(r)}{\left(\frac{\gamma-1}{\gamma-2} \right) } \leq \frac{1}{r}\left(1-\frac{r}{K}\right). \nonumber
 \end{equation}
\end{theorem}
\begin{remark}
Theorem \ref{thm:pl} demonstrates that an inverse-linear trade-off between computation load and communication load can also be achieved in the power law model. We leverage our coded scheme proposed in Section \ref{sec:achvblty} for the proof of Theorem \ref{thm:pl} in Section \ref{subsec:pl}.
\end{remark}

Furthermore, we specialize our proposed coded scheme in Section \ref{sec:achvblty} to develop subgraph allocation and Reduce allocation schemes along with coded Shuffling schemes for two other popular random graph models which are described below:

\noindent
\textbf{Random Bi-partite Model}: Denoted by $\text{RB}(n_1,n_2,q)$, this model consists of graphs with two disjoint clusters of sizes $n_1$ and $n_2$ in which each inter-cluster edge exists with probability $q\in (0,1]$, independently of other inter-cluster edges (Fig. \ref{fig:models}(\subref{fig:rb})). No intra-cluster edge exists in this model.

\noindent
\textbf{Stochastic Block Model}: Denoted by $\text{SBM}(n_1,n_2,p,q)$, this model consists of graphs with two disjoint clusters of sizes $n_1$ and $n_2$ such that each intra-cluster edge exists with probability $p$ and each inter-cluster edge exists with probability $q, 0 < q< p \leq 1$, all independent of each other (Fig. \ref{fig:models}(\subref{fig:sbm})).

The following theorems provide the achievability and converse results for RB and SBM models.

\begin{theorem}
\label{thm:rndmbiprt}
For the random bi-partite model $\text{RB}(n_1,n_2,q)$ with  $n=n_1+n_2$, $n_1=\Theta(n)$, $n_2=\Theta(n)$, $|n_1-n_2| = o(n)$ and $q = \omega(\frac{1}{n^2})$, we have
\begin{equation}
\label{eq:thm2}
    \frac{1}{8r}\left(1-\frac{2r}{K}\right)
    \leq
    \limsup_{n \to \infty}\frac{L^*(r)}{q} \leq \frac{1}{2r}\left(1-\frac{2r}{K}\right). \nonumber
\end{equation}
\end{theorem}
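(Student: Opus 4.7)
The plan is to prove the upper and lower bounds in (\ref{eq:thm2}) separately, in each case exploiting the fact that the bipartite structure decouples the Reducer demands of the two clusters. For the achievability bound $\frac{1}{2r}(1-\tfrac{2r}{K})$, the idea is to partition the $K$ servers into two disjoint groups $\mathcal{K}_1, \mathcal{K}_2$ of size $K/2$ each, and run the coded scheme of Theorem \ref{thm:ERCDC} independently inside each group. Concretely, servers in $\mathcal{K}_s$ are assigned all Reduce tasks of cluster $s$; since a cluster-$s$ Reducer needs intermediate values only from cluster $3-s$, the Map tasks for cluster $3-s$ vertices are assigned to $\mathcal{K}_s$, with each such vertex placed redundantly at $r$ servers within $\mathcal{K}_s$, which preserves the global computation load $r$. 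The Shuffle sub-problem inside each group is a bipartite ER-type instance on $n_{3-s}$ Mapped vertices and $n_s$ Reducers with edge probability $q$, and the Shuffling scheme of Theorem \ref{thm:ERCDC} with $K/2$ servers at load $r$ applies. The uncoded load per group, normalized by $n^2T$, is $\tfrac{n_1 n_2}{n^2}q(1-\tfrac{2r}{K}) \to \tfrac{q}{4}(1-\tfrac{2r}{K})$ under $|n_1-n_2|=o(n)$; summing over the two groups and dividing by the coding gain $r$ yields the claimed $\tfrac{q}{2r}(1-\tfrac{2r}{K})$.

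For the converse $\frac{1}{8r}(1-\tfrac{2r}{K})$, my plan is to reduce to a one-sided ER sub-problem and invoke the inductive information-theoretic lower bound from Theorem \ref{thm:ERCDC}'s converse. Given any allocation $A\in \mathcal{A}(r)$, let $r_s$ denote the average number of servers Mapping a given cluster-$s$ vertex; the computation-load constraint yields $r_1 n_1 + r_2 n_2 = rn$, and combined with $n_1,n_2 = n/2 + o(n)$ this forces $\min(r_1,r_2)\leq r(1+o(1))$. Assume without loss of generality $r_2 \leq r(1+o(1))$. Delivering the intermediate values $\{v_{i,j}: i \in \text{cluster }1,\, j \in \text{cluster }2\}$ to the cluster-$1$ Reducers is by itself a bipartite Shuffle sub-problem of effective size $n_1 n_2$ with Map-side redundancy $r_2$. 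Transplanting the inductive subset-wise cut argument underlying Theorem \ref{thm:ERCDC}'s converse to this sub-problem produces a lower bound of the form $\tfrac{1}{r_2}(1-\tfrac{r_2}{K})\cdot q\cdot \tfrac{n_1 n_2}{n^2}(1-o(1))$ on the normalized expected load. Using $r_2\leq r(1+o(1))$, $\tfrac{n_1 n_2}{n^2}\to \tfrac{1}{4}$, and the elementary inequality $1-\tfrac{r_2}{K}\geq \tfrac{1}{2}(1-\tfrac{2r}{K})$ valid for $r_2\leq r$, I recover $L^*(r)/q \geq \tfrac{1}{8r}(1-\tfrac{2r}{K})(1-o(1))$.

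The main obstacle will be the converse step: the induction in Theorem \ref{thm:ERCDC}'s converse was designed for a symmetric ER graph where every pair of vertices may demand data from every other, whereas here the demand pattern is one-sided and the Mapped and Reduced sides live in different clusters. I would need to re-run the subset-wise cut counting over multicast groups restricted to one side of the bipartition, and carefully reconcile the distributional quantity $r_2$ (an average over the allocation) with the per-subset bounds that the induction produces. The factor-of-four gap between $\tfrac{1}{2r}$ and $\tfrac{1}{8r}$ is the combined effect of (i) restricting attention to a single cluster of Reducers in the converse, and (ii) replacing $r_2$ by its worst-case bound $r$ in the $(1-r_2/K)$ factor; either step could conceivably be sharpened, but neither is needed for the statement of Theorem \ref{thm:rndmbiprt}.
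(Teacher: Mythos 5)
Your proposal is correct and follows essentially the same route as the paper: for achievability the paper likewise splits the servers into two groups (sized $K_1=\tfrac{n_1}{n}K$ and $K_2=\tfrac{n_2}{n}K$ rather than exactly $K/2$, with the residual $n_1-n_2$ Reducers handled by an explicit uncoded phase that vanishes under $|n_1-n_2|=o(n)$), pairs each cluster's Reducers with the opposite cluster's Mappers, and applies the ER scheme of Theorem \ref{thm:ERCDC} inside each group; for the converse it likewise restricts to Reducing a single cluster and invokes the ER lower bound on the resulting one-sided sub-problem, obtaining the factor $\tfrac{1}{8r}$ by trimming $\mathcal{V}_1$ to size $n_2$ and bounding the effective redundancy by $2r$, which matches the constant you get via the $\min(r_1,r_2)\le r(1+o(1))$ accounting. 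The transplantation of the inductive cut-set argument to the bipartite demand pattern, which you flag as the main obstacle, is exactly the step the paper also leaves implicit by directly citing Theorem \ref{thm:ERCDC}.
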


\begin{remark}
Theorem \ref{thm:rndmbiprt} characterizes the optimal average normalized communication load within a factor of $4$ for the random bi-partite model. We provide the proofs for achievability and converse of Theorem \ref{thm:rndmbiprt} in Appendices and \ref{subsec:RB} and  \ref{subsec:RB-converse} respectively. For achievability, we observe that there are no intra-cluster edges in the random bi-partite model, due to which intermediate values for a particular Reducer in one cluster only comes from Mappers in the other cluster. Therefore, we specialize our proposed coded scheme in Section \ref{sec:achvblty} for the random bi-partite model, partitioning the available servers in proportion to the cluster sizes. Therefore, there is maximum overlap between Reducers corresponding to vertices in one cluster and Mappers corresponding to vertices in other cluster. For proving the converse, we remove vertices (and the edges corresponding to them) from the larger cluster so that the reduced graph has two clusters of equal sizes. The reduced graph model thus has two sets of Mappers and Reducers, which correspond to two different Erdös-Rényi models. Applying our lower bound for the Erdös-Rényi model in Theorem \ref{thm:ERCDC}, we arrive at the converse of the bi-partite model. 
\end{remark}

\begin{theorem}
\label{thm:sbm}
For the stochastic block model $\text{SBM}(n_1,n_2,p,q)$ with $n=n_1+n_2$, $n_1=\Theta(n)$, $n_2=\Theta(n)$, and $p= \omega(\frac{1}{n^2}),q = \omega(\frac{1}{n^2})$, we have
\begin{equation}
\label{eq:thm3}
     \limsup_{n \to \infty}\frac{L^*(r)}{\frac{p n_1^2+p n_2^2+2q n_1n_2}{(n_1+n_2)^2} } \leq \frac{1}{r}\left(1-\frac{r}{K}\right). 
 \end{equation}
 Moreover, the following converse inequality holds:
 \begin{align}
 \label{eq:thm3conv}
    \frac{L^*(r)}{q} 
    & \geq
    \frac{1}{r} \left(1-\frac{r}{K}\right).
\end{align}
\end{theorem}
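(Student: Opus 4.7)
My plan is to handle the achievability by directly applying the coded scheme developed for the ER model in Section~\ref{sec:achvblty}, and to handle the converse via a monotone coupling argument that reduces the SBM instance to an ER graph with edge probability~$q$.

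For the achievability, I would employ the very same subgraph allocation and coded shuffling scheme used to prove Theorem~\ref{thm:ERCDC}, applied to the SBM realization without any exploitation of cluster structure (vertices are allocated uniformly among the $K$ servers). Since the encoder at each machine is a deterministic function of the realized edge set and the allocation, this remains a valid Shuffling scheme for every graph realization; only the probabilistic analysis of the load must be redone. Conditioning on the cluster memberships of the (at most $r+1$) vertex pairs that participate in each coded multicast opportunity, the contribution of a pair $(i,j)$ to $\Expc[L]$ is proportional to $p$ when $i$ and $j$ lie in the same cluster and to $q$ otherwise. Summing over all pairs, using $n_1=\Theta(n)$ and $n_2=\Theta(n)$, and weighting by the multiplicities of intra- versus inter-cluster pairs, the scalar $p$ that appears in the ER analysis is replaced by the mixed quantity
\begin{equation}
\bar{p}_n \;=\; \frac{p n_1^2+p n_2^2+2q n_1 n_2}{(n_1+n_2)^2},
\end{equation}
while the combinatorial prefactor $\frac{1}{r}\bigl(1-\frac{r}{K}\bigr)$ is inherited unchanged from Theorem~\ref{thm:ERCDC}. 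Dividing by $\bar{p}_n$ and taking $\limsup$ then yields \eqref{eq:thm3}.

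For the converse, I would exploit the assumption $q<p$ via a monotone coupling. For each unordered pair $(i,j)$, draw $U_{ij}\sim\mathrm{Uniform}(0,1)$ independently, and construct on the same vertex set both a graph $G'$ that includes $(i,j)$ iff $U_{ij}\le q$ and a graph $G$ that includes $(i,j)$ iff $U_{ij}\le p$ for intra-cluster pairs and $U_{ij}\le q$ for inter-cluster pairs. Then $G'\subseteq G$ pointwise, $G$ is distributed as $\mathrm{SBM}(n_1,n_2,p,q)$, and $G'$ as $\mathrm{ER}(n,q)$. The next step is a short monotonicity lemma: for every fixed allocation $A\in\mathcal{A}(r)$ and every pair of realizations with $G'\subseteq G$, any feasible Shuffling scheme for $G$ is already feasible for $G'$ (the data required by each server under $G'$ is a subset of what it must decode under $G$), so $L_A(r,G)\ge L_A(r,G')$. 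Taking expectations under the coupling and then infima over $A$ gives that the SBM optimum is at least the $\mathrm{ER}(n,q)$ optimum, and invoking the converse direction of Theorem~\ref{thm:ERCDC} with $p$ replaced by $q$ yields the stated inequality \eqref{eq:thm3conv}.

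The main obstacle is the achievability accounting: one has to track, for each coded multicast opportunity generated by the ER-style scheme, how many of the $(r+1)$ participating vertex pairs are intra-cluster versus inter-cluster, and argue that an averaging/concentration step analogous to the one used in Theorem~\ref{thm:ERCDC} collapses this bookkeeping to the closed form $\bar{p}_n$ in the limit $n\to\infty$. The monotonicity step in the converse is intuitive, but must be stated carefully against the definition of $L_A(r,G)$ as an infimum over all feasible Shuffling schemes, so that removing edges cannot increase the optimum.
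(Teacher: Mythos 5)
Your converse is essentially the paper's own argument: the identical thinning coupling that reduces each intra-cluster edge probability from $p$ to $q$, yielding an $\text{ER}(n,q)$ subgraph, followed by monotonicity of the optimal load under edge removal and the converse of Theorem \ref{thm:ERCDC}. Making the monotonicity lemma explicit is a minor sharpening of what the paper asserts in one line.

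Your achievability takes a genuinely different route. The paper decomposes the SBM graph into $\text{ER}(n_1,p)$, $\text{ER}(n_2,p)$ and a cross-cluster $\text{RB}(n_1,n_2,q)$ component, reuses the cluster-aware allocation $\tA$ of Appendix \ref{subsec:RB}, applies the ER scheme to each intra-cluster piece and the bipartite scheme to the cross piece, and sums the three loads; the mixture $\frac{pn_1^2+pn_2^2+2qn_1n_2}{(n_1+n_2)^2}$ appears as a sum of three separate loads. You instead run the ER scheme monolithically and fold the heterogeneity into an effective probability $\bar{p}_n$. This is cleaner in principle, but the step you defer---``the bookkeeping collapses to $\bar{p}_n$''---is precisely where your claim of an allocation ``without any exploitation of cluster structure'' fails. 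The coded load of a multicast group is $\frac{T}{r}\sum_{s\in\mathcal{S}}\max_{k\in\mathcal{S}\setminus\{s\}}|\mathcal{Z}_{\mathcal{S}\setminus\{k\}}^{k}|$, a maximum of row-sums, and $\Expc[\max_i P_i]$ concentrates at $\bar{p}_n\tilde{g}$ only if all rows have asymptotically equal means. Under an arbitrary cluster-oblivious partition, a row whose Reducer set and batch both lie inside one cluster has mean $p\tilde{g}>\bar{p}_n\tilde{g}$; the maximum then tracks $p$, not $\bar{p}_n$, and the resulting bound is strictly weaker than (\ref{eq:thm3}). To close the gap you must require each batch $\mathcal{B}_{\mathcal{T}}$ and each $\mathcal{R}_k$ to contain a proportional share of both clusters (or randomize the partition and prove balance w.h.p.)---i.e., the allocation must see the clusters after all. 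With that fix the argument completes: Lemma \ref{lemma:MGbound} extends to heterogeneous Bernoulli parameters since $\log(pe^{s'}+1-p)$ is concave in $p$, so the mixed MGF is dominated by the one with parameter $\bar{p}_n$, and the proof of Lemma \ref{lemma:ExpQ} carries over with $p$ replaced by $\bar{p}_n$, using $\bar{p}_n\tilde{g}=\omega(1)$ from $p,q=\omega(1/n^2)$.
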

\begin{remark}
Using (\ref{eq:thm3}) and (\ref{eq:thm3conv}), it can be easily verified that for the stochastic block model, the converse is within a constant factor of achievability if $p=\Theta(q)$. The achievability and converse of Theorem \ref{thm:sbm} are proved in Appendices \ref{subsec:sbm} and \ref{subsec:SBM-converse} respectively. For achievability, we specialize our proposed coded scheme from Section \ref{sec:achvblty} based on the observation that in SBM, the Reducers corresponding to vertices in one cluster depend on the Mappers corresponding to the vertices within the cluster with one probability (due to intra-cluster edges), and on the vertices in the other cluster with another probability (due to cross-cluster edges). For the converse, the key idea is to randomly remove edges from the SBM model such that a larger ER model is obtained, then utilize a coupling argument, and finally use our information theoretic converse bound in Theorem \ref{thm:ERCDC}. 
\end{remark}

\section{Proposed Scheme and Proof of Achievability of Theorem \ref{thm:ERCDC}}
\label{sec:achvblty}
In this section, we first describe our proposed coded scheme for distributed graph analytics, and then leverage it to prove the achievability for the Erdös-Rényi model in Theorem \ref{thm:ERCDC}.

\subsection{Proposed Scheme}
\label{sec:GCDC}
As described in our distributed graph processing framework in Section \ref{sec:setting}, a scheme for distributed implementation of the graph computations consists of subgraph allocation, Reduce allocation, and Shuffling algorithm. We next precisely describe our proposed scheme for a given realization $G$ of the underlying random graph $\mathcal{G}=(\mathcal{V},\mathcal{E})$.

\noindent \textbf{Subgraph Allocation:}
The $n$ files associated with the $n$ vertices of $G$ are first partitioned serially into ${K \choose r}$ batches $\mathcal{B}_1,\mathcal{B}_2,\ldots,\mathcal{B}_{{K \choose r}}$, where $\mathcal{B}_j$ comprises of the files associated with the vertices with IDs in the range $\{(j-1)g+1,(j-1)g+2,\ldots,jg\}$. Here, $g=n/{K \choose r}$ denotes the number of files in each batch. For our example with a graph of $6$ vertices, $3$ servers, and computation load $2$ presented in Section \ref{sec:setting}, the $6$ files are partitioned into ${K\choose r}=3$ batches each of size $g=2$ as follows (see Fig. \ref{fig:eg_scheme}(\subref{fig:eg_scheme_a})):
\begin{align}
    \mathcal{B}_1 = \{1,2\},\nonumber\\
    \mathcal{B}_2 = \{3,4\},\nonumber\\
    \mathcal{B}_3 = \{5,6\}.\nonumber
\end{align}

Each of the ${K \choose r}$ batches of files is associated with a unique set of $r$ servers. Specifically, let $\mathcal{F}_1,\mathcal{F}_2,\ldots,\mathcal{F}_{{K \choose r}}$ denote all possible combinations of the elements of $\{1,2,\ldots,K\}$. Then, each of the servers with indices in $\mathcal{F}_j$ is allocated each of the files contained in batch $\mathcal{B}_j$. 
Thus, server $k \in [K]$ Maps the vertices in $\mathcal{B}_{j}$ if $k\in \mathcal{F}_j$. Equivalently, $\mathcal{B}_j \subseteq \mathcal{M}_k$ if $k\in \mathcal{F}_j$. Therefore, we have the following for the subgraph allocation for server $k$: 
\begin{align}
\mathcal{M}_k=\cup_{j\in \left[{K \choose r}\right],k\in\mathcal{F}_j}\mathcal{B}_j. \nonumber
\end{align}
As each server is present in ${{K-1} \choose {r-1}}$ of the ${K \choose r}$ unique combinations of servers, we have the following for each server $k\in [K]$:
\begin{align}
    |\mathcal{M}_k|
    =
    {{K-1} \choose {r-1}}g
    =
    {{K-1} \choose {r-1}} \frac{n}{{{K} \choose {r}}} 
    =
    \frac{r n}{K}. \nonumber
\end{align}
In Fig. \ref{fig:eg_scheme}(\subref{fig:eg_scheme_a}), we illustrate the subgraph allocation for our running example. $\mathcal{F}_1=\{1,2\}$, $\mathcal{F}_2=\{1,3\}$ and $\mathcal{F}_3=\{2,3\}$. Each of the two files in batch $\mathcal{B}_j$ is assigned to each of the servers in $\mathcal{F}_j$, for $j\in\{1,2,3\}$. Thus, server $1$ is allocated files $\mathcal{B}_1\cup \mathcal{B}_2=\{w_1,w_2,w_3,w_4\}$, server $2$ is allocated  files $\mathcal{B}_1\cup \mathcal{B}_3=\{w_1,w_2,w_5,w_6\}$ and server $3$ is allocated $\mathcal{B}_2\cup \mathcal{B}_3=\{w_3,w_4,w_5,w_6\}$. Thus, $|\mathcal{M}_1|=|\mathcal{M}_2|=|\mathcal{M}_3|=4$.

\noindent \textbf{Reduce Allocation:} The $n$ Reduce functions associated with the $n$ graph vertices are disjointly and uniformly partitioned into $K$ subsets and each subset is assigned exclusively to one server. Specifically, for $k\in[K]$, $|\mathcal{R}_k|=\frac{n}{K}$ and  $\mathcal{R}_k=\{(k-1)\frac{n}{K}+1,(k-1)\frac{n}{K}+2,\ldots,k\frac{n}{K}\}$. In our running example, $\mathcal{R}_1=\{1,2\}$, $\mathcal{R}_2=\{3,4\}$ and $\mathcal{R}_3=\{5,6\}$. 

For notational convenience, we denote our proposed subgraph allocation and Reduce allocation by $A_{\textsf{C}}$. 

\begin{figure}[b!]
    \centering
    \begin{subfigure}[t]{\linewidth}
        \centering \includegraphics[width=.8\linewidth]{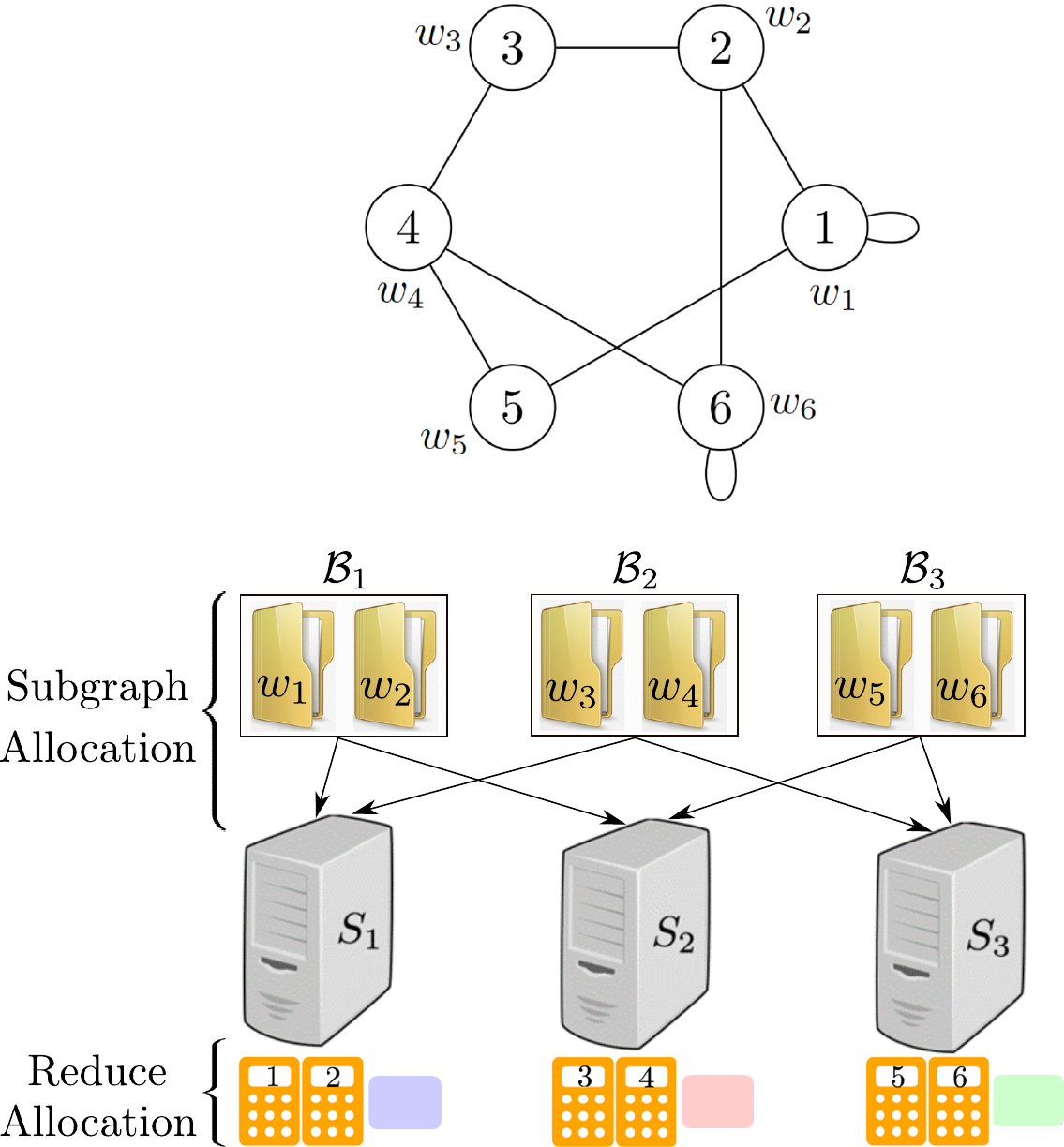} \caption{Illustrating the subgraph allocation and Reduce allocation $A_{\textsf{C}}$ for the example graph with $6$ vertices. The $6$ files are partitioned into $3$ batches and each batch is assigned to a unique subset of $2$ servers. The Reduce functions are partitioned into $3$ sets, one set is assigned to each server.}\label{fig:eg_scheme_a}
    \end{subfigure}\\
    \vspace{.4cm}
    ~ 
    \begin{subfigure}[t]{\linewidth}
        \centering \includegraphics[width=.9\linewidth]{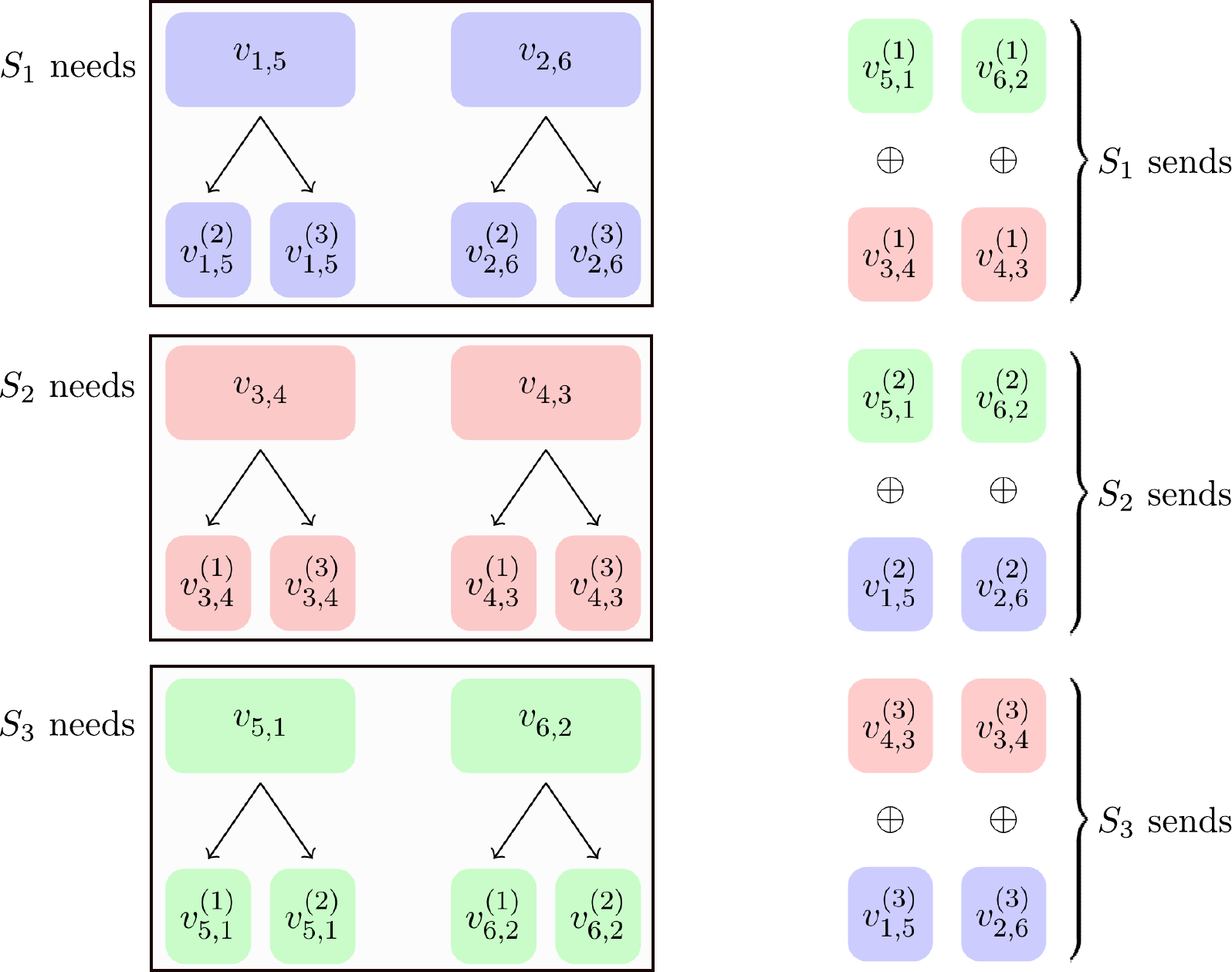}
        \caption{For the subgraph and Reduce allocations $A_{\textsf{C}}$ in Fig. \ref{fig:eg_scheme}(\subref{fig:eg_scheme_a}), we illustrate our proposed coded Shuffle scheme. For each intermediate value needed by a server, each of the remaining two servers is assigned the task of communicating a segment which is one-half of the intermediate value. The servers create a table of the segments that they are assigned to send, with each row corresponding to the intermediate values required exclusively by one of the remaining servers. Each server sends two coded messages, each of which is simultaneously useful for both the remaining servers.}\label{fig:eg_scheme_b}
    \end{subfigure}
     ~     
    \vspace{10pt}
    \caption{Illustration of our proposed scheme.}\label{fig:eg_scheme}
\end{figure}

\noindent\textbf{Coded Shuffle:} As illustrated in Fig. \ref{fig:ex}(\subref{fig:tabmapredex}), the key idea in coded Shuffling is to create coded combinations of locally available intermediate values so that the same message can be useful for many servers simultaneously. Due to the subgraph and Reduce allocation $A_{\textsf{C}}$ described above, every set $\mathcal{F}_j$ of $r$ servers has a unique batch of files  $\mathcal{B}_j$. Thus, all the intermediate values corresponding to the Map computations associated with the files in $\mathcal{B}_j$ are available at every server in $\mathcal{F}_j$ after the Map phase. With this observation, consider without loss of generality the set of servers  $\mathcal{S}=\{1,2,\ldots,r+1\}$. For each server $k\in \mathcal{S}$, let $\mathcal{Z}_{\mathcal{S}\setminus \{k\}}^{k}$ be the set of all intermediate values that are needed by Reduce functions in $k$, and are available exclusively at each server $k'\in \mathcal{S}\setminus \{k\}$, i.e.
\begin{equation}
\label{eq:zset}
\mathcal{Z}_{\mathcal{S}\setminus \{k\}}^{k} = \{ v_{i,j}: (i,j) \in \mathcal{E}, i \in  \mathcal{R}_k, j \in \cap_{k'\in \mathcal{S}\setminus \{k\}}\mathcal{M}_{k'} \}.
\end{equation}
We observe that after the Map phase, server $r+1$ has  $\mathcal{Z}_{\mathcal{S}\setminus \{k\}}^{k}$ for $k\in\{1,\ldots,r\}$. Furthermore, server $1$ has  $\mathcal{Z}_{\mathcal{S}\setminus \{k\}}^{k}$ for $k\in\{2,\ldots,r\}$, server $2$ has  $\mathcal{Z}_{\mathcal{S}\setminus \{k\}}^{k}$ for $k\in\{1,3,\ldots,r\}$, and so on. Therefore, server $r+1$ can create a coded message by selecting one intermediate value each from $\mathcal{Z}_{\mathcal{S}\setminus \{k\}}^{k}$ for $k\in\{1,\ldots,r\}$, and taking a XOR of them. The coded message is simultaneously useful for the servers $\{1,\ldots,r\}$ as each of them can XOR out \textit{its} own missing intermediate value as it has the remaining intermediate values associated with the coded message. Similar arguments hold for the coded messages from other servers within $\mathcal{S}$. 

In light of the above arguments, for each $k\in \mathcal{S}$, each intermediate value $v_{i,j} \in \mathcal{Z}_{\mathcal{S}\setminus \{k\}}^{k}$ is evenly split into $r$ segments $v^{(1)}_{i,j},\cdots,v^{(r)}_{i,j}$, each of size $\frac{T}{r}$ bits. Each segment is associated with a distinct server in $\mathcal{S}\setminus\{k\}$, where the segment assignment is based on the order of the indices of the $r$ servers $\mathcal{S}\setminus\{k\}$. Therefore, $\mathcal{Z}_{\mathcal{S}\setminus \{k\}}^{k}$ is evenly partitioned to $r$ sets, which are denoted by  $ \mathcal{Z}_{\mathcal{S}\setminus \{k\},s}^{k}$ for $s \in \mathcal{S}\setminus\{k\}$. Depending on the connectivity of $G$, the number of intermediate values in $\mathcal{Z}_{\mathcal{S}\setminus \{k\}}^{k}$ shall vary, and the maximum possible size of $\mathcal{Z}_{\mathcal{S}\setminus \{k\}}^{k}$ is $\tilde{g}=g\frac{n}{K}=\frac{n^2}{K{{K}\choose{r}}}$. Each server $s \in \mathcal{S}$ creates an $r \times \tilde{g}$ table and fills that out with segments which are associated with it. Each row of the table is \textit{filled from left} by the segments in one of the sets $\mathcal{Z}_{\mathcal{S}\setminus \{k\},s}^{k},$ where $k\in \mathcal{S}\setminus \{s\}$ (see Fig. \ref{fig:table}). Then, server $s$ broadcasts the XOR of all the segments in each non-empty column of the table, where for each non-empty column, \textit{the empty entries are zero padded}. Clearly, there exist at most $\tilde{g}$ of such coded messages. The process is carried out similarly for every other subset $\mathcal{S}\subseteq [K]$ of servers with $|\mathcal{S}|=r+1$. 

After the Shuffle phase, for each multicast group of $r+1$ servers, all but one intermediate values contributed in each coded message are locally available. Moreover, all possible subsets of multicast servers have sent their corresponding messages. Therefore, each server can recover all of the intermediate values associated with its assigned set of Reduce functions using the received coded messages and the locally computed intermediate values. Thus, our proposed coded Shuffling scheme is feasible, i.e. for any given graph, and subgraph and Reduce allocation $A_{\textsf{C}}$, our proposed Shuffling enables each server to compute all the Reduce functions assigned to it. 

\begin{remark}
The proposed scheme carefully aligns and combines the existing intermediate values to benefit from the coding opportunities. This resolves the issue posed by the asymmetry in the data requirements  of the Reducers which is one of the main challenges in moving from the general MapReduce framework in \cite{li2017fundamental} to graph analytics.
\end{remark}

In Fig. \ref{fig:eg_scheme}(\subref{fig:eg_scheme_b}), every intermediate value in $\mathcal{Z}_{\{1,2\}}^{3} = \{ v_{5,1},v_{6,2} \}$ is split into $r=2$ segments, each associated with a distinct server in $\{1,2\}$. This is done similarly for servers 1 and 2. Then, servers 1, 2, and 3 broadcast their coded messages   $X_1=\{v^{(1)}_{5,1}\oplus v^{(1)}_{4,3}, v^{(1)}_{3,4}\oplus v^{(1)}_{6,2}\}$, $X_2=\{v^{(2)}_{5,1}\oplus v^{(1)}_{1,5},v^{(2)}_{6,2}\oplus v^{(1)}_{2,6}\}$, and $X_3=\{v^{(2)}_{4,3}\oplus v^{(2)}_{1,5},v^{(2)}_{3,4}\oplus v^{(2)}_{2,6}\}$, respectively. All three servers can recover their missing intermediate values. For instance, server 3 needs $v_{5,1}$ to carry out the Reduce function associated with vertex $5$. Since it has already Mapped vertices $3$ and $5$, intermediate values $v_{4,3}$ and $v_{1,5}$ are available locally. Server 3 can recover $v^{(1)}_{5,1}$ and $v^{(2)}_{5,1}$ from $v^{(1)}_{5,1}\oplus v^{(1)}_{4,3}$ and $v^{(2)}_{5,1}\oplus v^{(1)}_{1,5}$, respectively. As each server sends $2$ coded messages to other servers and each coded message is half the size of an intermediate value, therefore, the overall normalized communication load is $\frac{3}{36}$, which is two times better than the normalized communication load for uncoded Shuffling.

\subsection{Proof of Achievability of Theorem \ref{thm:ERCDC}}\label{subsec:ER}
We now analyze the performance of our proposed coded scheme in Section \ref{sec:GCDC} for the Erdös-Rényi random graph model to prove the achievability of Theorem \ref{thm:ERCDC}. For our proposed subgraph and Reduce allocation $A_{\textsf{C}}$, we first compute the average communication for uncoded Shuffle where no coding is utilized during the Shuffle phase. 

\noindent{Uncoded Shuffle:} Given the subgraph and Reduce allocation $A_{\textsf{C}}$, consider a server $k\in [K]$. Due to symmetry, the total expected communication load is sum of the communication loads of each server. Hence we can focus on finding the communication load of server $1$. Note that there are $n/K$ Reducers assigned to server $1$, and $\frac{r n}{K}$ Mappers assigned to server $1$. Therefore, for each Reducer in server $1$, the expected communication required is $(p n-p\frac{r n}{K})T$. Summing over the expected communication loads for all the Reducers in server $1$ and appropriate normalization, the total expected communication load for server $1$ is $\frac{n}{K}(p n-p\frac{r n}{K})T$. Summing over all the $K$ servers, we get the average normalized communication load for the uncoded Shuffle as follows:
\begin{align}
\bar{L}^\textsf{UC}_{A_{\textsf{C}}}
&\coloneqq
\mathbb{E}_{\mathcal{G}}[L^\textsf{UC}_{A_{\textsf{C}}}(r,\mathcal{G})]\nonumber\\
&=
K \frac{n}{K} \left(pn-p\frac{rn}{K} \right)T\frac{1}{n^2 T} \nonumber\\
&=
p \left(1-\frac{r}{K} \right), \nonumber
\end{align}
where $L^\textsf{UC}_{A_{\textsf{C}}}(r,G)$ denotes the normalized communication load for uncoded Shuffle for the graph realization $G$ of the Erdös-Rényi random graph model $\mathcal{G}$. 

We now apply our proposed coded Shuffle scheme and compute the induced average communication load. Without loss of generality, we analyze our algorithm by a generic argument for servers $\mathcal{S}=\{1,\cdots,{r+1}\}$ which can be similarly applied for any other set of servers $\mathcal{S}$ with $|\mathcal{S}|=r+1$, due to the symmetric structure induced by the graph model and subgraph allocation and Reduce allocation $A_{\textsf{C}}$. Denote $r+1$ servers as $s_1,\cdots,s_{r+1}$, and consider the messages that $s_1$ is assigned to send within the multicast group $\mathcal{S}$, the coded messages that are sent by other servers within $\mathcal{S}$ are also created similarly. As described in Section \ref{sec:GCDC} and illustrated in Fig. \ref{fig:table}, server $s_{1}$ creates a table of intermediate value segments for transmission. In this table, each row is filled from the left, and for $i\in [r]$, $i$'th row contains the allocated segments for the intermediate values in the set $\mathcal{Z}_{\mathcal{S}\setminus\{s_{i+1}\}, s_{1}}^{s_{i+1}}$. The number of segments in $\mathcal{Z}_{\mathcal{S}\setminus\{s_{i+1}\}, s_{1}}^{s_{i+1}}$, denoted by $\tilde{g}_i$, depends on the connectivity of the graph $G$ and is upper bounded by $\tilde{g}$, the total number of intermediate values in $\mathcal{Z}_{\mathcal{S}\setminus\{s_{i+1}\}}^{s_{i+1}}$ for a completely connected graph. Server $s_{1}$ broadcasts at most $\tilde{g}_{\text{max}}=\max(\tilde{g}_1,\tilde{g}_2,\ldots,\tilde{g}_r)$ coded messages $X^1,\cdots,X^{\tilde{g}_{\text{max}}}$, zero padding the empty entries in the non-empty columns. These coded messages are simultaneously and exclusively useful for the servers $s_2,\cdots,s_{r+1}$. For each non-empty column $j \in [\tilde{g}_{\text{max}}]$, $X^j$ is XOR of at most $r$ non-zero segments of size $\frac{T}{r}$ bits, associated with server $s_{1}$. More formally, for each non-empty column $j \in [\tilde{g}_{\text{max}}]$, we have the following:
\begin{equation}
\label{eq:codMessage}
    X^j = \bigoplus_{i=1}^{r} v^{(1)}_{\alpha(i,j)}.
\end{equation}
In (\ref{eq:codMessage}), for $i\in[r]$ and $j\in[\tilde{g}_i]$, we have used $v^{(1)}_{\alpha(i,j)}$ to denote the non-zero segment in the table in $i$'th row and $j$'th column, while for $j\in\{\tilde{g}_i+1,\tilde{g}_i+2,\ldots,\tilde{g}\}$, $v^{(1)}_{\alpha(i,j)}$ denotes the zero padding segment.   

\begin{figure}[b!]
\begin{center}
\begin{tikzpicture}[scale=0.9, transform shape]

\node  at (0,7.5) {$X^1$};
\node  at (0,7) {$\verteq$};
\draw (0,6) node[minimum size=0.5cm,circle,fill=red!20] {\tiny{$v^{(1)}_{\alpha(1,1)}$}};
\draw (0,4) node[minimum size=0.5cm,circle,fill=red!20] {\tiny{$v^{(1)}_{\alpha(2,1)}$}};
\draw (0,0) node[minimum size=0.5cm,circle,fill=red!20] {\tiny{$v^{(1)}_{\alpha(r,1)}$}};
\node  at (0,5) {$\oplus$};
\node  at (0,3) {$\oplus$};
\node  at (0,1) {$\oplus$};
\node  at (0,2) {\vdots};

\node  at (1.5,7.5) {$X^2$};
\node  at (1.5,7) {$\verteq$};
\draw (1.5,6) node[minimum size=0.5cm,circle,fill=red!20] {\tiny{$v^{(1)}_{\alpha(1,2)}$}};
\draw (1.5,4) node[minimum size=0.5cm,circle,draw,dashed] {\tiny{$v^{(1)}_{\alpha(2,2)}$}};
\draw (1.5,0) node[minimum size=0.5cm,circle,fill=red!20] {\tiny{$v^{(1)}_{\alpha(r,2)}$}};
\node  at (1.5,5) {$\oplus$};
\node  at (1.5,3) {$\oplus$};
\node  at (1.5,1) {$\oplus$};
\node  at (1.5,2) {\vdots};

\node  at (3,7.5) {$X^3$};
\node  at (3,7) {$\verteq$};
\draw (3,6) node[minimum size=0.5cm,circle,fill=red!20] {\tiny{$v^{(1)}_{\alpha(1,3)}$}};
\draw (3,4) node[minimum size=0.5cm,circle,draw,dashed] {\tiny{$v^{(1)}_{\alpha(2,3)}$}};
\draw (3,0) node[minimum size=0.5cm,circle,draw,dashed] {\tiny{$v^{(1)}_{\alpha(r,3)}$}};
\node  at (3,5) {$\oplus$};
\node  at (3,3) {$\oplus$};
\node  at (3,1) {$\oplus$};
\node  at (3,2) {\vdots};

\node  at (6,7.5) {$X^{\tilde{g}}$};
\node  at (6,7) {$\verteq$};
\draw (6,6) node[minimum size=0.5cm,circle,draw,dashed] {\tiny{$v^{(1)}_{\alpha(1,\tilde{g})}$}};
\draw (6,4) node[minimum size=0.5cm,circle,draw,dashed] {\tiny{$v^{(1)}_{\alpha(2,\tilde{g})}$}};
\draw (6,0) node[minimum size=0.5cm,circle,draw,dashed] {\tiny{$v^{(1)}_{\alpha(r,\tilde{g})}$}};
\node  at (6,5) {$\oplus$};
\node  at (6,3) {$\oplus$};
\node  at (6,1) {$\oplus$};
\node  at (6,2) {\vdots};

\node  at (4.5,6) {\ldots};
\node  at (4.5,4) {\ldots};
\node  at (4.5,0) {\ldots};

\node  at (-1.5,6) {$P_1:$};
\node  at (-1.5,4) {$P_2:$};
\node  at (-1.5,0) {$P_r:$};

\end{tikzpicture}
\end{center}
\caption{Creating coded messages by aligning the associated intermediate value segments.  }\label{fig:table}
\vspace{.1cm}
\end{figure}

Let $\operatorname{Bern}(p)$ random variable $E_{\alpha(i,j)}$ indicate the existence of the edge $\alpha(i,j)\in \mathcal{V}\times \mathcal{V}$, i.e. $E_{\alpha(i,j)}=1$, if $\alpha(i,j)\in \mathcal{E}$, and $E_{\alpha(i,j)}=0$, otherwise. Clearly, for all vertices $i,j,t,u\in \mathcal{V}$, $E_{\alpha(i,j)}$ is independent of $E_{\alpha(t,u)}$ if $\alpha(i,j)$ and $\alpha(t,u)$ do not represent the same edge, and $E_{\alpha(i,j)}=E_{ \alpha(t,u)}$, otherwise. 
For $i \in [r]$, the random variable $P_i$ is defined as
\begin{equation}\label{eq:Pjdef}
         P_i=\sum_{j=1}^{\tilde{g}} E_{\alpha(i,j)},
\end{equation}
i.e. each $P_i$ is sum of $\tilde{g}$ possibly dependent $\operatorname{Bern}(p)$ random variables. Note that $P_i$'s are not independent in general. By careful alignment of present intermediate values (Fig. \ref{fig:table}), $s_1$ broadcasts $Q$ coded messages each of size $\frac{T}{r}$ bits, where $Q=\max_{i \in [r]} P_i$. Thus, the total coded communication load sent from server $s_1$ exclusively for servers $s_2,\cdots,s_{r+1}$ is $\frac{T}{r}Q$ bits. By similar arguments for other sets of servers, we can characterize the average normalized coded communication load of the proposed scheme as follows:
\begin{align}\label{eq:12}
     \bar{L}^\textsf{C}_{A_{\textsf{C}}}&\coloneqq \mathbb{E}_{\mathcal{G}}[L^\textsf{C}_{A_{\textsf{C}}}(r,\mathcal{G})] 
     =
     \frac{1}{r n^2} K {K-1 \choose r } \Expc[Q],
\end{align}
where $L^\textsf{C}_{A_{\textsf{C}}}(r,G)$ denotes the normalized communication load for the proposed coded Shuffle for the graph realization $G$ of the Erdös-Rényi random graph model $\mathcal{G}$.

 The following lemma asymptotically upper bounds $\Expc[Q]$ and the proof is provided in Section \ref{subsec:proof}.

\begin{lemma}\label{lemma:ExpQ}
For $\text{ER}(n,p)$ graphs with $p = \omega(\frac{1}{n^2})$, we have 
\begin{equation}\label{eq:14}
    \Expc[Q] \leq p\tilde{g} + o(p\tilde{g}). \nonumber
\end{equation}
\end{lemma}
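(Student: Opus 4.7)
The plan is to show that $Q=\max_{i\in[r]}P_i$ concentrates from above around the common mean $p\tilde{g}$, exploiting the facts that $r$ and $K$ are constants while $p\tilde{g}\to\infty$ by hypothesis.

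First I would observe that by linearity of expectation, $\Expc[P_i]=p\tilde{g}$ for every $i\in[r]$, regardless of any dependence among the summands. Using the pointwise decomposition $Q\leq p\tilde{g}+\sum_{i=1}^{r}(P_i-p\tilde{g})_+$ together with $(x)_+\leq|x|$, taking expectations yields
\begin{equation*}
\Expc[Q]\;\leq\;p\tilde{g}+\sum_{i=1}^{r}\Expc\big[|P_i-p\tilde{g}|\big]\;\leq\;p\tilde{g}+\sum_{i=1}^{r}\sqrt{\mathrm{Var}(P_i)},
\end{equation*}
where the last step is Jensen's inequality. It therefore suffices to prove $\mathrm{Var}(P_i)=O(p\tilde{g})$ for each $i$.

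To control $\mathrm{Var}(P_i)$ I would inspect row $i$ of Fig.~\ref{fig:table}. Its entries correspond to ordered pairs $(u,v)\in\mathcal{R}_{s_{i+1}}\times\mathcal{B}_{\mathcal{S}\setminus\{s_{i+1}\}}$, and in the ER$(n,p)$ model any two such Bernoullis are either identical (if they represent the same undirected edge $\{u,v\}$) or independent. A collision between distinct ordered index pairs can occur only when both $(u,v)$ and $(v,u)$ lie in $\mathcal{R}_{s_{i+1}}\times\mathcal{B}_{\mathcal{S}\setminus\{s_{i+1}\}}$, forcing both endpoints into the intersection $\mathcal{R}_{s_{i+1}}\cap\mathcal{B}_{\mathcal{S}\setminus\{s_{i+1}\}}$, which has size at most $g=n/\binom{K}{r}$. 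Hence the number of colliding pairs is at most $g^2$, and since $g^2/\tilde{g}=K/\binom{K}{r}=O(1)$ for constant $K,r$, we obtain $\mathrm{Var}(P_i)\leq(\tilde{g}+2g^2)p(1-p)=O(p\tilde{g})$.

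Putting the pieces together, $\Expc[Q]\leq p\tilde{g}+O(\sqrt{p\tilde{g}})$. The assumption $p=\omega(1/n^2)$ combined with $\tilde{g}=\Theta(n^2)$ implies $p\tilde{g}\to\infty$, hence $\sqrt{p\tilde{g}}=o(p\tilde{g})$, which completes the argument. The main technical care lies in the variance bookkeeping, since the undirected graph may identify distinct ordered index pairs as a single edge; the hard part is verifying that the batched subgraph allocation keeps the number of such identifications at most linear in $\tilde{g}$, so that the resulting positive correlations do not inflate $\mathrm{Var}(P_i)$ beyond the binomial scale $p\tilde{g}$.
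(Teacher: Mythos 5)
Your proof is correct, and it takes a genuinely different route from the paper's. The paper bounds $\Expc[Q]$ via the soft-max/Chernoff technique: it first bounds the moment generating function $\Expc[e^{s'P_i}]\leq(pe^{2s'}+1-p)^{\tilde{g}/2}$ by grouping the pairwise-identified Bernoullis into independent $2\times\operatorname{Bern}(p)$ summands (Lemma~\ref{lemma:MGbound}), then uses $e^{s'\Expc[Q]}\leq\sum_{i=1}^r\Expc[e^{s'P_i}]$, takes logarithms, Taylor-expands, and optimizes the free parameter to get $\Expc[Q]\leq p\tilde{g}+2\sqrt{\tilde{g}p\bar p\log r}+o(\sqrt{\tilde{g}p})$. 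You instead use the elementary bound $\Expc[\max_iP_i]\leq p\tilde{g}+\sum_i\Expc[(P_i-p\tilde{g})_+]\leq p\tilde{g}+\sum_i\sqrt{\mathrm{Var}(P_i)}$, reducing everything to a variance estimate. Your variance bookkeeping is sound: since only within-row collisions matter for $\mathrm{Var}(P_i)$ and, as the paper also notes, the dependence is restricted to pairwise identifications of ordered pairs $(u,v)$ and $(v,u)$ with both endpoints in $\mathcal{R}_{s_{i+1}}\cap\mathcal{B}_{\mathcal{S}\setminus\{s_{i+1}\}}$, the covariance contribution is $O(g^2\,p(1-p))=O(p\tilde{g})$ (in fact, the paper's observation that there are at most $\lfloor\tilde{g}/2\rfloor$ identified pairs already gives $\mathrm{Var}(P_i)\leq2\tilde{g}p(1-p)$ without your $g^2$ refinement). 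Both arguments then close identically via $p\tilde{g}=\omega(1)$, so $\sqrt{p\tilde{g}}=o(p\tilde{g})$. What the paper's MGF route buys is a sharper second-order term, of order $\sqrt{\log r}\cdot\sqrt{p\tilde{g}}$ versus your $r\sqrt{p\tilde{g}}$; since $r\leq K$ is a constant this is immaterial for the lemma, and your argument is shorter and avoids the Taylor expansion and the optimization over $s$.
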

Putting (\ref{eq:12}) and Lemma \ref{lemma:ExpQ} together, we have
\begin{equation}
\label{eq:asymptotachievb}
    L^*(r) \leq \bar{L}^\textsf{C}_{A_{\textsf{C}}} \leq  \frac{1}{r} p \left(1-\frac{r}{K}\right) + o(p),\nonumber
\end{equation}
hence the achievability claimed in Theorem \ref{thm:ERCDC} is proved. Finally, we note that as explained in the uncoded Shuffle algorithm, the average normalized uncoded communication load of the proposed scheme is
$
    \bar{L}^\textsf{UC}_{A_{\textsf{C}}}=p\left(1-\frac{r}{K}\right),
$
which implies that our scheme achieves an asymptotic gain of $r$.

\begin{remark}
As we next show in the proof of Lemma \ref{lemma:ExpQ}, the regime $p=\omega(1/n^2)$ is essential in order to have $p \tilde{g}=\omega(1)$. As  $\tilde{g}=\frac{n^2}{K{{K}\choose{r}}}=\Theta(n^2)$ is a deterministic function of $n$, the regime $p=\omega(1/n^2)$ is needed to get the achievability and asymptotic optimality of Theorem 1.
\end{remark}


\subsection{Proof of Lemma \ref{lemma:ExpQ}}\label{subsec:proof}
Before proving Lemma \ref{lemma:ExpQ}, we first present the following lemma that will be used in our proof.

\begin{lemma}\label{lemma:MGbound}
For random variables $\{P_i\}_{i=1}^{r}$ defined in (\ref{eq:Pjdef}), their moment generating functions for $s'>0$ can be bounded by
\begin{equation}
    \Expc \big[e^{s' P_i}\big] \leq (p e^{2s'}+1-p)^{\tilde{g}/2}. \nonumber
\end{equation}
\end{lemma}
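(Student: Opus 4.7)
The plan is to exploit the fact that the only dependencies among $\{E_{\alpha(i,j)}\}_{j=1}^{\tilde{g}}$ arise from the undirected nature of the graph: the directed pairs $\alpha(i,j)$ and $\alpha(i,j')$ yield identical Bernoullis if and only if they represent the same undirected edge (i.e.\ they are reverses of each other), and are independent otherwise. Since each undirected edge has only two directed representations, every Bernoulli summand in $P_i$ can appear with multiplicity at most two.

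First I would partition the index set $\{1,\dots,\tilde{g}\}$ for the fixed row $i$ into matched pairs $\{j,j'\}$ with $\alpha(i,j) = (u,v)$ and $\alpha(i,j') = (v,u)$, together with leftover singletons (those $j$ whose reverse does not appear in the same row). Letting $m$ denote the number of matched pairs, I can then rewrite
\begin{equation*}
P_i = 2\sum_{k=1}^{m} \tilde{E}_k + \sum_{\ell=1}^{\tilde{g}-2m} \tilde{E}'_\ell,
\end{equation*}
where the $\tilde{E}_k$'s and $\tilde{E}'_\ell$'s correspond to pairwise distinct undirected edges and are therefore mutually independent $\operatorname{Bern}(p)$ variables.

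Next, by independence I can factor the MGF:
\begin{equation*}
\Expc\bigl[e^{s' P_i}\bigr] = \bigl(p e^{2s'}+1-p\bigr)^{m}\bigl(p e^{s'}+1-p\bigr)^{\tilde{g}-2m}.
\end{equation*}
Then I would establish the elementary inequality $(pe^{s'}+1-p)^{2} \leq pe^{2s'}+1-p$ for every $s'>0$, which follows by expanding both sides and observing that the difference equals $p(1-p)(e^{s'}-1)^{2}\geq 0$. (Equivalently, this is Jensen's inequality for $x\mapsto x^{2}$ applied to the two-point distribution $\{1,e^{s'}\}$ with weights $\{1-p,p\}$.) Raising this inequality to the $(\tilde{g}-2m)/2$-th power (valid even when $\tilde{g}-2m$ is odd, thanks to the fractional exponent) and multiplying by $(pe^{2s'}+1-p)^{m}$ collapses the exponent to $\tilde{g}/2$, giving the claimed bound.

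I expect the main subtlety to lie in cleanly justifying that multiplicities in $P_i$ are bounded by two -- a structural fact that combines the undirected graph with the specific form of $\mathcal{Z}_{\mathcal{S}\setminus\{s_{i+1}\},s_1}^{s_{i+1}}$ in the allocation $A_{\textsf{C}}$ -- and in properly bookkeeping the leftover singletons. The remaining algebra (factoring the MGF and invoking the pairwise inequality) is routine.
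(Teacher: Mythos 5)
Your proposal is correct and follows essentially the same route as the paper: decompose $P_i$ into independent $\operatorname{Bern}(p)$ singletons and $2\times\operatorname{Bern}(p)$ merged pairs, factor the MGF, and collapse the exponent to $\tilde{g}/2$ via the pairwise inequality $\big(pe^{s'}+1-p\big)^2 \leq pe^{2s'}+1-p$. The only (immaterial) difference is that you verify this inequality by direct expansion, obtaining the difference $p(1-p)(e^{s'}-1)^2\geq 0$ (equivalently Jensen), whereas the paper proves it in a separate lemma via a monotonicity argument.
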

\begin{proof}
 Consider a generic random variable of the form (\ref{eq:Pjdef})
 \begin{equation}
    P = \sum_{j=1}^{\tilde{g}} E_j, \nonumber
\end{equation}
where $E_j$'s are $\operatorname{Bern}(p)$ and possibly dependent. However, although $E_j$'s may not be all independent, but dependency is restricted to pairs of $E_j$'s. In other words, for all $1 \leq j \leq \tilde{g}$, $E_j$ is either independent of all $E_{[\tilde{g}] \setminus \{j\}}$, or is equal to $E_\ell$ for some $\ell\in[\tilde{g}] \setminus \{j\}$ and independent of all $E_{[\tilde{g}]\setminus \{j,\ell\}}$. By merging dependent pairs, we can write 
\begin{equation}
    P = \sum_{j=1}^{\tilde{g}-J} F_j, \nonumber
\end{equation}
where
\begin{enumerate}[(i)]
    \item $F_j$'s are independent,
    \item $\tilde{g}-2J$ of $F_j$'s are $\operatorname{Bern}(p)$,
    \item $J$ of $F_j$'s are $2 \times\operatorname{Bern}(p)$,
  \end{enumerate}
  for some integer $0 \leq J \leq \floor{\frac{\tilde{g}}{2}}$. Now, we can bound the moment generating function of $P$. For $s'>0$,
  \begin{align}
      \Expc\big[e^{s'P}\big] &= \Expc\Big[e^{s'\sum_{j=1}^{J} F_j}\Big]\nonumber\\
                    &= \prod_{j=1}^{\tilde{g}-J} \Expc\big[e^{s'F_j}\big]\nonumber\\
                    &= \big(p e^{s'}+1-p \big)^{\tilde{g}-2J} \big(p e^{2s'}+1-p \big)^{J}\nonumber\\
                    &= \Big[\big(p e^{s'}+1-p \big)^2\Big]^{\tilde{g}/2-J} \big(p e^{2s'}+1-p \big)^{J}\nonumber\\
                    &\overset{(a)}{\leq} \big(p e^{2s'}+1-p \big)^{\tilde{g}/2-J} \big(p e^{2s'}+1-p \big)^{J}\nonumber\\
                    &= \big(p e^{2s'}+1-p \big)^{\tilde{g}/2}, \nonumber
  \end{align}
  where inequality $(a)$ is obtained using Lemma \ref{lemma:ineq} (proof available in Appendix \ref{sub:applemma}).
\end{proof}

We now complete the proof of Lemma \ref{lemma:ExpQ}. For any $s'>0$, we can write 
\begin{align}
    e^{s'\Expc[Q]} &\leq \Expc\big[e^{s'Q}\big]\nonumber\\
    &= \Expc\Big[ \max_{i=1,\cdots,r} e^{s'P_i}\Big]\nonumber\\
    &\leq \Expc \Big[ \sum_{i=1}^{r} e^{s'P_i}  \Big]\nonumber\\
    &= \sum_{i=1}^{r} \Expc\big[e^{s'P_i}\big]\nonumber\\
    & \leq r(p e^{2s'}+1-p)^{\tilde{g}/2}, \nonumber
\end{align}
where the last inequality follows from Lemma \ref{lemma:MGbound}. Taking logarithm from both sides yields 
\begin{align}\label{eq:33}
    \Expc[Q] \leq \frac{1}{s'}\log(r) + \frac{\tilde{g}}{2s'}\log(p e^{2s'}+1-p).
\end{align}
Let us substitute $s=2s'$ in (\ref{eq:33}). Then, 
\begin{align}\label{eq:18}
    \Expc[Q] \leq \frac{1}{s}\log(r^2) + \frac{\tilde{g}}{s}\log(p e^{s}+1-p),
\end{align}
for any $s>0$. Let $\bar{p}=1-p$ and pick
\begin{equation}
    s_* = 2 \sqrt{\frac{\log (r)}{\tilde{g} p\bar{p}}}. \nonumber
\end{equation}
We proceed with evaluation of the right hand side (RHS) of (\ref{eq:18}) at $s=s_*$. We first recall the following Taylor series
\begin{eqnarray}
  \log(1+x)=x-\frac{x^2}{2}+\frac{x^3}{3}-\cdots, &  \text{for } x\in (-1,1],\nonumber\\
  e^x = 1+x+\frac{x^2}{2}+\frac{x^3}{3!}+\cdots,  &  \text{for } x\in \mathbb{R}.\nonumber
\end{eqnarray}
Let $x=p(e^{s_*}-1)$. It is easy to check that for $p = \omega (\frac{1}{n^2})$, we have $x \to 0$ and $s_* \to 0$ as $n \to \infty$. Therefore, for $n \to \infty$ we can write
\begin{align}
    &\quad \log (p e^{s_*}+1-p) \nonumber\\
    &= \log(x+1) \nonumber\\
    &= x-\frac{x^2}{2}+\frac{x^3}{3}-\cdots\nonumber\\
    &= p(e^{s_*}-1) -\frac{p^2(e^{s_*}-1)^2}{2}+\frac{p^3(e^{s_*}-1)^3}{3}-\cdots\nonumber\\
    &= p\big({s_*}+\frac{{s_*}^2}{2}+\frac{{s_*}^3}{3!}+\cdots\big) -\frac{p^2}{2}\big({s_*}+\frac{{s_*}^2}{2}+\frac{{s_*}^3}{3!}+\cdots\big)^2 \nonumber\\
    &\quad +\frac{p^3}{3}\big({s_*}+\frac{{s_*}^2}{2}+\frac{{s_*}^3}{3!}+\cdots\big)^3-\cdots\nonumber\\
    &=p{s_*}+\frac{p\bar{p}}{2}s^2_*+o(ps^2_*). \nonumber
\end{align}
Putting everything together, we have
\begin{align}
     \Expc[Q] &\leq \frac{1}{s_*}\log(r^2) + \frac{\tilde{g}}{s_*}\log(p e^{s_*}+1-p)\nonumber\\
     &= \frac{1}{s_*}\log(r^2) + \frac{\tilde{g}}{s_*}\big( p{s_*}+\frac{p\bar{p}}{2}s^2_*+o(ps^2_*)\big)\nonumber\\
     &= \frac{1}{s_*}\log(r^2) + \tilde{g} p +\frac{\tilde{g}p\bar{p}}{2}s_*+o(\tilde{g}ps_*)\nonumber\\
     &= \tilde{g}p + 2\sqrt{\tilde{g}p\bar{p}\log(r)}+o\left(\sqrt{\tilde{g}p}\right). \nonumber
\end{align}
Recall that $\tilde{g}=\frac{n^2}{K{{K}\choose{r}}}$ which is a deterministic function of $n$. Therefore, we choose $p = \omega(\frac{1}{n^2})$ to have $\tilde{g}p = \omega(1)$ and thus $\sqrt{\tilde{g}p\bar{p}\log(r)}=\Theta\left(\sqrt{\tilde{g}p}\right)=o\left(\tilde{g}p\right)$. Therefore,
$
    \Expc[Q] \leq p\tilde{g} + o(p\tilde{g}),
$
as $n \to \infty$.

\section{Converse for the Erdös-Rényi Model}
\label{sec:ERconverse}
In this section, we prove the asymptotic optimality of our proposed coded scheme for the Erdös-Rényi model, by leveraging the techniques employed in \cite{li2017fundamental}. More precisely, we complete the proof of Theorem \ref{thm:ERCDC} by deriving the lower bound on the best average communication load for the Erdös-Rényi model, that matches the achievability in (\ref{eq:asymptotachievb}).

Let $\mathcal{G}$ be an $\text{ER}(n,p)$ random graph and consider a subgraph and Reduce allocation $A=(\mathcal{M},\mathcal{R})\in \mathcal{A}(r)$, where $\sum_{k=1}^{K}|\mathcal{M}_k|=r n$ and $|\mathcal{R}_k|=\frac{n}{K}$, for all $k\in [K]$. We denote the number of files that are Mapped at $j$ vertices under Map assignment $\mathcal{M}$, as $a^j_{\mathcal{M}}$, for all $j\in [K]$.  The following lemma holds.
\begin{lemma}
$\Expc_{\mathcal{G}}[L_{A}(r,\mathcal{G})] \geq p  \sum_{j=1}^{K} \frac{a^j_{\mathcal{M}}}{n}\frac{K-j}{Kj}.$
\end{lemma}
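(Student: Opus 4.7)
The plan is to mirror the Coded Distributed Computing (CDC) converse argument of \cite{li2017fundamental}, applied after conditioning on the graph realization $\mathcal{G}=G$. First, for any subset $\mathcal{S}\subseteq[K]$, a cut-set/entropy bound on the joint entropy of the Shuffle messages $\{X_k:k\in\mathcal{S}\}$ gives that $\sum_{k\in\mathcal{S}}c_k$ is at least a graph-dependent quantity measuring the intermediate-value data that reducers in $\mathcal{S}$ demand but cannot be obtained from the Map outputs already available inside $\mathcal{S}$. This is the exact analog of the entropy step in \cite[Appendix A]{li2017fundamental}, with the twist that the ``demand set'' is now determined by the edges $\mathcal{E}(G)$ rather than by a complete bipartite file-reducer incidence.

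Second, peeling servers off $\mathcal{S}$ one at a time and iterating this cut-set bound --- the CDC induction --- yields a per-realization inequality of the form
\begin{equation*}
\sum_{k\in\mathcal{S}} c_k \;\geq\; T\sum_{j=1}^{|\mathcal{S}|}\frac{|\mathcal{S}|-j}{j}\,N^{\mathcal{S}}_{j}(G),
\end{equation*}
where $N^{\mathcal{S}}_{j}(G)$ counts edges $(i,v)\in\mathcal{E}(G)$ such that the reducer of $i$ lies in $\mathcal{S}$ and $w_v$ is mapped by exactly $j$ servers in $\mathcal{S}$. In the deterministic CDC setting the analogous counts are multiples of file-multiplicities $a_j$; here the role of the $Q$ output functions is played by the (random) neighbor sets of vertices, and the counts become graph-dependent.

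Third, I would average this inequality over all $\binom{K}{s}$ subsets $\mathcal{S}$ of a common size $s\in[K]$ and then take the specific convex combination across $s$ used in \cite{li2017fundamental}, which via standard binomial identities converts the per-subset coefficient $\tfrac{|\mathcal{S}|-j}{j}$ into the target coefficient $\tfrac{K-j}{Kj}$ attached to each multiplicity $j$. Finally, taking expectation over $\mathcal{G}$ lets me replace each edge indicator by $\Prob[(i,v)\in\mathcal{E}]=p$; since $|\mathcal{R}_k|=n/K$ for every $k$, the resulting allocation-dependent combinatorial count cleanly collapses to $p\,a^{j}_{\mathcal{M}}$ times the subset-averaged coefficient. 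Dividing by the normalization $n^{2}T$ then yields the claimed bound $p\sum_{j}\tfrac{a^{j}_{\mathcal{M}}}{n}\tfrac{K-j}{Kj}$.

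The main obstacle I anticipate is the combinatorial bookkeeping in the third step, namely verifying that the subset-averaging of the per-realization bound produces the precise factor $\tfrac{K-j}{Kj}$. This step is essentially a translation of the deterministic CDC converse identity, and the key reason it still goes through here is that the random edges enter \emph{linearly} (as independent Bernoulli$(p)$ variables, by the ER model) and independently of the Map/Reduce allocation, so expectation commutes with the subset averaging and the edge probability $p$ simply factors out as a scalar.
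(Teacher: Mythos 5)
Your overall architecture (cut-set entropy bound, CDC-style peeling induction, expectation over the ER model at the end) is close in spirit to the paper's argument, but the order in which you take the expectation creates a genuine gap. Your step 2 asserts a \emph{per-realization} inequality
$\sum_{k\in\mathcal{S}} c_k \geq T\sum_{j}\frac{|\mathcal{S}|-j}{j}N^{\mathcal{S}}_{j}(G)$,
where $N^{\mathcal{S}}_{j}(G)$ counts all edges whose reducer lies in $\mathcal{S}$ and whose Mapper vertex is mapped by exactly $j$ servers of $\mathcal{S}$. This inequality is false. Take $K=3$, $r=2$, $\mathcal{S}=[3]$, and a realization whose only edge is $(v,u)$ with both $u$ and $v$ in the batch $\mathcal{B}_{\{1,2\}}$ and both Reduced at server $1$. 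Then every needed intermediate value is locally available at server $1$, so zero communication suffices, yet $N^{[3]}_{2}(G)>0$ and your bound demands strictly positive load. The underlying reason is that the recombination step of the induction --- the identity that converts $\sum_{k\in\mathcal{S}} a^{j,\mathcal{S}\setminus\{k\}}_{\mathcal{M}}$ into $(|\mathcal{S}|-j)\,a^{j,\mathcal{S}}_{\mathcal{M}}$ --- telescopes for \emph{vertex} multiplicities but not for \emph{edge} counts realization by realization: edges whose reducer's server is itself one of the $j$ Mappers contribute strictly less than the target coefficient, while the others contribute more, and only the ER model's homogeneous edge probability makes these balance out \emph{in expectation}. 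This is exactly why the paper runs the induction on the quantities $\Expc_{\mathcal{G}}\bigl[H(X_{\mathcal{S}}\mid Y^{\mathcal{G}}_{\mathcal{S}^c})\bigr]$, replacing the random count in the first term of the chain-rule decomposition by its mean $p\,\frac{n}{K}\sum_{i} a^{i,\mathcal{S}\setminus\{k\}}_{\mathcal{M}}$ \emph{inside} each induction step; your claim that ``expectation commutes'' and can be deferred to the end is precisely where the argument breaks.

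A secondary point: your step 3 (averaging over all subsets of a common size and then taking a convex combination across sizes) is not needed. Once the induction establishes the expected-entropy bound for every $\mathcal{S}$, the paper simply instantiates it at $\mathcal{S}=[K]$; the factor $\frac{K-j}{Kj}$ already emerges from the coefficient $\frac{|\mathcal{S}|-j}{j}$ together with the per-server Reducer count $|\mathcal{R}_k|=n/K$ and the normalization by $n^2T$. To repair your proof, restate the induction hypothesis as a bound on $\Expc_{\mathcal{G}}\bigl[H(X_{\mathcal{S}}\mid Y^{\mathcal{G}}_{\mathcal{S}^c})\bigr]$ in terms of the deterministic multiplicities $a^{j,\mathcal{S}}_{\mathcal{M}}$, and take the expectation of the ``fresh demand'' term at each level of the induction rather than once at the end.
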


\begin{proof}
We let intermediate values $v_{i,j}$ be realizations of random variables $V_{i,j}$, uniformly distributed over $\mathbb{F}_{2^T}$. For a random graph $\mathcal{G}=(\mathcal{V},\mathcal{E})$ and subsets $\mathcal{I,J} \subseteq \mathcal{V}=[n]$, define $V^{\mathcal{G}}_{\mathcal{I,J}} = \{V_{i,j}: (i,j)\in \mathcal{E}, i \in \mathcal{I}, j \in \mathcal{J}\}$ as the set of present intermediate values in graph $\mathcal{G}$ corresponding to Reducers in $\mathcal{I}$ and Mappers in $\mathcal{J}$. For a given allocation $A=(\mathcal{M},\mathcal{R}) \in \mathcal{A}(r)$ and a subset of servers $\mathcal{S} \subseteq [K]$, we define $X_{\mathcal{S}}=\{X_k:k \in \mathcal{S}\}$ and $Y^{\mathcal{G}}_{\mathcal{S}}=(V^{\mathcal{G}}_{\mathcal{R}_{\mathcal{S}},:},V^{\mathcal{G}}_{:,\mathcal{M}_{\mathcal{S}}})$, where ``$:$'' denotes all possible indices (which depend on both  allocation and graph realization). As described in Section \ref{sec:distimp}, each coded message is a function of the present intermediate values Mapped at the corresponding server. Moreover, all the intermediate values required by the Reducers are decodable from the locally available intermediate values and received messages at the corresponding server. That is, $H(X_k|V^{\mathcal{G}}_{:,\mathcal{M}_{k}})=0$ and $H(V^{\mathcal{G}}_{\mathcal{R}_{k},:}|X_{[K]},V^{\mathcal{G}}_{:,\mathcal{M}_{k}})=0$ for all servers $k \in [K]$ and graphs $\mathcal{G}$. We denote the number of vertices that are exclusively Mapped by
$j$ servers in $\mathcal{S}$ as $a^{j,\mathcal{S}}_{\mathcal{M}}$, that is
\begin{equation} 
  a^{j,\mathcal{S}}_{\mathcal{M}} \coloneqq   \sum_{ \mathcal{S}_1 \subseteq \mathcal{S} : |\mathcal{S}_1|=j } | (\cap_{k \in \mathcal{S}_1} \mathcal{M}_k) \setminus (\cup_{k' \notin \mathcal{S}_1} \mathcal{M}_{k'})|. \nonumber
\end{equation}

We prove the following claim by induction.

\begin{claim}\label{claim1}
For any subset $\mathcal{S} \subseteq [K]$, 
\begin{equation}\label{eq:claim}
    \Expc_{\mathcal{G}} \Big[H(X_{\mathcal{S}}|Y^{\mathcal{G}}_{\mathcal{S}^c}) \Big] \geq p T \sum_{j=1}^{|\mathcal{S}|} a^{j,\mathcal{S}}_{\mathcal{M}} \frac{n}{K} \frac{|\mathcal{S}|-j}{j}.
\end{equation}
\end{claim}

\begin{proof}
\begin{enumerate}[(i)]
    \item If $\mathcal{S}=\{k\}$, for any $k\in [K]$ and graph ${\mathcal{G}}$  we have $H(X_{\mathcal{S}}|Y^{\mathcal{G}}_{\mathcal{S}^c}) \geq 0 $. Therefore, 
    \begin{equation}
        \Expc_{\mathcal{G}} \Big[H(X_{\mathcal{S}}|Y^{\mathcal{G}}_{\mathcal{S}^c}) \Big] \geq 0 = p T \sum_{j=1}^{1} a^{1,\mathcal{S}}_{\mathcal{M}} \frac{n}{K} \frac{1-1}{1}. \nonumber
    \end{equation}
    \item Assume that claim (\ref{eq:claim}) holds for all subsets of size $S_0$.  For any subset $\mathcal{S} \subseteq [K]$ of size $S_0+1$, the following steps hold:
    \begin{align}
        &\quad H(X_{\mathcal{S}}|Y^{\mathcal{G}}_{\mathcal{S}^c})\nonumber\\
        &= \frac{1}{|S|}\sum_{k\in S}H(X_{\mathcal{S}},X_k|Y^{\mathcal{G}}_{\mathcal{S}^c})\nonumber\\
        &=\frac{1}{|S|}\sum_{k\in S}(H(X_{\mathcal{S}}|X_k,Y^{\mathcal{G}}_{\mathcal{S}^c})+H(X_k|Y^{\mathcal{G}}_{\mathcal{S}^c}))\label{eq:intermed1}\\
        &\geq\frac{1}{|S|}\sum_{k\in S}H(X_{\mathcal{S}}|X_k,Y^{\mathcal{G}}_{\mathcal{S}^c})+\frac{1}{|S|}H(X_{\mathcal{S}}|Y^{\mathcal{G}}_{\mathcal{S}^c}).\label{eq:intermed}
    \end{align}
    where (\ref{eq:intermed}) follows from (\ref{eq:intermed1}) using chain rule and conditional entropy relations. Simplifying (\ref{eq:intermed}) and using $|S|-1 = S_0$, we have the following: 
    \begin{equation}\label{eq:claim2}
        H(X_{\mathcal{S}}|Y^{\mathcal{G}}_{\mathcal{S}^c}) \geq \frac{1}{S_0} \sum_{k \in \mathcal{S}} H(X_{\mathcal{S}}|V^{\mathcal{G}}_{:,\mathcal{M}_{k}},Y^{\mathcal{G}}_{\mathcal{S}^c}).
    \end{equation}
    Moreover, 
    \begin{align}\label{eq:claim1}
        H(X_{\mathcal{S}}|V^{\mathcal{G}}_{:,\mathcal{M}_{k}},Y^{\mathcal{G}}_{\mathcal{S}^c}) &= H(V^{\mathcal{G}}_{\mathcal{R}_k,:}|V^{\mathcal{G}}_{:,\mathcal{M}_{k}},Y^{\mathcal{G}}_{\mathcal{S}^c})\nonumber\\
        &\quad +H(X_{\mathcal{S}}|V^{\mathcal{G}}_{:,\mathcal{M}_{k}},V^{\mathcal{G}}_{\mathcal{R}_k,:},Y^{\mathcal{G}}_{\mathcal{S}^c}).
    \end{align}
We can lower bound expected value of the first RHS term in (\ref{eq:claim1}) as follows
\begin{align}
    &\quad \Expc_{\mathcal{G}} \Big[H(V^{\mathcal{G}}_{\mathcal{R}_k,:}|V^{\mathcal{G}}_{:,\mathcal{M}_{k}},Y^{\mathcal{G}}_{\mathcal{S}^c})\Big] \nonumber\\
    &= \Expc_{\mathcal{G}} \left[ \sum_{v \in \mathcal{R}_k} H(V^{\mathcal{G}}_{\{v\},:}|V^{\mathcal{G}}_{\{v\},\mathcal{M}_{k}\cup \mathcal{M}_{\mathcal{S}^c}}) \right]\nonumber\\
    &=  \Expc_{\mathcal{G}} \left[\sum_{ v \in \mathcal{R}_k} |\mathcal{N}(v)| - |\mathcal{N}(v) \cap (\mathcal{M}_{k}\cup \mathcal{M}_{\mathcal{S}^c})| \right] \nonumber\\
    &= \frac{n}{K} p T \sum_{j=0}^{S_0} a^{j,\mathcal{S} \setminus \{k\}}_{\mathcal{M}}\nonumber\\
    &\geq  \frac{n}{K} p T \sum_{j=1}^{S_0} a^{j,\mathcal{S} \setminus \{k\}}_{\mathcal{M}} \label{eq:claim3}.
\end{align}
Expected value of the second term in RHS of (\ref{eq:claim1}) can be lower bounded from the induction assumption:
\begin{align}
    &\quad \Expc_{\mathcal{G}} \Big[H(X_{\mathcal{S}}|V^{\mathcal{G}}_{:,\mathcal{M}_{k}}, V^{\mathcal{G}}_{\mathcal{R}_k,:},Y^{\mathcal{G}}_{\mathcal{S}^c}) \Big] \nonumber\\
    &= \Expc_{\mathcal{G}} \Big[H(X_{\mathcal{S} \setminus \{k\}} | Y^{\mathcal{G}}_{\mathcal{S} \setminus \{k\}}) \Big] \nonumber\\
    &\geq p T \sum_{j=1}^{S_0} a^{j,\mathcal{S} \setminus \{k\}}_{\mathcal{M}} \frac{n}{K} \frac{S_0-j}{j} \label{eq:claim4}.
\end{align}
Putting (\ref{eq:claim2}), (\ref{eq:claim1}), (\ref{eq:claim3}), and (\ref{eq:claim4}) together, we have 
\begin{align}
    &\quad \Expc_{\mathcal{G}} \Big[ H(X_{\mathcal{S}}|Y^{\mathcal{G}}_{\mathcal{S}^c}) \Big] \nonumber\\
    &\geq \frac{1}{S_0} \sum_{k \in \mathcal{S}} \Expc_{\mathcal{G}} \Big[H(X_{\mathcal{S}}|V^{\mathcal{G}}_{:,\mathcal{M}_{k}},Y^{\mathcal{G}}_{\mathcal{S}^c}) \Big] \nonumber\\
    &= \frac{1}{S_0} \sum_{k \in \mathcal{S}} \Expc_{\mathcal{G}} \Big[H(V^{\mathcal{G}}_{\mathcal{R}_k,:}|V^{\mathcal{G}}_{:,\mathcal{M}_{k}},Y^{\mathcal{G}}_{\mathcal{S}^c}) \Big] \nonumber\\
    &\quad +
    \Expc_{\mathcal{G}} \Big[H(X_{\mathcal{S}}|V^{\mathcal{G}}_{:,\mathcal{M}_{k}},V^{\mathcal{G}}_{\mathcal{R}_k,:},Y^{\mathcal{G}}_{\mathcal{S}^c}) \Big]\nonumber\\
    &\geq \frac{1}{S_0} \sum_{k \in \mathcal{S}} \Big(\frac{n}{K} p T \sum_{i=1}^{S_0} a^{i,\mathcal{S} \setminus \{k\}}_{\mathcal{M}} \nonumber\\
    &\quad \quad  \quad \quad \quad \quad +  p T \sum_{j=1}^{S_0} a^{j,\mathcal{S} \setminus \{k\}}_{\mathcal{M}} \frac{n}{K} \frac{S_0-j}{j} \Big) \nonumber\\
    &= p T \sum_{j=1}^{S_0} \frac{n}{K} \frac{1}{j} \sum_{k \in \mathcal{S}} a^{j,\mathcal{S} \setminus \{k\}}_{\mathcal{M}} \nonumber\\
    &= p T \sum_{j=1}^{S_0+1} a^{j,\mathcal{S}}_{\mathcal{M}} \frac{n}{K} \frac{S_0+1-j}{j}. \nonumber
\end{align}

\item Therefore, for any subset $\mathcal{S} \subseteq [K]$, claim (\ref{eq:claim}) holds.
\end{enumerate}
\end{proof}

Now, pick $\mathcal{S}=[K]$. Then, 
\begin{equation}
    \Expc_{\mathcal{G}} \big[L_{A}(r,\mathcal{G}) \big]  \geq \frac{\Expc_{\mathcal{G}} \Big[H(X_{\mathcal{S}}|Y^{\mathcal{G}}_{\mathcal{S}^c}) \Big]}{n^2T}
     \geq p  \sum_{j=1}^{K} \frac{a^j_{\mathcal{M}}}{n}\frac{K-j}{Kj}. \nonumber
\end{equation}
\end{proof}

\noindent \textit{Proof of Converse for Theorem \ref{thm:ERCDC}}. First, we use the result in Claim \ref{claim1} and bound the best average normalized communication load as follows:
\begin{align}
   L^*(r) &\geq \inf_{A} \Expc_{\mathcal{G}} \big[L_{{A}}(r,\mathcal{G}) \big]\nonumber\\
   & \geq \inf_{A} p  \sum_{j=1}^{K} \frac{a^j_{\mathcal{M}}}{n}\frac{K-j}{Kj},\nonumber
\end{align}
where the infimum is over all subgraph and Reduce allocations $A=(\mathcal{M},\mathcal{R})\in \mathcal{A}(r)$ for which $\sum_{k=1}^{K}|\mathcal{M}_k|=r n$ and $|\mathcal{R}_k|=\frac{n}{K}$, $\forall k\in [K]$. Additionally, for any Map allocation with computation load $r$, we have the following equations:
\begin{align}
\label{eq:a1}
  \sum_{j=1}^K a^j_{\mathcal{M}} =n, \,\,
  \sum_{j=1}^K j a^j_{\mathcal{M}} =r n.
\end{align}
Using convexity of $\frac{K-j}{Kj}$ in $j$ and (\ref{eq:a1}), the converse is proved as follows:
\begin{align}
 L^*(r) 
 & \geq
 \inf_{A} p  \sum_{j=1}^{K} \frac{a^j_{\mathcal{M}}}{n}\frac{K-j}{Kj} \nonumber\\
 & \geq
 \inf_{A} p \frac{K- \sum_{j=1}^{K} j \frac{a^j_{\mathcal{M}}}{n} }{K\sum_{j=1}^{K} j \frac{a^j_{\mathcal{M}}}{n}} \nonumber\\
 & =
 \frac{1}{r} p \left(1-\frac{r}{K}\right). \nonumber
\end{align}

\section{Achievability for the Power Law Model}
\label{subsec:pl}

We consider a general model for random graphs where the expected degree sequence $\mathbf{d}=(d_1,\cdots,d_n)$ is independently drawn from a power law distribution with exponent $\gamma$, i.e. $\Pr[d_i=d]=cd^{-\gamma}$ for $i\in [n]$ and $d\geq 1$ and proper constant $c$ \cite{chung2004average}. Given the realization of the expected degrees $\mathbf{d}$, for $\rho=\frac{1}{\sum_{i=1}^n d_i}$ and all $i,j\in [n]$, vertices $i$ and $j$ are connected with probability $p_{i,j}=\Prob[(i,j)\in \mathcal{E}]=\rho d_i d_j$,  independently of other edges. We now proceed to analyze the coded and uncoded communication loads averaged over the random connections and random degrees induced by the subgraph and Reduce allocation $A_{\textsf{C}}$ proposed in Section \ref{sec:GCDC}.

Consider the allocation $A_{\textsf{C}}=(\mathcal{M},\mathcal{R})$ and a subset of servers $\mathcal{S} \subseteq [K]$ of size $|\mathcal{S}|=r+1$. According to the proposed scheme in Section \ref{sec:GCDC}, for every server $s \in \mathcal{S}$, servers in $\mathcal{S}\setminus \{s\}$ form a table and construct coded messages using the intermediate values in the sets $\mathcal{Z}_{\mathcal{S}\setminus \{k\}}^{k}$ (defined in (\ref{eq:zset})) where $k \in \mathcal{S}\setminus \{s\}$. Therefore, $r+1$ tables are formed each constructing coded messages of size $\max_{k \in \mathcal{S}\setminus \{s\}} |\mathcal{Z}_{\mathcal{S}\setminus \{k\}}^{k}|\frac{T}{r}$ bits. The total coded load induced by the subset $\mathcal{S}$ (and exclusively for the use of servers in $\mathcal{S}$)  denoted by $L_{A_{\textsf{C}}}^{\textsf{C}}(\mathcal{S})$ is
 \begin{equation}
      L_{A_{\textsf{C}}}^{\textsf{C}}(\mathcal{S}) = \frac{1}{n^2 r}\sum_{s \in \mathcal{S}} \max_{k \in \mathcal{S}\setminus \{s\}} |\mathcal{Z}_{\mathcal{S}\setminus \{k\}}^{k}|. \nonumber
  \end{equation}
However, in uncoded scenarios, denoted by $L_{A_{\textsf{C}}}^{\textsf{UC}}(\mathcal{S})$ the total uncoded load induced by subset $\mathcal{S}$ (and exclusively for the use of servers in $\mathcal{S}$) is
 \begin{equation}
      L_{A_{\textsf{C}}}^{\textsf{UC}}(\mathcal{S}) = \frac{1}{n^2 } \sum_{s \in \mathcal{S}}  |\mathcal{Z}_{\mathcal{S}\setminus \{s\}}^{s}|. \nonumber
  \end{equation}
We have 
 \begin{align}
    |\mathcal{Z}_{\mathcal{S}\setminus \{s\}}^{s}| &= \sum_{i \in \mathcal{R}_s} |\mathcal{N}(i) \cap (\cap_{k' \in \mathcal{S}\setminus \{s\}} \mathcal{M}_{k'})| \nonumber\\
    &= \sum_{{\substack{i \in \mathcal{R}_s \\  {m \in \cap_{k' \in \mathcal{S}\setminus \{s\}} \mathcal{M}_{k'} }}   }}  \mathbbm{1}\{(i,m) \in \mathcal{E}\} \label{eq:idntf},
  \end{align}
where the random Bernoulli $ \mathbbm{1}\{(i,m) \in \mathcal{E}\}$ indicates the realization of the edge connecting vertices $i$ and $m$, i.e. $\Expc[\mathbbm{1}\{(i,m) \in \mathcal{E}\}  |\mathbf{d} ]=\rho d_i d_m$.
We note that $|\mathcal{R}_s|=n/K$ and $|\cap_{k' \in \mathcal{S}\setminus \{s\}} \mathcal{M}_{k'}|=n/{K \choose r}$. Therefore, there are $\tilde{g}=\frac{n^2}{K{K \choose r}}$ Bernoulli summands in (\ref{eq:idntf}) in which every two summands are either independent or equal and independent of other summands. More precisely, (\ref{eq:idntf}) can be decomposed to sum of all independent Bernoulli random variables and sum of dependent ones as follows:
 \begin{align}
    |\mathcal{Z}_{\mathcal{S}\setminus \{s\}}^{s}| &= \sum_{{\substack{i \in \mathcal{R}_s \nonumber\\  {m \in \cap_{k' \in \mathcal{S}\setminus \{s\}} \mathcal{M}_{k'} }}   }}  \mathbbm{1}\{(i,m) \in \mathcal{E}\}\\
    &= \sum_{\mathcal{F}_1 \text{ or } \mathcal{F}_2 \text{ or } \mathcal{F}_3}  \mathbbm{1}\{(i,m) \in \mathcal{E}\}  \nonumber\\
    &\quad
    +2\sum_{{\substack{i,m \in \mathcal{R}_s \cap ( \cap_{k' \in \mathcal{S}\setminus \{s\}} \mathcal{M}_{k'})\\ i<m }  }}  \mathbbm{1}\{(i,m) \in \mathcal{E}\} , \label{eq:decomp}
  \end{align}
  where we denote the events
  \begin{align}
      \mathcal{F}_1 
      &\coloneqq
      \{ i \in \mathcal{R}_s \setminus \cap_{k' \in \mathcal{S}\setminus \{s\}}  \mathcal{M}_{k'} \, , \, m \in \cap_{k' \in \mathcal{S}\setminus \{s\}}  \mathcal{M}_{k'} \}, \nonumber\\
      \mathcal{F}_2 
      &\coloneqq
      \{ i \in \mathcal{R}_s \, , \, m \in \cap_{k' \in \mathcal{S}\setminus \{s\}} \mathcal{M}_{k'} \setminus \mathcal{R}_s \}, \nonumber\\
      \mathcal{F}_3 
      &\coloneqq
      \{i=m \in \mathcal{R}_s \cap ( \cap_{k' \in \mathcal{S}\setminus \{s\}} \mathcal{M}_{k'}) \}. \nonumber
  \end{align}
Note that with this decompostion, all the Bernoulli summands in both terms in (\ref{eq:decomp}) are independent. Assume that the first and second terms in (\ref{eq:decomp}) contain $\tilde{g}-2J$ and $J$ summands respectively.  \\
According to Kolmogorov's strong law of large numbers (Proposition \ref{prop:slln} provided at the end of this section) and given that the second condition in the proposition is satisfied for Bernoullis, we have 
 \begin{align}
    \frac{1}{\tilde{g}-2J}
    \sum_{\mathcal{F}_1 \text{ or } \mathcal{F}_2 \text{ or } \mathcal{F}_3} \mathbbm{1}\{(i,m) \in \mathcal{E}\} - \Expc[\rho d_i d_m ] \xrightarrow{\text{a.s.}}  0, \nonumber
  \end{align}
  and 
   \begin{align}
     \frac{1}{J}\sum_{{\substack{i,m \in \mathcal{R}_s \cap ( \cap_{k' \in \mathcal{S}\setminus \{s\}} \mathcal{M}_{k'})\\ i<m }  }}  \mathbbm{1}\{(i,m) \in \mathcal{E}\} -\Expc[\rho d_i d_m ]  \xrightarrow{\text{a.s.}}  0. \nonumber
  \end{align}
Therefore, size of the set $\mathcal{Z}_{\mathcal{S}\setminus \{s\}}^{s}$ converges almost surely, that is
\begin{align}
    &\quad \frac{1}{\tilde{g}} \left( |\mathcal{Z}_{\mathcal{S}\setminus \{s\}}^{s}| - \Expc\big[ |\mathcal{Z}_{\mathcal{S}\setminus \{s\}}^{s}|\big] \right) \nonumber \\
    &=
    \frac{\tilde{g}-2J}{\tilde{g}}\frac{1}{\tilde{g}-2J}\sum_{\mathcal{F}_1 \text{ or } \mathcal{F}_2 \text{ or } \mathcal{F}_3}  \mathbbm{1}\{(i,m) \in \mathcal{E}\} - \Expc[\rho d_i d_m ]\nonumber \\
    &\quad+
    \frac{J}{\tilde{g}}\frac{1}{J}2\sum_{{\substack{i,m \in \mathcal{R}_s \cap ( \cap_{k' \in \mathcal{S}\setminus \{s\}} \mathcal{M}_{k'})\\ i<m }  }}  \mathbbm{1}\{(i,m) \in \mathcal{E}\} - \Expc[\rho d_i d_m ]\nonumber\\
    &\xrightarrow{\text{a.s.}} 0, \nonumber
  \end{align}
where 
 \begin{align}
    \Expc\big[ |\mathcal{Z}_{\mathcal{S}\setminus \{s\}}^{s}|\big]
    &=
     \sum_{{\substack{i \in \mathcal{R}_s \\  {m \in \cap_{k' \in \mathcal{S}\setminus \{s\}} \mathcal{M}_{k'} }}   }} \Expc[\rho d_i d_m ]\nonumber\\
     &=  
      \Expc \big[\rho\, \text{vol}(\mathcal{R}_s) \text{vol}(\cap_{k' \in \mathcal{S}\setminus \{s\}} \mathcal{M}_{k'}) \big], \nonumber
  \end{align}  
  and $\text{vol}(V)=\sum_{v \in V} d_v$ for any subset of vertices $V \subseteq [n]$. Moreover, 
   \begin{align}
    &\quad \lim_{n \to \infty}  \frac{n}{\tilde{g}} \Expc \big[|\mathcal{Z}_{\mathcal{S}\setminus \{s\}}^{s}| \big] 
    \nonumber\\ 
    &=
    \lim_{n \to \infty}  \Expc \left[(\rho n) \frac{1}{n/K} \text{vol}(\mathcal{R}_s) \frac{1}{n/{K \choose r}} \text{vol}(\cap_{k' \in \mathcal{S}\setminus \{s\}} \mathcal{M}_{k'}) \right]. \label{eq:expcz}
  \end{align}
Each of the terms $\text{vol}(\mathcal{R}_s)$, $\text{vol}(\cap_{k' \in \mathcal{S}\setminus \{s\}} \mathcal{M}_{k'})$ and inverse of $\rho$ are summation of i.i.d power law random variables for which the expected value exists for $\gamma>2$ and  $\Expc[d_1]=\frac{\gamma-1}{\gamma-2}$. Therefore, by strong law of large numbers (Proposition \ref{prop:slln}) each term approaches its average almost surely, that is for $\gamma>2$
\begin{equation}
       \frac{1}{n/K} \text{vol}(\mathcal{R}_s)  \xrightarrow{\text{a.s.}} \Expc[d_1] =  \frac{\gamma-1}{\gamma-2}, \nonumber
\end{equation}       
\begin{equation}       
       \frac{1}{n/{K \choose r}} \text{vol}(\cap_{k' \in \mathcal{S}\setminus \{s\}} \mathcal{M}_{k'})  \xrightarrow{\text{a.s.}} \Expc[d_1] =  \frac{\gamma-1}{\gamma-2}. \nonumber
  \end{equation}
\begin{equation}
       \rho n  \xrightarrow{\text{a.s.}} \frac{1}{\Expc[d_1]} =  \frac{\gamma-2}{\gamma-1}. \nonumber
\end{equation}   
Plugging into (\ref{eq:expcz}), we have 
$\lim_{n \to \infty}  \frac{n}{\tilde{g}} \Expc \big[|\mathcal{Z}_{\mathcal{S}\setminus \{s\}}^{s}| \big] = \left(\frac{\gamma-1}{\gamma-2}\right). 
$ Therefore, 
$
\frac{n}{\tilde{g}}|\mathcal{Z}_{\mathcal{S}\setminus \{s\}}^{s}|
\xrightarrow{\text{a.s.}}
\left(\frac{\gamma-1}{\gamma-2}\right)
$ 
for any $s \in \mathcal{S}$ and $\mathcal{S} \subseteq[K]$. Putting all together, we have for $\gamma>2$, 
 \begin{align}
    \lim_{n \to \infty} n\Expc[L_{A_{\textsf{C}}}^{\textsf{UC}}(\mathcal{S})] &= \lim_{n \to \infty} \frac{n}{n^2 } \sum_{s \in \mathcal{S}}  \Expc\big[|\mathcal{Z}_{\mathcal{S}\setminus \{s\}}^{s}|\big] \nonumber\\
    &= \frac{1}{K{K \choose r}} \lim_{n \to \infty} \sum_{s \in \mathcal{S}} \frac{n}{\tilde{g}} \Expc \big[|\mathcal{Z}_{\mathcal{S}\setminus \{s\}}^{s}| \big] \nonumber\\
    &= \frac{r+1}{K{K \choose r}} \left(\frac{\gamma-1}{\gamma-2}\right). \nonumber
  \end{align}
 Therefore, denoted by $L_{A_{\textsf{C}}}^{\textsf{UC}}$ the total uncoded communication load, we have
  \begin{align}
      \lim_{n \to \infty} n\Expc[L_{A_{\textsf{C}}}^{\textsf{UC}}] 
      &= \lim_{n \to \infty} \sum_{{\substack{\mathcal{S} \subseteq[K] \\ n|\mathcal{S}|=r+1}}} \Expc[L_{A_{\textsf{C}}}^{\textsf{UC}}(\mathcal{S})]\nonumber\\
      &= {K \choose r+1} \frac{r+1}{K{K \choose r}}  \left(\frac{\gamma-1}{\gamma-2}\right) \nonumber\\
      &= \big(1-\frac{r}{K}\big) \left(\frac{\gamma-1}{\gamma-2}\right). \nonumber
  \end{align}
For the coded scheme, we have 
\begin{align}
       \lim_{n \to \infty} n\Expc[L_{A_{\textsf{C}}}^{\textsf{C}}(\mathcal{S})] &=  \lim_{n \to \infty}  \frac{n}{n^2 r}\sum_{s \in \mathcal{S}} \Expc \left[\max_{k \in \mathcal{S}\setminus \{s\}} |\mathcal{Z}_{\mathcal{S}\setminus \{k\}}^{k}|\right] \nonumber\\
       & \leq \lim_{n \to \infty}  \frac{n(r+1)}{n^2 r} \Expc\left[\max_{s \in \mathcal{S}} |\mathcal{Z}_{\mathcal{S}\setminus \{s\}}^{s}|\right] \nonumber\\
       &= \frac{r+1}{rK{K \choose r}}  \left(\frac{\gamma-1}{\gamma-2}\right). \label{eq:codedl}
  \end{align}
The last equality follows the fact that $
\frac{n}{\tilde{g}} \max_{s \in \mathcal{S}} |\mathcal{Z}_{\mathcal{S}\setminus \{s\}}^{s}|
\xrightarrow{\text{a.s.}}
\left(\frac{\gamma-1}{\gamma-2}\right),
$
since 
$
\frac{n}{\tilde{g}} |\mathcal{Z}_{\mathcal{S}\setminus \{s\}}^{s}|
$
converges almost surely for any $s \in \mathcal{S}$. Plugging into (\ref{eq:codedl}), the expected coded load is 
\begin{align}
       \lim_{n \to \infty} n\Expc[L_{A_{\textsf{C}}}^{\textsf{C}}] 
       &=
       \lim_{n \to \infty}n \sum_{{\substack{\mathcal{S} \subseteq[K] \\ |\mathcal{S}|=r+1}}} \Expc[L_{A_{\textsf{C}}}^{\textsf{C}}(\mathcal{S})] \nonumber\\
       &\leq {K \choose r+1} \frac{r+1}{rK{K \choose r}}  \left(\frac{\gamma-1}{\gamma-2}\right) \nonumber\\
       &= \frac{1}{r}\left(1-\frac{r}{K}\right) \left(\frac{\gamma-1}{\gamma-2}\right), \nonumber
  \end{align}
which yields
\begin{equation}
    \lim_{n \to \infty}\frac{n L^*(r)}{(\frac{\gamma-1}{\gamma-2})  } \leq \lim_{n \to \infty}\frac{n \Expc[L_{A_{\textsf{C}}}^{\textsf{C}}]}{(\frac{\gamma-1}{\gamma-2})  } \leq \frac{1}{r}\left(1-\frac{r}{K}\right).  \nonumber
\end{equation}
Comparing the coded load with uncoded load proves the achievability of gain $r$ for the power law model.

\begin{proposition}[Kolmogorov's Strong Law of Large Numbers \cite{sen1994largesample,loeve1977probability}]\label{prop:slln}
Let $X_1,X_2,\cdots,X_n,\cdots$ be a sequence of independent random variables with $|\Expc[X_n]| < \infty$ for $n \geq 1$. Then 
\begin{equation}
    \frac{1}{n} \sum_{i=1}^{n} \big( X_i - \Expc[X_i] \big) \xrightarrow{\text{a.s.}} 0,  \nonumber
\end{equation}
if one of the following conditions are satisfied:
\begin{enumerate}
    \item $X_i$'s are identically distributed,
    \item $\forall n$, $var(X_n) < \infty$ and $\sum_{n=1}^{\infty} \frac{var(X_n)}{n^2} < \infty$.
\end{enumerate}
\end{proposition}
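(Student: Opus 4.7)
The plan is to prove the two sufficient conditions separately, handling case (2) first as the standard Kolmogorov variance criterion, and then reducing case (1) to case (2) via truncation.

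For case (2), I would combine Kolmogorov's convergence theorem with Kronecker's lemma. Set $Y_n = X_n - \Expc[X_n]$, so the $Y_n$ are independent, centered, and satisfy $\sum_n \operatorname{Var}(Y_n)/n^2 = \sum_n \operatorname{Var}(X_n)/n^2 < \infty$. The workhorse is Kolmogorov's maximal inequality: for independent mean-zero $Z_1,\ldots,Z_N$, $\Prob\bigl[\max_{k\le N}|\sum_{i\le k} Z_i| \ge \lambda\bigr] \le \lambda^{-2}\sum_{i \le N}\operatorname{Var}(Z_i)$. Applying this to tail blocks of $\sum_n Y_n/n$ and invoking the variance summability hypothesis shows that the partial sums of $\sum_n Y_n/n$ are Cauchy almost surely, so the series converges a.s. Then Kronecker's lemma (if $\sum_n a_n$ converges and $b_n \uparrow \infty$, then $b_N^{-1}\sum_{n\le N} b_n a_n \to 0$), applied with $a_n = Y_n/n$ and $b_n = n$, delivers $N^{-1}\sum_{n\le N} Y_n \to 0$ almost surely, which is exactly the claim.

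For case (1), I would reduce to case (2) by truncation. Let $\mu = \Expc[X_1]$ and set $\tilde{X}_n = X_n\,\mathbbm{1}\{|X_n|\le n\}$. Three ingredients suffice: (a) by a layer-cake estimate, $\sum_n \operatorname{Var}(\tilde{X}_n)/n^2 \le \sum_n \Expc[\tilde{X}_n^2]/n^2 \le C\,\Expc[|X_1|]<\infty$, so case (2) yields $N^{-1}\sum_{n\le N}(\tilde{X}_n - \Expc[\tilde{X}_n]) \to 0$ a.s.; (b) $\sum_n\Prob[X_n\neq \tilde{X}_n]=\sum_n\Prob[|X_1|>n]\le \Expc[|X_1|]<\infty$, so Borel--Cantelli forces $X_n=\tilde{X}_n$ eventually a.s., whence $N^{-1}\sum_{n\le N}(X_n-\tilde{X}_n)\to 0$ a.s.; and (c) $\Expc[\tilde{X}_n]\to \mu$ by dominated convergence, so the Cesàro averages $N^{-1}\sum_{n\le N}\Expc[\tilde{X}_n]\to \mu$. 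Stitching (a), (b), and (c) together gives $N^{-1}\sum_{n\le N}(X_n - \mu)\to 0$ a.s.

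The main obstacle is Kolmogorov's maximal inequality itself, whose proof hinges on a stopping-time decomposition of the event that the running maximum first exceeds $\lambda$ together with independence used to factor cross terms in the second-moment expansion; once this is in hand, the Kronecker step and the truncation bookkeeping in case (1) are routine. Since this proposition is a classical textbook result used only as a black box in the preceding power-law analysis, the paper appropriately states it with citations to \cite{loeve1977probability,sen1994largesample} rather than reproducing the full machinery.
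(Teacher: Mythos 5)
Your proposal is correct: it is the standard classical proof of Kolmogorov's SLLN (Kolmogorov's maximal inequality plus Kronecker's lemma for the variance-summability criterion, and truncation at level $n$ with the layer-cake variance bound, Borel--Cantelli, and dominated convergence to reduce the identically distributed case to it). The paper does not prove this proposition at all --- it is invoked as a known textbook result with citations to Lo\`{e}ve and Sen--Singer --- and your outline matches the treatment in those references, so there is no divergence to reconcile.
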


\section{Experiments over Amazon EC2 Clusters}
\label{sec:expe}
In this section, we demonstrate the practical impact of our proposed coded scheme via experiments over Amazon EC2 clusters. We first present our implementation choices and experimental scenarios. Then, we discuss the results and provide some remarks. Implementation codes are available at \cite{pagerank2018github}.

\subsection{Implementation Details}
We implement one iteration of the popular PageRank algorithm (Example \ref{ex:PageRank}), for a real-world graph as well as artificially generated graphs. For real-world dataset, we use TheMarker Cafe Dataset \cite{markercafe}. For generating artificial graph datasets, we use the Erdös-Rényi model, where each edge in the graph is present with probability $p$. We consider the following three scenarios:
\begin{itemize}
\item \textbf{Scenario $\mathbf{1}$}: We use a subgraph of size $n=69360$ of TheMarker Cafe Dataset \cite{markercafe}. The computing cluster consists of $K=6$ servers and one master with communication bandwidth of $100$ Mbps at each server.

\item \textbf{Scenario $\mathbf{2}$}: We generate a graph using the Erdös-Rényi model with $n=12600$ vertices and $p=0.3$. The computing cluster consists of $K=10$ servers and one master with communication bandwidth of $100$ Mbps at each server.

\item \textbf{Scenario $\mathbf{3}$}: We generate a graph using the  Erdös-Rényi model with $n=90090$ vertices and $p=0.01$. The computing cluster consists of $K=15$ servers and one master with communication bandwidth of $100$ Mbps at each server.
\end{itemize}

For each scenario, we carry out PageRank implementation for different values of the computation load $r$. The case of $r=1$ corresponds to the conventional PageRank implementation, where each vertex $i\in \mathcal{V}=[n]$ is stored at exactly one server and $\mathcal{M}_k=\mathcal{R}_k$ for each server $k\in [K]$, i.e. the Map and Reduce tasks associated with any vertex $i$ take place in the same server. For $r>1$, we increase the computation load until the overall execution time starts increasing. 

We now describe our implementation choices. We use Python with \texttt{mpi4py} package. In all of our experiments, master is of type r4.large and servers are of type m4.large. For Scenario $2$ and Scenario $3$, we use a sample from the Erdös-Rényi model. This process is carried out using a c4.8xlarge server instance. For each scenario, the graphs are processed and subgraph allocation is done as a pre-processing step. For $r=1$, the graph is partitioned into smaller instances which have equal numbers of vertices. Each such partition consists of two Python \texttt{lists}, one that consists of the vertices that will be Mapped by the corresponding server, and the other one that consists of the neighborhood information of each vertex to be Mapped. The position of the neighborhood \texttt{tuple} in the neighborhood \texttt{list} is same as the position of the corresponding vertex in the vertex \texttt{list}, so that one can iterate over the two together during the Map stage. For $r>1$, the graph is divided into $K \choose r$ batches, where each batch consists of equal numbers of vertices. Then each batch is included in the subgraph of the corresponding set of $r$ servers. This way, we get a a computation load of $r$. 

The overall execution consists of the following phases:
\begin{enumerate}
    \item \textbf{Map}: Without loss of generality, the rank for each vertex is initialized to $\frac{1}{n}$. Each server goes over its subgraph and Maps the rank associated with a vertex to intermediate values that are required by the neighboring vertices during the Reduce stage. Each intermediate value consists of  key-value pair, where the key is an integer storing the vertex ID, while the value is a real number storing the associated value. Based on the vertex ID, the intermediate value is associated with the partition where the vertex is Reduced, which is obtained by hashing the vertex ID. For each partition, a separate $\texttt{list}$ is created for storing keys and values.  
    
\item \textbf{Encode/Pack}: In conventional PageRank, no encoding is done as the transfer of intermediate values is done directly. For $r>1$, coded multicast packets are created using the proposed encoding scheme. Transmission data is serialized before Shuffling.

\item \textbf{Shuffle}: At any time, only one server is allowed to use the network for transmission. In conventional PageRank, each server unicasts its message to different servers, while for $r>1$, the communication takes place in multicast groups. For any multicast group, each server takes its turn to  broadcast its message to all the remaining servers in the group. 

\item \textbf{Unpack/Decode}: The messages received during the Shuffle phase are de-serialized. For $r>1$, each server decodes the coded packets received from other servers in accordance with the proposed coded scheme to recover the intermediate values. After the decoding phase, all intermediate values that are needed for Reduce phase are available at the servers.

\item \textbf{Reduce}: Each server goes over its set of vertices that it needs to Reduce and updates the corresponding PageRank values. In conventional PageRank, for any vertex $i\in \mathcal{V}$, the Map and Reduce operations associated with it are done at the same server. Therefore, no further data transmission is needed to communicate the updated ranks for the Map phase in next iteration. In the proposed coded scheme, message passing is done in order to transmit the updated PageRanks to the Mappers.
\end{enumerate}

Next, we discuss the results of our experiments.

\subsection{Experimental Results}
We now present the results from our experiments. The overall execution times for the three scenarios have been presented in Fig. \ref{fig:exp}.\footnote{The Map time includes the time spent in Encode/Pack stage, while the Unpack stage is combined with Reduce phase.} We make the following observations from the results:
\begin{figure}[htp]
\begin{subfigure}{\linewidth}
\centering
\includegraphics[width=.8\linewidth]{exp_real.eps}
\caption{Scenario $1$}
\vspace{0.4cm}
\label{fig:exp_real}
\end{subfigure} \\
\begin{subfigure}{\linewidth}
\centering
\includegraphics[width=.8\linewidth]{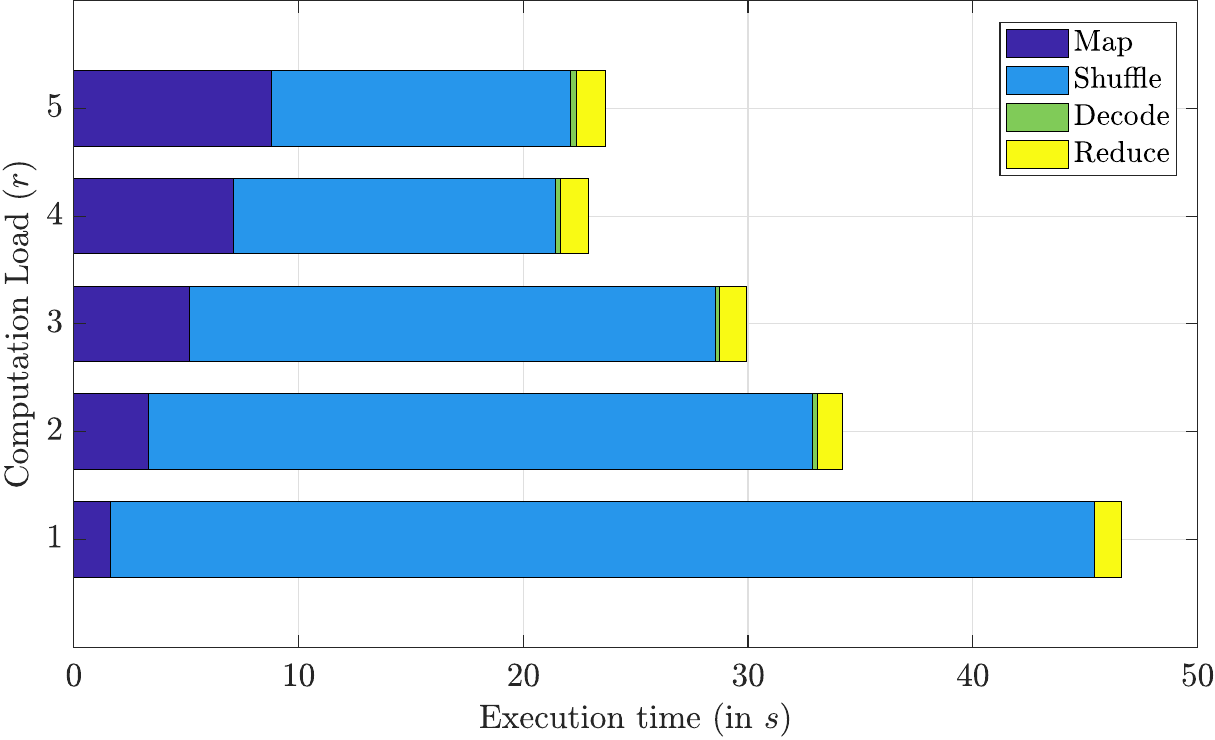}
\caption{Scenario $2$}
\vspace{0.4cm}
  \label{fig:erdos_10}
\end{subfigure}\\[1ex]
\begin{subfigure}{1\linewidth}
\centering
\includegraphics[width=0.8\linewidth]{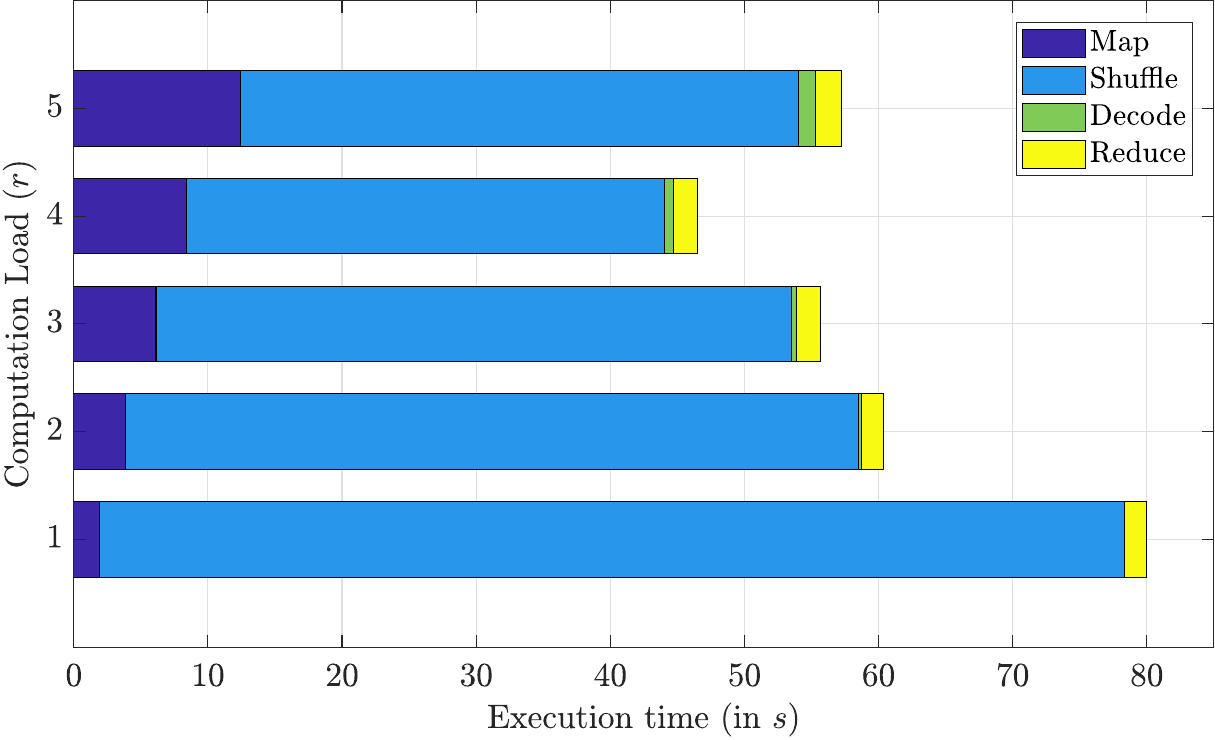}
\caption{Scenario $3$}
\vspace{0.4cm}
  \label{fig:erdos_15}
\end{subfigure}
\caption{Overall execution times for distributed PageRank implementation for different computation load for the three scenarios.}
 \label{fig:exp}
\end{figure}
\begin{itemize}
\item As demonstrated in Fig. \ref{fig:exp}(\subref{fig:exp_real}), maximum gain for Scenario $1$ is obtained with a computation load of $r=5$. Our proposed scheme achieves a speedup of $43.4\%$ over conventional PageRank implementation ($r=1$) and a speedup of $25.5\%$ over the single server implementation ($r=6$).
\item For Scenarios $2$ and $3$, the optimal gain is obtained for $r=4$, after which the overall execution time increases due to saturation of gain in Shuffling time and large Map time. As demonstrated by Fig. \ref{fig:exp}(\subref{fig:erdos_10}) and Fig. \ref{fig:exp}(\subref{fig:erdos_15}), our proposed scheme achieves speedups of $50.8\%$ and $41.8\%$ for Scenarios $2$ and $3$ respectively, in comparison to the conventional PageRank. 
\item As demonstrated by Fig. \ref{fig:exp}, Shuffle phase dominates the overall execution time in the naive implementation of PageRank. By increasing the computation load, our proposed coded scheme leverages extra computing in the Map phase to slash the Shuffle phase, thus speeding up the overall execution time.
\item Theoretically, we demonstrated that by increasing the computation load by $r$, we slash the expected communication load in Shuffle phase by nearly $r$. Here, we empirically observe that due to large size of the graph model, we have a similar trade-off between computation load and communication load for each sample of the graph model as well. 
\item While the Map phase increases almost linearly with $r$, the overall gain begins to saturate, since the Shuffle phase does not decrease linearly with $r$. This is because as we increase $r$, the overheads in multicast data transmissions increase and start to dominate the overall Shuffling time. Furthermore, unicasting one packet is smaller than the time for broadcasting the same packet to multiple servers  \cite{lee2016speeding}. 
\end{itemize}
\begin{remark}
The overall execution time can be approximated as follows:
\begin{equation}
\label{eq:opt_r}
T_{\text{Total}}(r) \approx r T_{\text{Map}} +T_{\text{Shuffle}}/r +T_{\text{Reduce}}, 
\end{equation}
where $T_{\text{Map}}$, $T_{\text{Shuffle}}$ and $T_{\text{Reduce}}$ are the Map, Shuffle and Reduce times for the naive MapReduce implementation. For selecting the computation load for coded implementation, one heuristic \cite{li2017fundamental} is to choose $r$ that is the nearest integer to the minimizer $r^*$ of (\ref{eq:opt_r}) where
$$r^*=\sqrt{\frac{T_{\text{Shuffle}}}{T_{\text{Map}}}}=\argmin_{r} T_{\text{Total}}(r).$$ For instance, in Scenario $2$, $T_\text{Map}=1.649$, $T_\text{Shuffle}=43.78$ and $r^*=5.15$. As demonstrated by Fig. \ref{fig:exp}(\subref{fig:erdos_10}), a computation load of $r=5$ gives close to the optimal performance attained at $r=4$.  
\end{remark}

\section{Conclusion and Future Work}
We described a mathematical model for graph based MapReduce computations and demonstrated how coding theoretic strategies can be employed to substantially reduce the communication load in distributed graph analytics. Our results reveal that an inverse-linear trade-off exists between computation load and communication load in distributed graph processing. This trade-off can be used to leverage additional computing resources and capabilities to  alleviate  the  costly communication bottleneck in distributed graph processing systems. 

As a key contribution of this work, we developed a novel coding scheme that systematically injects structured redundancy in the computation phase to enable coded multicasting opportunities during message exchange between servers, reducing the communication load substantially in large-scale graph processing. For theoretical analysis, we considered random graph models, and proved that our proposed scheme enables an asymptotically inverse-linear trade-off between computation load and average normalized communication load for two popular random graph models -- Erdös-Rényi model, and power law model. Furthermore, for the Erdös-Rényi model, we provided proof for a matching converse, showing the optimality of our proposed scheme. We also carried out experiments over Amazon EC2 clusters to corroborate our claims using real-world as well as artificial graphs, demonstrating speedups of up to $50.8\%$ in the overall execution time of PageRank over the conventional approach. Additionally, we specialized our coded scheme and extended our theoretical results to two other random graph models – random bi-partite model, and stochastic block model. Our specialized schemes asymptotically enable inverse-linear trade-offs between computation and communication loads in distributed
graph processing for these popular random graph models as well. We complemented the achievability results with converse bounds
for both of these models.

One of the major differences from prior frameworks such as Pregel is the use of combiners before Shuffling \cite{malewicz2010pregel}, where the intermediate values that are Mapped at any server are combined at the server depending on the target Reducer computations. Our proposed schemes can be applied on top of combiners, and it is an interesting future direction to explore this in detail. The case with fully connected graphs can be solved using the scheme proposed in the recent work of  \cite{li2018compressed}, which shows that the coding gain can be achieved on top of the gain from combiners. For the general MapReduce computation model considered in \cite{li2017fundamental}, the proposed scheme in \cite{li2018compressed} utilizes the techniques of combiners as well as coding across intermediate results, which provides a Shuffling gain which is multiplicative of the gains from combiners and coding. Furthermore, we focused on subgraph allocation and Reduce allocation schemes that are oblivious to graph realizations. Our motivation came from popular graph processing frameworks such as Pregel~\cite{malewicz2010pregel}, where partitioning of graphs is solely based on the vertex ID and not on the vertex neighborhood density. Also, designing subgraph allocation, Reduce allocation and Shuffling schemes for characterizing the minimum communication load in (\ref{eg:compcommGSample}) is NP-hard in general. It might, however, be an interesting future direction to explore the development of coded schemes that allocate resources \textit{after} looking at the graph.

\bibliographystyle{ieeetr}
\bibliography{biblio}

\begin{appendices}

\section{Achievability for the Random Bi-partite Model}
\label{subsec:RB}

In this Section, we specialize our proposed scheme in Section \ref{sec:achvblty} for the random bi-partite model and prove the achievability of Theorem \ref{thm:rndmbiprt}. Consider RB$(n_1,n_2,q)$ graph $\mathcal{G}=(\mathcal{V}_1 \cup \mathcal{V}_2,\mathcal{E})$ with $n=n_1+n_2$,  $|\mathcal{V}_1|=n_1=\Theta(n)$, and $|\mathcal{V}_2|=n_2=\Theta(n)$ where $|n_1 - n_2| = o(n)$.  The prior knowledge of the bi-partite structure of the graph implies that Reduction of vertices in $\mathcal{V}_1$ depends only on the Mappers in $\mathcal{V}_2$. Therefore, the two operations would better be assigned to the same set of servers. Inspired by that argument, we describe subgraph and Reduce allocations as follows. We divide the total $K$ servers into two sets of $K_1 = \frac{n_1}{n} K$ and $K_2 = \frac{n_2}{n} K$ servers. Assume $n_1 \geq n_2$.
\begin{enumerate}[(I)]
    \item Mappers in $\mathcal{V}_1$ and Reducers in $\mathcal{V}_2$ are distributedly allocated to $K_1$ servers according to the allocation scheme proposed in Section \ref{sec:GCDC}. Each of the $K_1$ servers Maps  $n_1 \frac{r}{K_1} = n \frac{r}{K}$ vertices (in $\mathcal{V}_1$) and Reduces $\frac{n_2}{K_1} = \frac{n_2}{n_1} \frac{n}{K}$ vertices (in $\mathcal{V}_2$). Note that although each server in $K_1$ is loaded at its capacity with $ n \frac{r}{K}$ Mappers, these servers are assigned $\frac{n_2}{n_1} \frac{n}{K} \leq \frac{n}{K}$ Reducers which implies more Reducers can be assigned to these servers. 
    \item Next we allocate the Mappers in $\mathcal{V}_2$ to the other set of $K_2$ servers similar to Mappers in $\mathcal{V}_1$. According to our pick for $K_2$ and the allocation scheme proposed in Section \ref{sec:GCDC}, each server in $K_2$ is assigned with $n_2 \frac{r}{K_2} = n \frac{r}{K}$ vertices (in $\mathcal{V}_2$). To allocate the $n_1$ Reductions in $\mathcal{V}_1$ to the $K_2$ servers, we note that these servers can accommodate at most $K_2 \frac{n}{K} = n_2$ Reductions which is less than $n_1$. To allocate all Reductions, we use the remaining Reduction space in the $K_1$ servers. More precisely, we first allocate $n_2$ out of the total $n_1$ Reductions in $\mathcal{V}_1$ to the $K_2$ servers.
    \item Finally, we allocate the remaining $n_1 - n_2$ vertices to the $K_1$ servers.
\end{enumerate}
All in all, each of the $K$ servers is now assigned with $nr/K$ Mappers and $n/K$ Reducers. We denote this allocation by $\tA \in \mathcal{A}(r)$. Moreover, coded Shuffling applies the coded scheme proposed in Section \ref{sec:GCDC} for Reducing functions in phases (I) and (II) separately. We also allow uncoded communications for enabling Reductions required in phase (III).

Now, we evaluate the communication load of each of the above phases.  Let $\bar{L}^{\textsf{C1}}_{\tA}$, $\bar{L}^{\textsf{C2}}_{\tA}$ denote the average normalized communication loads for phases (I) and (II); and $\bar{L}^{\textsf{UC3}}_{\tA}$ denote the average normalized communication load regarding phase (III). From the achievability result in Theorem \ref{thm:ERCDC}, for $q = \omega(\frac{1}{n^2})$, we have 
\begin{equation}
\label{eq:bp1}
    \bar{L}^{\textsf{C1}}_{\tA}
    \leq
    \frac{1}{r}q \frac{n_1 n_2}{n^2}  \left(1-\frac{r}{K_1}\right)+o(q), \nonumber
\end{equation}
and
\begin{equation}
\label{eq:bp2}
    \bar{L}^{\textsf{C2}}_{\tA}
    \leq
    \frac{1}{r}q \frac{n^2_2}{n^2}  \left(1-\frac{r}{K_2}\right)+o(q). \nonumber
\end{equation}
As mentioned before, Reduction of the remaining $n_1 - n_2$ vertices in phase (III) is carried out uncoded, which induces the average normalized communication load as follows:
\begin{equation}
\label{eq:bp3}
    \bar{L}^{\textsf{UC3}}_{\tA}
    =
    q \frac{n_2 (n_1 - n_2)}{n^2}. \nonumber
\end{equation}
Putting all together, the proposed achievable scheme has the total average normalized communication load $\bar{L}_{\tA}$ as follows:
\begin{align}
    \bar{L}_{\tA}
    & =
    \bar{L}^{\textsf{C1}}_{\tA}
    +
    \bar{L}^{\textsf{C2}}_{\tA}
    +
    \bar{L}^{\textsf{UC3}}_{\tA}\nonumber\\
    & \leq
    \frac{1}{r}q \frac{n_1 n_2}{n^2}  \left(1-\frac{r}{K_1}\right) \nonumber\\
    &\quad+
    \frac{1}{r}q \frac{n^2_2}{n^2}  \left(1-\frac{r}{K_2}\right)\nonumber\\
    & \quad +
    q \frac{n_2 (n_1 - n_2)}{n^2}
    +o(q). \nonumber
\end{align}
Hence, the achievability claim of Theorem \ref{thm:rndmbiprt} can be concluded as follows:
\begin{align}
\label{eq:achievbRB}
    \limsup_{n \to \infty}\frac{L^*(r)}{q}
    & \leq 
    \limsup_{n \to \infty}\frac{\bar{L}_{\tA}}{q} \nonumber\\
    & \leq
    \limsup_{n \to \infty} \frac{1}{r} \frac{n_1 n_2}{n^2}  \left(1-\frac{r}{K_1}\right)\nonumber\\
    &\quad+
    \limsup_{n \to \infty} \frac{1}{r} \frac{n^2_2}{n^2}  \left(1-\frac{r}{K_1}\right)\nonumber\\
    &\quad+
    \limsup_{n \to \infty} \frac{n_2 (n_1 - n_2)}{n^2} \nonumber\\
    & =
    \frac{1}{2r}\left(1-\frac{2r}{K}\right). 
\end{align}

\section{Converse for the Random Bi-partite Model}
\label{subsec:RB-converse}

Here we provide a lower bound on the optimal average communication load for the random bi-partite model that is within a constant factor of the upper bounds and complete the proof of Theorem \ref{thm:rndmbiprt}. Consider $\mathcal{G}=(\mathcal{V}_1 \cup \mathcal{V}_2,\mathcal{E})$ and assume that $n_1\geq n_2$. To derive a lower bound on $L^*(r)$, for every realization of RB$(n_1,n_2,q)$ graph, we arbitrarily remove $n_1 - n_2$ vertices in $\mathcal{V}_1$ along with their corresponding edges. The new bi-partite  graph represents two random ER graphs with $n_2$ vertices. Consider Reducing the vertices in one side of the new graph, e.g. $\mathcal{V}_2$. Clearly, this  provides a lower bound on $L^*(r)$. Note that now each Mapper can benefit from a redundancy factor of $2r$. According to Theorem \ref{thm:ERCDC}, Reducing  $\mathcal{V}_2$ induces the (optimal) communication load of $\frac{1}{2r} q \left(1 - \frac{2r}{K}\right) + o(q)$ which implies
\begin{align}
\label{eq:converseRB}
    \limsup_{n \to \infty}\frac{L^*(r)}{q}
    & \geq 
    \limsup_{n \to \infty}\frac{1}{2r} q \frac{n^2_2}{n^2} \left(1 - \frac{2r}{K}\right) + o(q) \nonumber \\
    & =
    \frac{1}{8r}\left(1-\frac{2r}{K}\right). 
\end{align}
Hence, the proof of converse of Theorem \ref{thm:rndmbiprt} is complete. Furthermore, (\ref{eq:achievbRB}) and (\ref{eq:converseRB}) together asymptotically characterize the optimal average normalized communication load $L^*(r)$ within a factor of $4$.

\section{Achievability for the Stochastic Block Model}\label{subsec:sbm}

In this Section, we specialize our proposed scheme in Section \ref{sec:achvblty} for the stochastic block model and prove the achievability of Theorem \ref{thm:sbm}. Consider an $\text{SBM}(n_1,n_2,p,q)$ graph $\mathcal{G}=(\mathcal{V}_1 \cup \mathcal{V}_2,\mathcal{E}_1 \cup \mathcal{E}_2 \cup \mathcal{E}_3)$ with $n=n_1+n_2$,  $|\mathcal{V}_1|=n_1=\Theta(n)$, and $|\mathcal{V}_2|=n_2=\Theta(n)$. Edge subsets $\mathcal{E}_1$, $\mathcal{E}_2$ and $\mathcal{E}_3$ respectively represent intra-cluster edges among vertices in $\mathcal{V}_1$, intra-cluster edges among vertices in $\mathcal{V}_2$, and inter-cluster edges between vertices in $\mathcal{V}_1$ and $\mathcal{V}_2$. Let $\mathcal{G}_1=(\mathcal{V}_1,\mathcal{E}_1)$ and $\mathcal{G}_2=(\mathcal{V}_2,\mathcal{E}_2)$ be graphs induced by $\mathcal{V}_1$ and $\mathcal{V}_2$, respectively, and denote the graph of inter-cluster connections by $\mathcal{G}_3=(\mathcal{V}_1 \cup \mathcal{V}_2,\mathcal{E}_3)$. Clearly, $\mathcal{G}_1$ and $\mathcal{G}_2$ are $\text{ER}(n_1,p)$ and $\text{ER}(n_2,p)$ graphs, while $\mathcal{G}_3$ is $\text{RB}(n_1,n_2,q)$ graph.

Subgraph and Reduce allocations are described as follows. Mappers in $\mathcal{V}_1$ and Reducers in $\mathcal{V}_2$ are distributedly allocated to $K$ servers according to the allocation scheme proposed in Section \ref{sec:achvblty}. Similarly, Mappers in $\mathcal{V}_2$ and Reducers in $\mathcal{V}_1$ are distributedly allocated to $K$ servers according to the allocation scheme proposed in Section \ref{sec:achvblty}. Therefore, each server Maps $n_1 r/K$ vertices in $\mathcal{V}_1$ and $n_2 r/K$ vertices in $\mathcal{V}_2$, inducing the computation load $r$. Moreover, each server Reduces $n_1/K$ functions in $\mathcal{V}_1$ and $n_2/K$ functions in $\mathcal{V}_2$. We consider this allocation, denoted by $\tA$, for both uncoded and coded Shuffling schemes. In uncoded scheme, Reducing each function in $\mathcal{V}_1$ requires on average $p n_1$ intermediate values Mapped by vertices in $\mathcal{V}_1$ due to intra-cluster connections which introduces the average uncoded load 
$
\bar{L}^{\textsf{UC1}}_{\tA} = p \frac{n^2_1}{(n_1 + n_2)^2}  \left(1-\frac{r}{K}\right).
$
Similarly, the average uncoded load for Reducing $\mathcal{V}_2$ due to intra-cluster connections
is 
$
\bar{L}^{\textsf{UC2}}_{\tA} = p \frac{n^2_2}{(n_1 + n_2)^2}  \left(1-\frac{r}{K}\right).
$
Moreover, inter-cluster connections induce an average load 
$
\bar{L}^{\textsf{UC3}}_{\tA} = q \frac{2 n_1 n_2}{(n_1 + n_2)^2} \left(1-\frac{r}{K}\right).
$

In the coded scheme, we propose to employ coded Shuffling for the ER and RB models in the regime of interest, that is $p = \omega(\frac{1}{n^2})$, $q = \omega(\frac{1}{n^2})$ and $p \geq q$. Thus, the overall communication load can be decomposed into three components. We first apply the coded Shuffling scheme described in Section \ref{sec:GCDC} to ER graph $\mathcal{G}_1$ which induces the average normalized communication load 
\begin{align}
    \bar{L}^{\textsf{C1}}_{\tA}\leq \frac{1}{r}\bar{L}^{\textsf{UC1}}_{\tA}  +o(p) = \frac{1}{r}p \frac{n^2_1}{(n_1 + n_2)^2}  \left(1-\frac{r}{K}\right)+o(p). \nonumber
\end{align}
Similarly, the same scheme applied to ER graph $\mathcal{G}_2$ results in the average normalized communication load 
\begin{align}
    \bar{L}^{\textsf{C2}}_{\tA}\leq \frac{1}{r}\bar{L}^{\textsf{UC2}}_{\tA}  +o(p) = \frac{1}{r}p \frac{n^2_2}{(n_1 + n_2)^2} \left(1-\frac{r}{K}\right)+o(p).  \nonumber
\end{align}
Finally, we employ the same scheme twice for the two ER models constituting the RB graph $\mathcal{G}_3$ which induces the average normalized communication load  
\begin{align}
    \bar{L}^{\textsf{C3}}_{\tA}\leq
    \frac{1}{r}\bar{L}^{\textsf{UC3}}_{\tA}  +o(q) = \frac{1}{r}q \frac{2 n_1 n_2}{(n_1 + n_2)^2}  \left(1-\frac{r}{K}\right)+o(q). \nonumber
\end{align}
Let us denote by $ \bar{L}^\textsf{C}_{\tA}$ and $ \bar{L}^\textsf{UC}_{\tA}$ the total average normalized communication loads of the coded and uncoded schemes, respectively. Therefore, 
\begin{align}
    L^*(r) &\leq \bar{L}^{\textsf{C}}_{\tA} \nonumber\\
    &= \bar{L}^{\textsf{C1}}_{\tA} + \bar{L}^{\textsf{C2}}_{\tA}  + \bar{L}^{\textsf{C3}}_{\tA} \nonumber\\
    & \leq \frac{1}{r}(\bar{L}^{\textsf{UC1}}_{\tA} + \bar{L}^{\textsf{UC2}}_{\tA}  + \bar{L}^{\textsf{UC3}}_{\tA}) + o(p)  \nonumber\\
    &=\frac{1}{r}\bar{L}^{\textsf{UC}}_{\tA}+ o(p)  \nonumber\\
    &= \frac{p n_1^2+ p n_2^2+2q n_1 n_2}{(n_1+n_2)^2}\left(1-\frac{r}{K}\right)+ o(p), \nonumber
\end{align}
which concludes the proof of achievability of Theorem \ref{thm:sbm}.

\section{Converse for the Stochastic Block Model}
\label{subsec:SBM-converse}

In this section, we provide the proof of the converse of Theorem \ref{thm:sbm}. Consider an $\text{SBM}(n_1,n_2,p,q)$ graph $\mathcal{G}=(\mathcal{V}_1 \cup \mathcal{V}_2,\mathcal{E}_1 \cup \mathcal{E}_2 \cup \mathcal{E}_3)$ with $n=n_1+n_2$,  $|\mathcal{V}_1|=n_1=\Theta(n)$, and $|\mathcal{V}_2|=n_2=\Theta(n)$. Our approach to derive a lower bound for the minimum average communication load is to randomly remove edges from the two intra-cluster edges, i.e. $\mathcal{E}_1$ and $\mathcal{E}_2$. Moreover, edges are removed such that each of those clusters are then Erdos-Renyi models with connectivity probability $q$ (reduced from $p$). This can be simply verified by the following coupling-type argument. Let the Bernoulli random variable $E_p$ denote the indicator of existence of a generic edge in an $\text{ER}(n,p)$ graph, i.e. $\Pr[E_p=1] = 1-p$. Now, generate another Bernoulli $E_q$ by randomly removing edges from the realized ER graph as follows:
\begin{align}
    E_q =
    \left\{
	\begin{array}{ll}
		\mbox{if } E_p = 0  & 0 \\
		 \mbox{if } E_p = 1 & \left\{
	\begin{array}{ll}
		0  & \mbox{w.p. } 1-q/p \\
		1 & \mbox{w.p. } q/p.
	\end{array}
    \right.
	\end{array}
    \right. \nonumber
\end{align}
Clearly, $E_q$ is Bernoulli$(q)$ and the resulting graph has fewer number of edges compared to the original one (with probability $1$). By doing so for the two ER components of the SBM graph, we have a larger ER graph of size $n=n_1+n_2$ with connectivity probability $q$. Using the converse in Theorem \ref{thm:ERCDC}, we have the following for average normalized communication load for the stochastic block model:
\begin{align}
    \frac{L^*(r)}{q} 
    & \geq
    \frac{1}{r} \left(1-\frac{r}{K}\right). \nonumber
\end{align}

\section{}
\label{sub:applemma}
\begin{lemma}\label{lemma:ineq}
For all $p\in [0,1]$ and $s'>0$, we have $\big(p e^{s'}+1-p \big)^2 \leq p e^{2s'}+1-p$.
\end{lemma}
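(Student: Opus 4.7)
The plan is to recognize this inequality as an instance of the nonnegativity of variance (equivalently, Jensen's inequality for the convex function $x \mapsto x^2$). Define a random variable $X$ that takes the value $e^{s'}$ with probability $p$ and the value $1$ with probability $1-p$. Then $\Expc[X]=pe^{s'}+(1-p)$ and $\Expc[X^2]=pe^{2s'}+(1-p)$, so the claimed inequality is exactly $(\Expc[X])^2 \leq \Expc[X^2]$, which follows from $\text{Var}(X) \geq 0$.

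For a self-contained algebraic verification (which is probably preferable given the utilitarian role of the lemma), I would simply expand and factor. Setting $a=e^{s'}$, the difference between the right-hand side and the left-hand side is
\begin{equation}
(pa^2+1-p)-(pa+1-p)^2 = p(1-p)(a-1)^2,
\end{equation}
which is manifestly nonnegative since $p\in[0,1]$ and $(a-1)^2\geq 0$. No regime analysis is needed; the inequality in fact holds for all real $s'$ (with equality iff $p\in\{0,1\}$ or $s'=0$).

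There is no real obstacle here — the only mildly non-obvious step is noticing the clean factorization $p(1-p)(a-1)^2$, which collapses the expanded quartic expression immediately. I would present the proof in two lines: substitute $a=e^{s'}$, expand both sides, and display the factored difference.
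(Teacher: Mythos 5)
Your proof is correct, and it takes a genuinely different (and more elementary) route than the paper's. The paper defines $f(s')=\big(pe^{s'}+1-p\big)^2-\big(pe^{2s'}+1-p\big)$, notes $f(0)=0$, computes $f'(s')=2p(1-p)\big(e^{s'}-e^{2s'}\big)\leq 0$ for $s'>0$, and concludes $f\leq 0$ by monotonicity. You instead exhibit the difference in closed form: with $a=e^{s'}$,
\begin{equation}
\big(pa^2+1-p\big)-\big(pa+1-p\big)^2=p(1-p)(a-1)^2\geq 0,
\end{equation}
which is the variance of a two-point random variable taking value $a$ with probability $p$ and $1$ with probability $1-p$. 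Your version avoids differentiation entirely, identifies the exact nonnegative slack (hence the equality cases $p\in\{0,1\}$ or $s'=0$), and shows the inequality holds for all real $s'$ rather than only $s'>0$; it also sidesteps a small imprecision in the paper, whose claim $f'(s')<0$ is strict only for $p\in(0,1)$ (though the conclusion $f\leq 0$ is unaffected). Both arguments are two lines long and fully adequate for the lemma's utilitarian role in bounding the moment generating function.
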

\begin{proof}
 For given $p\in [0,1]$, define $f(s')=\big(p e^{s'}+1-p \big)^2 - \big(p e^{2s'}+1-p\big)$. Clearly $f(0)=0$. Moreover,
 \begin{equation}
      f'(s')=2p\bar{p} (e^{s'}-e^{2s'}) < 0, \nonumber
  \end{equation}
  for $s'>0$. Therefore, $f(s')\leq 0$ for all $s'>0$, concluding the claim of the lemma.
\end{proof}

\end{appendices}

\begin{IEEEbiographynophoto}{Saurav Prakash} 
received the Bachelor of Technology degree in Electrical Engineering from the Indian Institute of Technology (IIT), Kanpur, India in 2016. He is currently pursuing the Ph.D. degree in Electrical and Computer Engineering at the University of Southern California (USC), Los Angeles. He was a finalist in the Qualcomm Innovation Fellowship program in 2019. His research interests include information theory and data analytics with applications in large-scale machine learning and edge computing. He received the Annenberg Graduate Fellowship in 2016 and was one of the Viterbi-India fellows in summer 2015. 
\end{IEEEbiographynophoto}

\begin{IEEEbiographynophoto}{Amirhossein Reisizadeh}
received his B.S. degree form Sharif University of Technology, Tehran, Iran in 2014 and an M.S. degree from University of California, Los Angeles (UCLA) in 2016, both in Electrical Engineering. He is currently pursuing his Ph.D. in  Electrical and Computer Engineering at University of California, Santa Barbara (UCSB). He was a finalist in the Qualcomm Innovation Fellowship program in 2019. He is interested in using information and coding-theoretic concepts to develop fast and efficient algorithms for large-scale machine learning, distributed computing and optimization.
\end{IEEEbiographynophoto}

\begin{IEEEbiographynophoto}{Ramtin Pedarsani}
is an Assistant Professor in ECE Department at the University of California, Santa Barbara. He received the B.Sc. degree in electrical engineering from the University of Tehran, Tehran, Iran, in 2009, the M.Sc. degree in communication systems from the Swiss Federal Institute of Technology (EPFL), Lausanne, Switzerland, in 2011, and his Ph.D. from the University of California, Berkeley, in 2015. His research interests include machine learning, information and coding theory, networks, and transportation systems. Ramtin is a recipient of the IEEE international conference on communications (ICC) best paper award in 2014.
\end{IEEEbiographynophoto}

\begin{IEEEbiographynophoto}{A. Salman Avestimehr}
is a Professor and director of the Information Theory and Machine Learning (vITAL) research lab at the Electrical and Computer Engineering Department of University of Southern California. He received his Ph.D. in 2008 and M.S. degree in 2005 in Electrical Engineering and Computer Science, both from the University of California, Berkeley. Prior to that, he obtained his B.S. in Electrical Engineering from Sharif University of Technology in 2003. His research interests include information theory, coding theory, and large-scale distributed computing and machine learning.

Dr. Avestimehr has received a number of awards for his research, including the James L. Massey Research \& Teaching Award from IEEE Information Theory Society, an Information Theory Society and Communication Society Joint Paper Award, a Presidential Early Career Award for Scientists and Engineers (PECASE) from the White House, a Young Investigator Program (YIP) award from the U. S. Air Force Office of Scientific Research, a National Science Foundation CAREER award, the David J. Sakrison Memorial Prize, and several Best Paper Awards at Conferences. He is a Fellow of IEEE. He has been an Associate Editor for IEEE Transactions on Information Theory. He is currently a general Co-Chair of the 2020 International Symposium on Information Theory (ISIT).
\end{IEEEbiographynophoto}

\end{document}